\documentclass[10pt,prl,prl,twocolumn,superscriptaddress,floatfix]{revtex4-2}
\usepackage{amsmath,amssymb,amsfonts,relsize,mathtools,times}%
\usepackage{amsthm}%
\usepackage{mathrsfs}%
\usepackage[title]{appendix}%
\usepackage{xcolor}%
\usepackage{textcomp}%
\usepackage{subfigure}

\usepackage{booktabs}%
\usepackage{algorithm}%
\usepackage{algorithmicx}%
\usepackage{algpseudocode}%
\usepackage{listings}%
\usepackage{bm}
\usepackage{bbold}
\usepackage{color}
\usepackage[english]{babel}
\usepackage{lipsum}
\usepackage[normalem]{ulem}
\usepackage{grffile}
\usepackage{braket}
\usepackage{appendix}
\usepackage[colorlinks, linkcolor=blue, anchorcolor=blue, citecolor=blue, urlcolor=blue]{hyperref}
\newcommand{\tr}[1]{\mathrm{tr}\!\left\{#1\right\}}

\newcommand{\bfu}{\mathbf{u}}
\newcommand{\bfx}{\mathbf{x}}
\newcommand{\bfv}{\mathbf{v}}

\newcommand{\bfc}{\mathbf{c}}

\newcommand{\bfa}{\mathbf{a}}
\newcommand{\bfb}{\mathbf{b}}

\newcommand{\bfy}{\mathbf{y}}
\newcommand{\bfZ}{\mathbb{Z}}
\newcommand{\bfF}{\mathbb{F}}

\newcommand{\qwcgroup}{\mathfrak{S}_n^{\rm qwc}}
\newcommand{\poly}{\mathrm{poly}}
\newcommand{\sgn}{\mathrm{sgn}}

\newcommand{\comb}[2]{{C}^{#2}_{#1}}
\newcommand{\hmoment}{  \mathbb{E}_{\psi\in {\rm Haar}}}
\newcommand{\linearstructure}{\rm LS}

\newcommand{\rank}{{\rm rank}}
\newtheorem{corollary}{Corollary}

\newtheorem{lemma}{Lemma}
\newtheorem{theorem}{Theorem}
\newtheorem{proposition}{Proposition}%
\newtheorem{definition}{Definition}%

\begin{document}

\title{Sample- and Hardware-Efficient Fidelity Estimation by Stripping Phase-Dominated Magic}

\author{Guedong Park}
\affiliation{NextQuantum Innovation Research Center, Department of Physics and Astronomy, Seoul National University, Seoul, South Korea}

\author{Jaekwon Chang}
\affiliation{Department of Physics, Korea University, Seoul, South Korea}

\author{Yosep Kim}
\affiliation{Department of Physics, Korea University, Seoul, South Korea}

\author{Yong Siah Teo}
\email{yong.siah.teo@gmail.com}
\affiliation{NextQuantum Innovation Research Center, Department of Physics and Astronomy, Seoul National University, Seoul, South Korea}
\affiliation{Department of Quantum Information Science and Engineering, Sejong University, Seoul, South Korea}

\author{Hyunseok Jeong}%
\email{h.jeong37@gmail.com}
\affiliation{NextQuantum Innovation Research Center, Department of Physics and Astronomy, Seoul National University, Seoul, South Korea}

\begin{abstract}
Direct fidelity estimation (DFE) is a famous tool for estimating the fidelity with a target pure state. However, such a method generally requires exponentially many sampling copies due to the large magic of the target state. This work proposes a sample- and gate-efficient fidelity estimation algorithm that is affordable within feasible quantum devices. We show that the fidelity estimation with pure states close to the structure of phase states, for which sample-efficient DFE is limited by their strong entanglement and magic, can be done by using $\mathcal{O}(\poly(n))$ sampling copies, with a \emph{single} $n$-qubit fan-out gate. As the target state becomes a phase state, the sampling complexity reaches $\mathcal{O}(1)$. Such a drastic improvement stems from a crucial step in our scheme, the so-called phase stripping, which can significantly reduce the target-state magic. Furthermore, we convert a complex diagonal gate resource, which is needed to design a phase-stripping-adapted algorithm, into nonlinear classical post-processing of Pauli measurements so that we only require a single fan-out gate. Additionally, as another variant using the nonlinear post-processing, we propose a nonlinear extension of the conventional DFE scheme. Here, the sampling reduction compared to DFE is also guaranteed, while preserving the Pauli measurement as the only circuit resource. We expect our work to contribute to establishing noise-resilient quantum algorithms by enabling a significant reduction in sampling overhead for fidelity estimation under the restricted gate resources, and ultimately to clarifying a fundamental gap between the resource overhead required to understand complex physical properties and that required to generate them.

\end{abstract}

\maketitle

\emph{Introduction.---}Quantum algorithms~\cite{google2023, zhang2025, shi2022} solve various challenging problems in modern quantum physics~\cite{hu2025_ansatz,huangH2023,zhang2025,banerjee2020,yin2020}. A key requirement of its implementation is the preparation of highly entangled pure states~\cite{vidal2003,miller2016,jhuang2024,howard2017,bravyi2019,mari2012}, which serve as essential quantum resources. However, this task inevitably introduces physical noise~\cite{google2023,harper2020}. Consequently, \emph{fidelity estimation}~\cite{flammia2011,cerezo2020,zhang2021} (FE) between a prepared state and its target pure state with efficient computational resources is an indispensable method for developing improved state-preparation protocols~\cite{zhang2022_qsp,wang2025} and for the reliable implementation of quantum algorithms.

One of the most representative protocols for fidelity estimation is the \emph{direct fidelity estimation} (DFE) scheme~\cite{flammia2011}, which requires only Pauli measurements as circuit resources. However, its sampling and time complexity typically grow exponentially with the number of qubits $n$~\cite{fawzi2024,liu2025_fe,julia2025}. State verification (or certification)~\cite{huang2024_sv,zhu2019,sun2025} is more likely to be executed efficiently using Pauli measurements, but it gives a possible fidelity interval~\cite{sater2025,zhu2019,huang2024_sv}, not the exact value~\cite{sun2025}. 
Recent approaches, such as machine learning~\cite{zhang2021,cerezo2020,Vadali2023}, quantum phase estimation~\cite{wang2024_sample_optimal}, classical-shadows-assisted~\cite{aronson2018,huang2020,schuster2025random} FE have achieved an improvement in sampling copies. Nevertheless, there are limitations in that they require costly gate resources~\cite{cerezo2020,wang2024_sample_optimal}. For example, $\mathcal{O}(n^2)$ gates are needed for unbiased classical-shadow tomography~\cite{bravyi2021,huang2020}, which may not be achievable in near-term platforms~\cite{song2025,zheng2025}. Likewise, there exists an inherent trade-off between sampling complexity and gate complexity. Therefore, establishing the fundamental limits of this trade-off, namely, optimizing sampling under restricted gate resources, remains a key open challenge.

In this work, we propose sample- and gate-efficient FE techniques that utilize classical yet \emph{nonlinear} post-processing of the Pauli measurement outcomes. 
By appending a single ancilla and employing a fan-out gate~\cite{song2025} implemented with $n$ CNOTs, our method significantly reduces the sampling complexity of FE.
The fan-out gate has been demonstrated in current experimental platforms~\cite{song2025,lu2019} and suffices for our implementation. 

Our method is motivated by two key observations.
First, the sampling inefficiency of DFE is derived from the exponentially large Pauli $l_{(1\;{\rm or}\;0)}$-norm~\cite{chen2024_magic_mps} ($l$-norm, shortly), which is a typical measure of magic~\cite{leone2023,chen2024,howard2017} of the target state. Second, the \emph{phase stripping}, which transforms all coefficients of the computational bases to their modulus, can significantly reduce the $l$-norm. This phenomenon becomes pronounced when most of the target state magic is driven by complex diagonal gates~\cite{gosset2025} (\emph{phase-dominated magic}). Accordingly, we shall design an FE algorithm whose sampling complexity depends on the $l$-norm  of the phase-stripped version of the target state. After that, we see that the required sampling complexity of our FE algorithm is drastically reduced for some cases. A significant and practical example is the phase state~\cite{huang2024_sv,arunachalam2023,lee2025_shallow,morimae2017} for which phase stripping releases all magic and compresses into the Hadamard basis, reducing the sampling complexity down to $\mathcal{O}(1)$. 

To this end, we build on the Hadamard test circuit~\cite{sun2022,tsubouchi2023_virtual,faehrmann2025}, which will be shown to estimate the target fidelity. After that, complex controlled-diagonal operations are translated into a nonlinear post-processing of Pauli measurement outcomes, leaving the fan-out gate as the only \emph{physical} entangling resource. An important remark is that the $l$-norm of the phase-stripped state of arbitrary phase states~\cite{lee2025_shallow,huang2024_sv} is unity. It enables our scheme to require only $\mathcal{O}(1)$-sampling copies for the phase state FE, whereas DFE still needs $\mathcal{O}(2^n)$ copies. We then discuss cases in which our scheme is also time- and memory-efficient, including phase states. Finally, we exhibit numerical simulations to validate the accuracy and efficiency of our scheme. We first compare with DFE~\cite{julia2025}, or alternative FEs such as shadow overlap~\cite{huang2024_sv}, and Clifford shadow~\cite{huang2020,west2025}.

Additionally, we propose a nonlinearly-augmented~DFE framework catered to the situation where \emph{only} Pauli measurements are allowed. Phase stripping is not applied here, but to reduce the sampling overhead, we employ a nonlinear post-processing of the direct Pauli measurements. 
We give a systematic algorithm for the \emph{nonlinear}-DFE, based on the divide-and-conquer (DNC) strategy. While DNC requires exponential memory and time by the number of qubits, it does not require the convex optimization~\cite{cha2025_operator_aware,caprotti2026} over such memory data or input states.

\textit{Direct FE.}---We first outline the conventional algorithm for estimating the fidelity between an $n$-qubit input state $\rho$ and the target pure state $\ket{\psi}\bra{\psi}$, or the direct FE (DFE) scheme~\cite{flammia2011,julia2025,leone2023}. For the $n$-qubit Pauli group as
$\mathcal{P}_n\equiv \left\{\pm iI,\pm iX,\pm iY,\pm iZ\right\}^{\otimes n}$~\cite{gottesman1998} and a given single-qubit Pauli operator $P$, let us denote an $n$-qubit Pauli operator $P^{\bfa}\equiv \bigotimes_{i=1}^{n} P^{a_i}$, where $\bfa\in \bfF^n_2$.
It follows that an up-to-phase Pauli operator in $\mathcal{P}_n/ \mathbb{Z}_4$ is $T_{\bfa}\equiv \bigotimes_{i=1}^{n}i^{a_{ix}a_{iz}}X^{a_{ix}}Z^{a_{iz}}$, where $\bfa=(\bfa_x,\bfa_z)\in \bfF^{2n}_2$. From the general decomposition using Pauli coefficients,
\begin{align}\label{eq:Pauli_expansion}
    \ket{\psi}\bra{\psi}=\sum_{\bfa\in \bfF_2^{2n}}c_{\psi}(\bfa)T_{\bfa}\;\;\left(c_{\psi}(\bfa)\equiv \frac{1}{2^n}\braket{\psi|T_{\bfa}|\psi}\right),
\end{align}
for a pure state $\ket{\psi}$ we define the Pauli rank~\cite{bu2019} (or $l_0$-norm) $\|\psi\|_0\equiv\frac{1}{2^n}\sharp\left\{\bfa\in \bfF_2^{2n}|c_{\psi}(\bfa)\ne 0\right\}$, and Pauli $l_{1}$-norm (or $l_1$-norm)~\cite{chen2024_magic_mps,howard2017,leone2022} $\|\psi\|_1\equiv \frac{1}{2^n}\sum_{\bfa\in \bfF_2^{2n}}|\braket{\psi|T_{\bfa}|\psi}|\ge \|\psi\|_2\equiv 1$. For $\alpha\in \left\{\frac{1}{2},1\right\}$, it follows that $\|\chi\|_{2-2\alpha}=1$ for all pure stabilizer states $\ket{\chi}$~\cite{aaronson2004}. Using the above decomposition, we design an $\alpha$-DFE scheme utilizing Pauli measurements as follows~\cite{leone2023}: first sample $\bfa\in \bfF_2^{2n}$ following the $l_{2\alpha}$-distribution, $\left\{ \frac{2^{(2\alpha-1)n}|c_{\psi}(\bfa)|^{2\alpha}}{\|\psi\|_{2\alpha}}\right\}_{\bfa\in \bfF_2^{2n}}$ ($l_{2\alpha}$-\emph{sampling}), then measure $\rho$ to estimate  $2^{(1-2\alpha)n}\|\psi\|_{2\alpha}\tr{\rho T_{\bfa}}\sgn(c_{\psi}(\bfa))|c_{\psi}(\bfa)|^{-2\alpha+1}$. Upon repeating the procedure with $N\gg1$ copies, the mean of the estimates converges to the target value $\braket{\psi|\rho|\psi}$. The original version~\cite{flammia2011,julia2025} assumed $\alpha=1$. See Supplemental Material~\cite{supple} (SM) for details.

This work considers only when $\alpha\in \left\{\frac{1}{2},1\right\}$. The estimation variance of DFE, which determines the required sampling copy numbers for an estimation accuracy within an additive error $\epsilon$, is $\mathcal{O}(\|\psi\|^{1/\alpha}_{2-2\alpha})$~\footnote{If we allow arbitrary $\alpha\in (0,1)$, the variance is not $\mathcal{O}(\|\psi\|^{1/\alpha}_{2-2\alpha})$, but still the minimum is achieved when $\alpha=\frac{1}{2}$. See SM~\cite{supple} for details.}, where the minimal bound is achieved when $\alpha=\frac{1}{2}$~\cite{supple}.  DFE offers the simplest Pauli measurement-based quantum algorithm for FE. However, since $l_1$-norm of most pure states is exponentially large in~$n$~\cite{chen2024,chen2024_magic_mps}, the required sampling and time are known to be inefficient~\cite{julia2025}.
Representative examples of inefficiency for DFE, such as the \emph{phase states}, will be introduced in the next section. These states form our main study targets and highlight the significance of our results.

\textit{Phase stripping and phase states}---
In general, an arbitrary pure state $\ket{\psi}$ is the output of some diagonal gate operation $D(\phi)_{\psi}\equiv \sum_{\bfx\in \bfF_2^n}e^{i\phi_{\psi}(\bfx)}\ket{\bfx}\bra{\bfx}$, with some phase function $\phi_{\psi}:\bfF_2^n\rightarrow[0,2\pi]\;(\in [0,2\pi]^{2^n})$, on its real-valued counterpart. That is,
\begin{align}\label{eq:phase_stripping}
    \ket{\psi}=\sum_{\bfx\in \bfF_2^n}\xi_{\bfx}\ket{\bfx}=\sum_{\bfx\in \bfF_2^n}e^{i{\rm arg}(\xi_{\bfx})}|\xi_{\bfx}|\ket{\bfx}=D(\phi_{\psi})\ket{\breve{\psi}},
\end{align}
where $\forall \bfx\in \bfF_2^n,\;\phi_{\psi}(\bfx)\mapsto \arg(\xi_{\bfx})\in [0,2\pi]$, and the \emph{phase-stripped state} $\ket{\breve{\psi}}\equiv \sum_{\bfx\in \bfF_2^n}|\xi_{\bfx}|\ket{\bfx}$. We call such mapping $\ket{\psi}\mapsto \ket{\breve{\psi}}$ as \emph{phase stripping}. For brevity, we denote $\breve{c}_{\bfa}=c_{\breve{\psi}}(\bfa)$.

While fixing the DFE index $\alpha\in \left\{\frac{1}{2},1\right\}$, we call $\ket{\psi}$ a \emph{phase state}~\cite{lee2025_shallow,huang2024_sv} if and only if $\ket{\breve{\psi}}=\ket{+}^{\otimes n}\;\left(\ket{+}\equiv\frac{\ket{0}+\ket{1}}{\sqrt{2}}\right)$. If a phase state $\ket{\psi}$ satisfies $\forall \bfx\in \bf\bfF_2^n,\;\phi_{\psi}(\bfx)\in \left\{0,\pi\right\}$, we call it \emph{hypergraph state}~\cite{zhu2019,morimae2017}. Phase states serve as fundamental resource states for universal quantum computing~\cite{briegel2009,miller2016,lee2024,morimae2012}, quantum cryptography~\cite{arunachalam2023,banerjee2020}, and pseudo-randomness~\cite{brakerski2019,lee2025_shallow}. Hence, sample- and time-efficient phase-state FE is a crucial task for the practical realization of quantum simulations. However, DFE cannot achieve such efficiency for phase states. For example, we can show that if the target phase state $\ket{\eta}$ is a random third-ordered hypergraph state~\cite{zhu2019, chen2024}, $\|\eta\|_1\simeq\Theta(2^{\frac{n}{2}})$~\cite{supple}. For this case, the sampling complexity for DFE is $\Theta(\|\eta\|_1^2)\simeq\mathcal{O}(2^n)$.

We have seen that the sample-inefficiency of DFE is closely connected to the exponentially large $l_{2-2\alpha}$-norm. Whereas the phase stripping could be a key to resolving such a problem. Indeed, these kinds of norms have a direct relation with the stabilizer Rényi entropy~\cite{leone2022,leone2023}, a typical magic measure of the pure state~\cite{ding2025}. That means, the 
phase stripping can significantly reduce the $l_{2-2\alpha}$-norm of a given target state, especially when most of the target state's magic is induced by $D(\phi_\psi)$---that is, \emph{phase-dominated magic}--- which is generated by using a $\mathcal{O}(2^n)$ Clifford$+$T gates~\cite{gosset2025}. As a concrete example,  recall that the phase stripping maps an arbitrary phase state to $\ket{+}^{\otimes n}$ whose $l_{2-2\alpha}$-norm equals $1$, whereas the original phase state typically exhibits an exponentially large  $l_{2-2\alpha}$-norm. Moreover, many pure states lie in \emph{near-phase} class, for which $\|\breve{\psi}\|_{2-2\alpha}=\mathcal{O}(\poly(n))$~\cite{bravyi2016i}. Therefore, we are naturally motivated by the following question: 
Can we design the phase-stripping-adapted FE algorithm for which the estimation variance depends on $\|\breve{\psi}\|^{1/\alpha}_{2-2\alpha}$? We will show in the next section that this is indeed possible.

\textit{Fan-out-based FE (FOFE).---}We have learned that DFE for target phase states requires $\mathcal{O}(2^n)$-sampling copies~\cite{flammia2011,fawzi2024}. Naive measurement with respect to target state bases requires only $\mathcal{O}(1)$-sampling copies, but requires $\mathcal{O}(2^n)$-gate complexity. Our result argues that only a single $n$-qubit fan-out gate ($n$ CNOTs with a common control qubit~\cite{song2025}) is sufficient to achieve the $\mathcal{O}(\|\breve{\psi}\|_{2-2\alpha}^{1/\alpha})$-sampling complexity, and such a gate has been realized in trapped-ion~\cite{lu2019} and superconducting-qubit systems~\cite{song2025}. Our result yields a corollary that the sampling copy complexity reduces to $\mathcal{O}(1)$ for target phase states. We first give the formal statement:
\begin{theorem}\label{main:thm_1}
    With the fixed $\alpha\in \left\{\frac{1}{2},1\right\}$, suppose we have $M$ different $n$-qubit target states $\left\{\ket{\psi_1},\ket{\psi_2},\ldots,\ket{\psi_M}\right\}$ such that all elements share the same phase-stripped state $\ket{\breve{\psi}}$. We also assume that the $l_{2\alpha}$-sampling of $\ket{\breve{\psi}}$, and the calculation of phase function value $\phi_{\psi_i}(\bfx)\;(i\in[M])$ for each $\bfx$ and $i$ can be efficiently and classically simulated.
    With a single-qubit ancilla state, at most $n$~CNOTs, and $(H,S)$ gates, the estimation of the fidelity between an input state $\rho$ and $\ket{\psi}$ up to an accuracy within an additive error $\epsilon>0$ and failure probability $\delta_f>0$ can be achieved with sampling complexity  $\mathcal{O}\left(\frac{\|\breve{\psi}\|_{2-2\alpha}^{1/\alpha}}{\epsilon^2}\log(M\delta_f^{-1})\right)$. As a corollary, fidelity estimation with the $M$ different target phase states can be done in $\mathcal{O}(\epsilon^{-2}\log(M\delta_f^{-1}))$-samples.
\end{theorem}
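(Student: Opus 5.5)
The plan is to move all of the magic of $\ket{\psi}$ into classical post-processing. Write $\ket{\psi_i}=D(\phi_{\psi_i})\ket{\breve{\psi}}$ as in Eq.~\eqref{eq:phase_stripping}, so that
\begin{align*}
\braket{\psi_i|\rho|\psi_i}=\braket{\breve{\psi}|D(\phi_{\psi_i})^{\dagger}\rho D(\phi_{\psi_i})|\breve{\psi}}=\sum_{\bfa\in\bfF_2^{2n}}\breve{c}_{\bfa}\,\tr{\rho\, D(\phi_{\psi_i})T_{\bfa}D(\phi_{\psi_i})^{\dagger}}.
\end{align*}
We expand $\ket{\breve{\psi}}\bra{\breve{\psi}}$ via Eq.~\eqref{eq:Pauli_expansion}, and the $\breve{c}_{\bfa}$ are real. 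With $T_{\bfa}\propto X^{\bfa_x}Z^{\bfa_z}$ and $D$ diagonal, the conjugated operator equals $\sum_{\bfx}(\pm i)^{\cdot}(-1)^{\bfa_z\cdot\bfx}e^{i[\phi_{\psi_i}(\bfx\oplus\bfa_x)-\phi_{\psi_i}(\bfx)]}\ket{\bfx\oplus\bfa_x}\bra{\bfx}$, up to a fixed phase convention. Each term is therefore a sum over $\bfx$ of a classically computable phase $g_{i,\bfa}(\bfx)$ with $|g_{i,\bfa}(\bfx)|=1$, multiplied by the off-diagonal matrix element $\rho_{\bfx,\bfx\oplus\bfa_x}$. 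The protocol follows the $\alpha$-DFE template, applied to $\ket{\breve{\psi}}$ rather than $\ket{\psi}$: sample $\bfa$ from the $l_{2\alpha}$-distribution of $\ket{\breve{\psi}}$, which is efficient by assumption, and then estimate $\sum_{\bfx}g_{i,\bfa}(\bfx)\rho_{\bfx,\bfx\oplus\bfa_x}$.

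The second step is a Hadamard-test circuit that estimates this sum with one shot per sample. Prepare an ancilla in $\ket{+}$ and apply CNOTs controlled by the ancilla onto the qubits in the support of $\bfa_x$; this is at most $n$ CNOTs, i.e. a fan-out. Then measure the system in the computational basis, obtaining outcome $\bfy$, and measure the ancilla in the $X$ or $Y$ basis, chosen uniformly at random and implemented with $H$ and $S^{\dagger}H$. A direct computation gives the output state $\tfrac12\sum_{b,b'}X^{b\bfa_x}\rho X^{b'\bfa_x}\otimes\ket{b}\bra{b'}$. The ancilla coherence conditioned on $\bfy$ is therefore $\bra{\bfy}\rho X^{\bfa_x}\ket{\bfy}=\rho_{\bfy,\bfy\oplus\bfa_x}$. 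Consequently, the single-shot random variable $s\cdot\big(\mathrm{Re}\,g\text{ or }\mathrm{Im}\,g\big)(\bfy)$, where $s=\pm1$ is the ancilla outcome, has expectation proportional to $\mathrm{Re}\sum_{\bfy}g_{i,\bfa}(\bfy)\rho_{\bfy,\bfy\oplus\bfa_x}$, up to a factor of $2$ absorbed into the estimator. This is the nonlinear post-processing: evaluate $\phi_{\psi_i}$ at $\bfy$ and at $\bfy\oplus\bfa_x$. The fidelity is real, so keeping only the real part is enough. Multiplying by the DFE reweighting $2^{(1-2\alpha)n}\|\breve{\psi}\|_{2\alpha}\,\sgn(\breve{c}_{\bfa})|\breve{c}_{\bfa}|^{1-2\alpha}$ yields an unbiased estimator of $\braket{\psi_i|\rho|\psi_i}$.

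Third, concentration. For $\alpha=\tfrac12$ the estimator is bounded in magnitude by $O(\|\breve{\psi}\|_1)$, so Hoeffding's inequality gives $O(\|\breve{\psi}\|_1^2\epsilon^{-2}\log\delta_f^{-1})$ samples. For $\alpha=1$ the second moment is $\sum_{\bfa}\frac{2^n\breve{c}_{\bfa}^2}{1}\cdot 2^{n}\cdot O(1)/(2^n\breve{c}_{\bfa}^2)$ summed over the support, which is $O(\|\breve{\psi}\|_0)$. This step mirrors the DFE variance bound cited from the SM, with the $|g|=1$ factor doing no harm, and I expect it to be the main technical point to get exactly right. For $\alpha=1$ I would then use a median-of-means estimator to obtain the $\log\delta_f^{-1}$ dependence. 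Since the quantum data $(\bfa,\bfy,s)$ never depend on $i$, the same samples serve all $M$ targets, and a union bound replaces $\delta_f$ by $\delta_f/M$. This gives $\mathcal{O}(\|\breve{\psi}\|_{2-2\alpha}^{1/\alpha}\epsilon^{-2}\log(M\delta_f^{-1}))$. The corollary follows because for phase states $\ket{\breve{\psi}}=\ket{+}^{\otimes n}$ is a stabilizer state, so its norm is $1$.
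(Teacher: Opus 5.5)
Your proposal is correct and follows essentially the same route as the paper. It expands $\ket{\breve{\psi}}\bra{\breve{\psi}}$ in Paulis and samples $\bfa$ from its $l_{2\alpha}$-distribution. It then runs a Hadamard test with a single fan-out layer, replaces the controlled $D^{(\bfa)}$ by nonlinear post-processing of computational-basis outcomes, and finishes with Hoeffding/median-of-means plus a union bound over the $M$ targets, which share the same quantum data. Your small variations are valid: absorbing $Z^{\bfa_z}$ and the fixed phase of $T_{\bfa}$ into the classical factor (so only CNOTs on the support of $\bfa_x$ are needed), and using a random $X$/$Y$ ancilla basis on one copy instead of two copies. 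One slip: your displayed $\alpha=1$ second-moment expression has garbled powers of $2$. Each term should be $2^n\breve{c}_{\bfa}^2\cdot O(1)/(2^n\breve{c}_{\bfa})^2=O(2^{-n})$, and summing over the $2^n\|\breve{\psi}\|_0$ support points gives the claimed $O(\|\breve{\psi}\|_0)$.
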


We call our scheme $\alpha$-\emph{fan-out-based FE} ($\alpha$-FOFE). As the name suggests, the scheme leverages the fan-out-based quantum algorithm, which is illustrated in  Fig.~\ref{fig:phase_state_fidelity_scheme}. Therefore, a single fan-out gate suffices to achieve sample-efficient FE for the \emph{near-phase} cases [$\|\breve{\psi}\|_{2-2\alpha}=\mathcal{O}(\poly(n))$ holds for at least one of $\alpha\in \left\{\frac{1}{2},1\right\}$], and manifests a significant sample-improvement as $\ket{\psi}$ shrinks to a phase state. We see that it neither requires prior block-diagonalization of the input state nor multi-copy measurements~\cite{ lees2025, tsubouchi2024,king2025}. 

The assumption of the classical simulation holds in famous cases, such as hypergraph states with bounded order~\cite{rossi2013,miller2016} or Dicke states~\cite{Brtschi2019,flammia2011,lucke2014} twirled by some diagonal operation $\ket{\psi}=D(\phi)\ket{{\rm Dic}(n,k)}$~\cite{supple}, which is also near-phase. For phase states, $\ket{\breve{\psi}}=\ket{+}^{\otimes n}$, and the $l_{2\alpha}$-sampling reduces to uniform sampling of Pauli $X$-operators. Moreover, the assumption exists only for the efficient time complexity and is independent of the \emph{sampling-copy} complexity. 

In general, $\frac{1}{2}$-FOFE offers tighter sampling bound than $1$-FOFE. Nevertheless, the pure state regime of efficient $l_2$ sampling in $1$-FOFE is much broader than $l_1$-sampling. In particular, the $l_2$-sampling is equivalent to the Bell sampling~\cite{montanaro2017_ls,supple}, where we enact the transversal $CNOT$ gates to $\ket{\breve{\psi}}^{\otimes 2}$ and take the Pauli measurements whose outcome exactly (up to a fixed permutation) follows the $l_2$-sampling distribution. Following that, if $\ket{\breve{\psi}}$ is generated by a large number of Clifford gates with small magic gates, we can apply various classical simulation algorithms~\cite{bravyi2016i,bravyi2019,qassim2021,howard2017} for a faster $l_2$-sampling. As another example, we can represent $\ket{\breve{\psi}}$ as a matrix product state (MPS)~\cite{haug2023_mps_nonstab,vidal2003} whose bond dimension~\cite{vidal2003} is $\chi(\breve{\psi})$. Regardless of the structural complexity of $D(\phi_\psi)$, we can classically simulate the $l_2$-sampling of $1$-FOFE within the computational complexity $\mathcal{O}(n\chi(\breve{\psi})^3)$~\cite{lami2023,tarabunga2024} time for each Pauli sample. 

\begin{figure}[t]
    \includegraphics[width=\columnwidth]{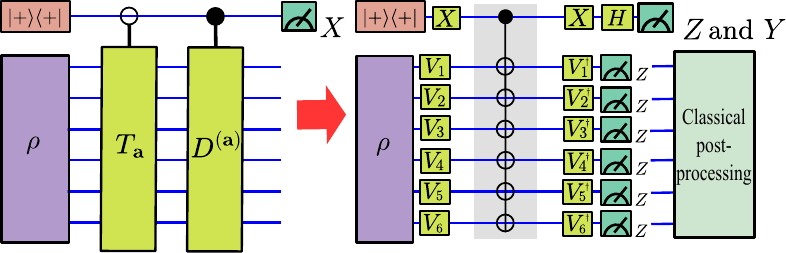}
    \caption{Schematic illustration of the $6$-qubit fan-out-based fidelity estimation (FOFE). We assume that the sampled $T_{\bfa}$ is of full Pauli weight.  Here, $\ket{+}$ is an ancilla state and $\rho$ is an input. The conjugation of single-qubit Clifford $V_i\;(i\in [6])$ is such that $VXV^{\dag}=T_{\bfa_i}$. }\label{fig:phase_state_fidelity_scheme}
    
\end{figure}

Now, we give an outline of the proof and algorithm constituting Thm.~\ref{main:thm_1}. Though we fix $\alpha=\frac{1}{2}$, the proof for $\alpha=1$ follows similarly. Complete explanations are shown in SM~\cite{supple}. Given an arbitrary state $\ket{\psi}$, we give the decomposition $\ket{\psi}=D(\phi_{\psi})\ket{\breve{\psi}}$ [see Eq.~\eqref{eq:phase_stripping}], where $\ket{\breve{\psi}}\bra{\breve{\psi}}=\sum_{\bfa\in \bfF_2^{2n}}\breve{c}_{\bfa}T_{\bfa}$ $(\forall\breve{c}_\bfa\in \mathbb{R})$ that leads to

\begin{align}
\ket{\psi}\bra{\psi}=\sum_{\bfa\in \bfF_2^{2n}}\breve{c}_{\bfa}D(\phi_\psi)T_{\bfa}D(\phi_\psi)^{\dag}=\sum_{\bfa\in \bfF_2^{2n}}\breve{c}_{\bfa}T_{\bfa}D^{(\bfa)},
\end{align}
and the simplified notation $D(\phi_{\psi})=D$. We also used the fact that  $D^{(\bfa)}\equiv T_{\bfa}DT_{\bfa}D^{\dag}$ is again another diagonal gate that depends on $\bfa\in \bfF_2^{2n}$. From this knowledge, similar to the original DFE, we now establish our enhanced FE scheme with the input $\rho$ as follows: (i) Sample $\bfa$ following the distribution $\left\{\frac{|\breve{c}_{\psi}(\bfa)|}{\|\breve{\psi}\|_1}\right\}_{\bfa\in \bfF_2^{2n}}$, where $\|\breve{\psi}\|_1=\sum_{\bfa\in \bfF_2^{2n}}|\breve{c}_{\bfa}|$. (ii) Compute the unbiased estimate 
\begin{align}\label{eq:main_phase_fidelity_estimator}
    \widehat{\braket{\psi}}=\frac{\|\breve{\psi}\|_1}{2}\sgn(\breve{c}_\bfa)\left[\tr{D^{(\bfa)}\rho T_{\bfa}}+\tr{T_{\bfa}\rho D^{(\bfa)\dag}}\right].
\end{align}

We employ the Hadamard test circuit~\cite{sun2022,tsubouchi2023_virtual,faehrmann2025}, which enables us to sample-efficiently estimate Eq.~\eqref{eq:main_phase_fidelity_estimator} excepting $\|\breve{\psi}\|_1\sgn(\breve{c}_{\bfa})$  (the left of Fig.~\ref{fig:phase_state_fidelity_scheme}). We thus expect the sampling complexity to be proportional to $\|\breve{\psi}\|_1^2$~\cite{Jerrum:1986random}, implying that our scheme for target phase state, say  $\ket{\eta}$, is sample-optimal since $\|\breve{\eta}\|_1=\|+\|_1^n=1$.  The main challenge is that the implementation of a controlled-diagonal unitary incurs substantially higher cost~\cite{gosset2025} compared to that of the earlier controlled-Pauli operator. To overcome this problem, we do not use the second controlled unitary. Instead, the Pauli measurement is performed independently on two copies (right after the controlled-Pauli operation) to obtain the binary outcome $\bfb=(b_1,\bfb')\in \bfF_2^{n+1}$ (the right of Fig.~\ref{fig:phase_state_fidelity_scheme}). This outcome is then nonlinearly post-processed to estimate two expectations, $\braket{Z\otimes {\rm Re}(D^{(\bfa)})}$ and  $\braket{-Y\otimes {\rm Im}(D^{(\bfa)})}$. Specifically, in the former case, the estimator will be $(-1)^{b_1}\cos(\phi^{(\bfa)}_{\psi}(\bfb'))$ after the measurement, where $\phi^{(\bfa)}_{\psi}(\bfx)\equiv\phi_\psi(\bfx+\bfa_x)-\phi_\psi(\bfx)\;({\rm mod}\;2\pi)$. The latter case then has $(-1)^{b_1+1}\sin(\phi^{(\bfa)}_{\psi}(\bfb'))$, while the ancilla is measured in $Y$-basis. Both estimators should be scaled by $\|\breve{\psi}\|_1\sgn(\breve{c}_{\bfa})$ to get the complete estimator of Eq.~\eqref{eq:main_phase_fidelity_estimator}. 

Importantly, let us consider when we estimate the fidelities with $M$ target states $\left\{\ket{\psi_1},\ket{\psi_2},\ldots,\ket{\psi_M}\right\}$ such that all elements share the same phase-stripped state $\ket{\breve{\psi}}$. Then all target states adopt the same $l_{2\alpha}$-sampling for FOFE, the $l_{2\alpha}$-sampling of $\ket{\breve{\psi}}$. It means that we use the same measurement circuit for all target states. Consequently, after getting the measurement outcome $\bfb$ we can use to calculate \emph{many} estimators $(-1)^{b_1}\cos(\phi^{(\bfa)}_{\psi_i}(\bfb))$ or $(-1)^{b_1+1}\sin(\phi^{(\bfa)}_{\psi_i}(\bfb))$ following the phase structure of each $\psi_i\;(i\in [M])$. Therefore, we just need to rescale the failure probability for each phase state to $M^{-1}\delta_f$. So that the total failure probability is $\delta_f$. Finally, using Hoeffding inequality, we prove Thm.~\ref{main:thm_1}.
\begin{figure}[t]
    \includegraphics[width=\columnwidth]{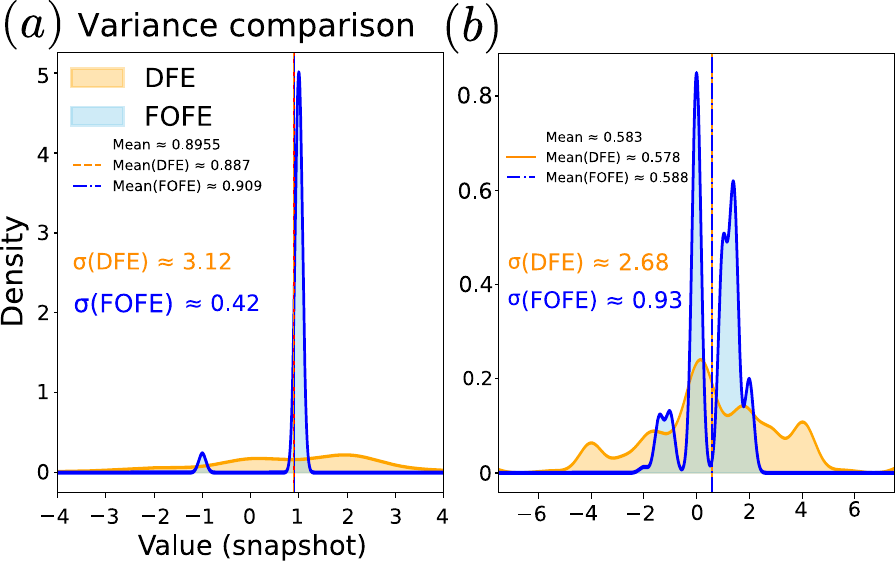}
    \caption{The ($n=7$)-qubit estimation variance of fidelity estimation using FOFE and one qubit ancilla. We compared our result with DFE.~\cite{flammia2011,julia2025}. (a) The target is a $3$rd-order complete hypergraph state $\ket{K_7}$, and the input state is $\ket{K_7}$ with a depolarizing noise. (analytical fidelity~$\simeq0.8955$), and $10000$ sampling copies were~used. (b) The target state is $T^{\otimes n}\ket{K_7}$ with $10000$ copies.}\label{fig:ved_estimation_variance} 
\end{figure}
\begin{figure}[t]
    \includegraphics[width=\columnwidth]{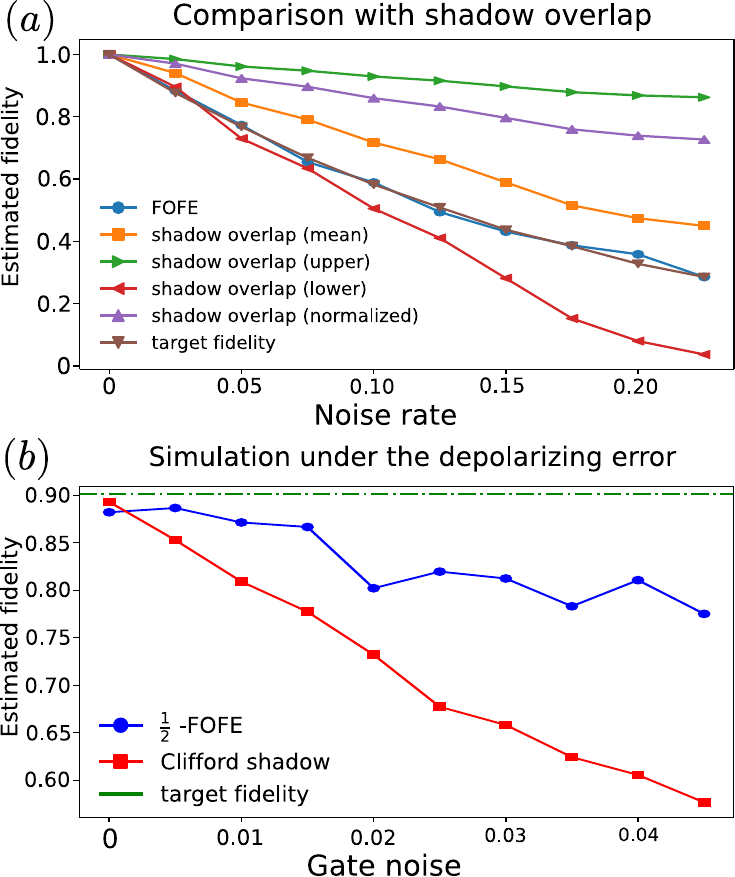}
    \caption{(a) FOFE for the target as $7$-qubit third-order complete hypergraph state $\ket{K_7}$ that undergoes a single-qubit noise $\mathcal{N}$ (uniform operation from $\left\{I,X,Y,Z,H,S\right\}$) with a given rate. We compared our result with shadow overlap estimation~\cite{huang2024_sv} with a single-qubit random Pauli shadow. The (upper) is the right side of Eq.~\eqref{eq:shadow_overlap_bound}, and the (lower) is the left side. The (mean) value is their arithmetic mean. The (normalized) refers to the normalized shadow overlap in Ref.~\cite{huang2024_sv}. The target value was obtained by estimating $\braket{K_7|\mathcal{N}(\ket{K_7}\bra{K_7})|K_7}$ via the direct measurement  with $30000$ copies. (b) Noise robustness comparison between $\frac{1}{2}$-FOFE and random-Clifford shadow~\cite{huang2020}. Here, the target states and input states, with some global depolarizing (white) noise, are the same as $\ket{{\rm Dic}(6,3)}$ followed by the complete-hypergraph CCZ operation. We gave a random Clifford operation using the Qiskit (V$1.4.5$) random\textunderscore clifford().to\textunderscore circuit() method. Each two-qubit gate undergoes the depolarizing noise with a given gate noise rate.}\label{fig:shadowoverlap_and_classical_shadow}
    
\end{figure}

The nonlinear post-processing ($\bfb'\mapsto \phi^{(\bfa)}_{\psi
}(\bfb')$) of the measurement outcome, in the above process, is quite different from linear processing for the original DFE in which we estimate the expectation value of the Pauli operator $P$~\cite{leone2023}. To be specific, $P$ is twirled by single qubit Clifford operations $V\in {\rm Cl}_1^{\otimes n}$ into $Z^{\bfu}$ for some $\bfu\in \bfF_2^n$. Therefore, to estimate $\tr{\rho P}$, the state $\rho$ is twirled by $V^{\dag}$ and measurement is done to obtain $\bfb'\in \bfF_2^n$, which is processed by a linear functional $\bfb'\mapsto \bfu\cdot \bfb'$ (binary inner product) to output $(-1)^{\bfu\cdot \bfb'}$. 

\textit{Numerical results}---Figure~\ref{fig:ved_estimation_variance} shows that FOFE exhibits a drastic improvement in the estimation variance with a fixed number of copies, compared to the conventional DFE~\cite{julia2025}. In Fig.~\ref{fig:ved_estimation_variance} (a), since the target state here is a hypergraph state, the FOFE estimator only outputs $1$ or $-1$. This is because the target state does not have a complex phase, so that ${\rm Im}(D^{(\bfa)})=0$. Figure~\ref{fig:ved_estimation_variance} (b) implies that as the target-state phase becomes diversified, so do the possible FOFE-estimator values. The sampling complexity of FOFE increases as the target state becomes farther from the phase-state manifold. Nevertheless, in SM~\cite{supple}, we showed that even when the target state is drawn Haar-randomly, $\frac{1}{2}$-FOFE still provides a constant-factor improvement.

Next, we compared our results with the shadow overlap estimator~\cite{huang2024_sv} and random Clifford shadows~\cite{huang2020}. To begin with, the shadow overlap can be efficiently done with Pauli measurements. The estimator $\widehat{\omega}$ estimates the expectation value of a proxy operator $L$, for which $\ket{\psi}$ is the eigenstate associated with the largest unit eigenvalue ($L\ket{\psi}=\ket{\psi}$). However, that expectation value $\tr{\rho L}$ generally deviates from the target fidelity, except in certain special noise models such as white noise~\cite{huang2024_sv,sun2025}.    More explicitly, given that the target state is a phase state, shadow overlap offers a fidelity interval as follows, 
\begin{align}\label{eq:shadow_overlap_bound}
    n(\tr{\rho L}-1)+1\le \braket{\psi|\rho|\psi}\le \tr{\rho L}=\mathbb{E}(\widehat{\omega}). 
\end{align}
See Refs.~\cite{huang2024_sv,zhu2019} for details. From Fig.~\ref{fig:shadowoverlap_and_classical_shadow} (a), we observe that both the upper and lower bounds, and their average, deviate from the target fidelity as the noise in the input state increases, whereas FOFE remains close to it. 

Fig.~\ref{fig:shadowoverlap_and_classical_shadow} (b) shows the noise robustness comparison between FOFE and random-Clifford shadow~\cite{huang2020}. We briefly introduce the Clifford-shadow process: we enact uniformly chosen Clifford unitary $U\in {\rm Cl}_n$ to each copy $\rho$, then measure with the computational bases to obtain the outcome $\bfb\in \bfF_2^n$, and then the estimator is $(2^n+1)|\braket{\bfb|U|\psi}|^2-1$.  As we increase the gate noise during the mid-circuit operation, we see that the bias of FOFE manifests a more moderate increase. 

\begin{figure}[t]
    \includegraphics[width=\columnwidth]{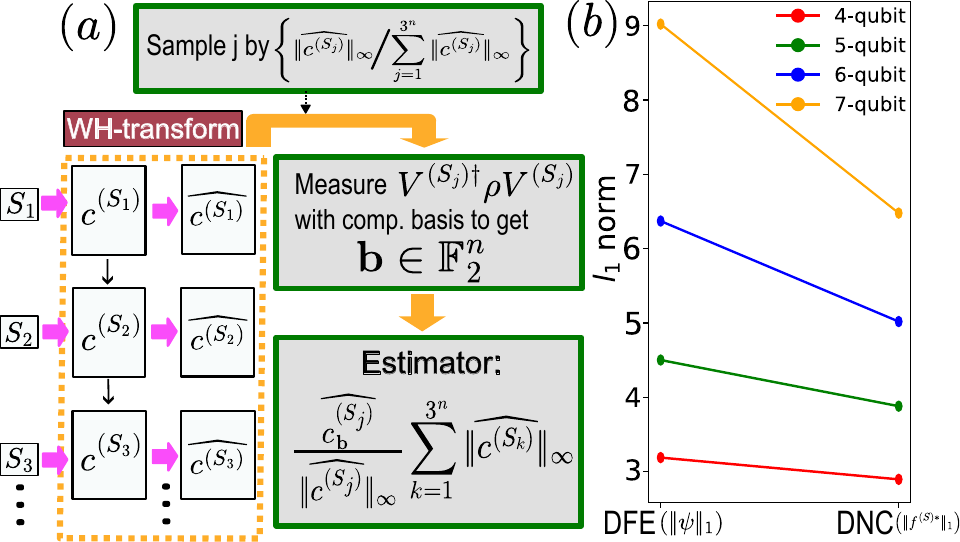}
    \caption{(a) Schematic illustration of DNC-based algorithm for NLDFE. 
    (b) Improvement in $l_1$-norm of $100$-copies of Haar random pure states \emph{via} the DNC-based algorithm.}\label{fig:DnC-based_nldfe_scheme}
    
\end{figure}

\textit{Nonlinear DFE (NLDFE).}---The key idea behind the FOFE is to trade the complex diagonal gate in the Hadamard test circuit for nonlinear classical post-processing of the Pauli measurement outcomes. We can apply a similar technique to the DFE by generalizing the Pauli operators in Eq.~\eqref{eq:Pauli_expansion} to an over-complete set of diagonal operators conjugated by single-qubit Clifford operators. In what follows, we briefly describe the resulting $\alpha$-\emph{nonlinear} DFE ($\alpha$-NLDFE) as an auxiliary method that retains only the Pauli measurements. We fix $\alpha=\frac{1}{2}$  and give a brief overview of $\frac{1}{2}$-NLDFE in this text, leaving the general expression and details in SM~\cite{supple}. 

To begin with the target state $\ket{\psi}$, there exists a function $f:[0,2\pi]^{2^n}\times \left\{I,H,HS\right\}^{\otimes n}\rightarrow \mathbb{C}$ such that 
\begin{align}\label{eq:main_general_NLDFE}
    \ket{\psi}\bra{\psi}=\frac{1}{2^n}\sum_{V\in \left\{I, H, HS\right\}^{\otimes n}}\int_{[0,2\pi]^{2^n}}d\phi f(\phi,V)\;VD(\phi)V^{\dag},
\end{align}
or equivalently $f\mapsto \psi$. Our NLDFE scheme becomes DFE if we restrict the diagonal operators to just Pauli $Z$ operators. 

We define {\small $\|f\|_1\equiv \frac{1}{2^n}\sum_{V\in \left\{I, H, HS\right\}^{\otimes n}}\int_{[0,2\pi]^{2^n}}d\phi|f(\phi,V)|$}. Then, we set the $\frac{1}{2}$-NLDFE scheme: Sample $(\phi,V)$ from the probability distribution {\small$
\left\{\frac{|f(\phi,V)|d\phi}{\|f\|_{1}}\right\}$}, and then estimate $\tr{\rho VD(\phi)V^{\dag}}=\sum_{\bfx\in \bfF_2^n}\braket{\bfx|V^{\dag}\rho V|\bfx}e^{i\phi(\bfx)}$. This can be done by measuring $V^{\dag}\rho V$ with the computational basis and taking the outcome $\bfx\in \bfF_2^n$ with which the estimator is $\cos(\phi(\bfx)+\arg(f))$, where $f=|f|e^{i\arg(f)}$. 
Compared to DFE, we now take the \emph{nonlinear} mapping of the exponent. The square of the $\|f\|_1$ can be shown~\cite{supple} to quantify the sampling overhead of $\frac{1}{2}$-NLDFE. It is desirable to find $f^{\ast}=\mathrm{argmin}_{f\mapsto \psi}\|f\|_1$, which guarantees $\|f^*\|_1\leq\|\psi\|_1$. 

For NLDFE, we have infinitely many coefficients of the overcomplete support $\{VD(\phi)V^{\dag}\}$. Therefore, finding the optimal $f^{\ast}$ is an extremely hard problem. Following that, we propose a divide-and-conquer(DNC)-based sub-optimal algorithm, which is illustrated in Fig.~\ref{fig:DnC-based_nldfe_scheme}. It offers a deterministic solution and still guarantees a tighter sampling bound compared to $\|\psi\|^2_1$. To be specific, we divide Pauli coefficients into the qubit-wise-commuting (QWC) groups~\cite{julia2025} $S_j\subset \mathcal{P}_n/\bfZ_4\;(j\in [3^n])$, such that all elements are conjugated into Pauli $Z$-groups by the same single-qubit Clifford operator. We denote each partitioned-coefficient vector as $c^{(S_j)}$. Then we can find a solution $f$ and following NLDFE algorithm such that $\|f\|_1=\sum_{j\in [3^n]}\|\widehat{c^{(S_j)}}\|_{\infty}\le \|\psi\|_1$, where $\|c\|_\infty\equiv \max_{\bfa\in \bfF_2^{2n}}\left\{|c_{\bfa}|\right\}$ and the caret refers to the Walsh--Hadamard (WH) transform~\cite{scheibler2015,alman2023}, $\widehat{c^{(S)}}_{\bfb}\equiv \sum_{\bfa\in \bfF_2^n}c^{(S)}_{\bfa}(-1)^{\bfa\cdot \bfb}$. See SM~\cite{supple} for details.

\textit{Remarks.}---We introduced a sample-optimal method, $\alpha$-FOFE ($\alpha\in \left\{\frac{1}{2},1\right\}$), for the pure-state fidelity estimation. Here we employed a single-qubit ancilla and a single fan-out gate consisting $n$ CNOTs, enabling the estimation of the fidelity with an arbitrary target state $\ket{\psi}$ with $\mathcal{O}\left(\|\breve{\psi}\|^{1/\alpha}_{2-2\alpha}\right)$ sampling copies, which reduces to $\mathcal{O}(1)$ for the target phase states. The gate efficiency is obtained by replacing the complex diagonal gates with nonlinear post-processing of Pauli measurements. Lastly, we proposed a nonlinear variant of the DFE that achieves sampling reduction while still requiring only Pauli measurements. Beyond the sampling optimization, several challenges remain to be addressed for the FOFE. One is to check whether $n$ CNOTs are the minimal resources for the sample-optimal FE for phase states. Second, although there are many cases where $l_{2\alpha}$-sampling is efficient, the required sampling copies for input state $\rho$ may not be efficient. Hence, exploring examples of achieving efficiency for both procedures may be future work.

\textit{Acknowledgments.}---The authors thank Hyukjoon Kwon, Huangjun Zhu, Jinzhao Sun, and Kento Tsubouchi for their thoughtful discussions. This work was supported by the National Research Foundation of Korea (NRF) grant funded by the Korea government (MSIT) (RS-2023-00237959, RS-2024-00353348, RS-2024-00413957, RS-2024-00438415, and RS-2023-NR076733), the Institute of Information \& Communications Technology Planning \& Evaluation (IITP) grant funded by the Korea government (MSIT) (RS-2020-II201606 and RS-2024-00437191, RS-2025-02219034), the Korea University Grant, and the Institute of Applied Physics at Seoul National University.

\clearpage

\widetext
\begin{center}
	\textbf{\large Supplemental Materials: Sample- and Hardware-Efficient Fidelity Estimation by Stripping Phase-Dominated Magic}
\end{center}

\tableofcontents

\section{Preliminaries: Direct fidelity estimation (DFE)}

Before the technical details of our main results, let us introduce the conventional direct fidelity estimation (DFE)~\cite{flammia2011,leone2023,julia2025,cha2025} formalism as a preliminary. DFE is one of the famous fidelity estimation schemes, especially when the target state is pure. The main objective of DFE is to estimate the fidelity between the given unknown state $\rho$ and the target pure state $\ket{\psi}\bra{\psi}$. We should note that the $\tr{\ket{\psi}\bra{\psi}^2}=\tr{\ket{\psi}\bra{\psi}}=1$, hence if we consider the following Pauli-Liouville decomposition $\ket{\psi}\bra{\psi}=\frac{1}{2^n}\sum_{\bfa\in \bfF_2^{2n}}\braket{\psi|T_\bfa|\psi}T_{\bfa}$, where $T_{\bfa}\equiv \bigotimes_{i=1}^{n}i^{a_{ix}a_{iz}}X^{a_x}Z^{a_z}\;(\bfa\in \bfF^{2n}_2)$, then we note that 
\begin{align}
	\tr{\ket{\psi}\bra{\psi}^2}=\frac{1}{2^n}\sum_{\bfa\in \bfF^{2n}_2}\braket{\psi|T_{\bfa}|\psi}^2=1.
\end{align}
Therefore, we can regard the elements $\left\{\frac{c_\psi(\bfa)^2}{2^n}\bigg|c_\psi(\bfa)\equiv \braket{\psi|T_{\bfa}|\psi}\right\}_{\bfa\in \bfF_2^{2n}}$ as a probability distribution. Next, we rewrite the target fidelity $\braket{\psi|\rho|\psi}$ as,
\begin{align}\label{eq:original_DFE}
	\braket{\psi|\rho|\psi}=\frac{1}{2^n}\sum_{\bfa\in \bfF_2^{2n}}c_\psi(\bfa)\tr{T_{\bfa}\rho}=\frac{1}{2^n}\sum_{\bfa\in \bfF_2^{2n}}c_\psi(\bfa)^2\frac{\tr{\rho T_{\bfa}}}{c_\psi(\bfa)}. 
\end{align}
Hence, the algorithm is as follows. We sample $\bfa\in \bfF^{2n}_2$ from the distribution $\left\{\frac{c_\psi(\bfa)^2}{2^n}\right\}_{\bfa\in \bfF_2^{2n}}$. Then, we take the one-shot measurement~\cite{leone2023} which estimates $\frac{\tr{\rho T_{\bfa}}}{c_\psi(\bfa)}$. Here, we should note that $c_{\psi}(\bfa)\ne 0$ because such a case cannot be sampled. From this algorithm, we get the unbiased estimator of $\braket{\psi|\rho|\psi}$.

The problem is how to make the one-shot measurement for the estimation of  $\frac{\tr{\rho T_{\bfa}}}{c_\psi(\bfa)}$. Here, we introduce a practical method to achieve this by using only Pauli measurements.  We note that $c_{\psi}(\bfa)$ is already known, and that 
\begin{align}
	\tr{\rho T_{\bfa}}=\tr{\rho\left(\frac{I+T_{\bfa}}{2}+(-1)\times \frac{I-T_{\bfa}}{2}\right)}=\tr{\rho\Pi_0}-\tr{\rho \Pi_1},
\end{align}
where $\Pi_{\bfa}^p\equiv \frac{I+(-1)^{p}T_{\bfa}}{2}\;(p\in \bfF_2)$ is the projector. The above equation directly leads to the following estimation algorithm: we measure $\rho$ with the positive operator-valued measurement (POVM) $\left\{\Pi_0,\Pi_1\right\}$. Next, we explain how to measure with this POVM via Pauli measurements. There exists single qubit Clifford operators (i.e., tensor product of single qubit Clifford operator) $V$ such that $T_{\bfa}=VZ^{\bfa'}V^{\dag}$ for some $\bfa'\in \bfF_2^n$. Therefore, $\Pi_{\bfa}^p=V\frac{I+(-1)^pZ^{\bfa'}}{2}V^{\dag}$ and hence we obtain that 
\begin{align}
	\tr{\rho \Pi_{\bfa}^{p}}=\tr{V^{\dag}\rho V\frac{I+(-1)^pZ^{\bfa'}}{2}}=\tr{V^{\dag}\rho V\times \sum_{\bfx\in \bfF_2^n,\bfa'\cdot \bfx=p}\ket{\bfx}\bra{\bfx}}=\sum_{\bfx\in \bfF_2^n,\bfa'\cdot \bfx=p}\braket{\bfx|V^{\dag}\rho V|\bfx}.
\end{align}
Therefore, we measure the POVM by first twirling $\rho$ by a single-qubit Clifford operators $V^{\dag}$ and measure with the computational basis, checking $\bfa'\cdot\bfx$ is $0$ or $1$. The linearity of $\bfa'\cdot \bfx$ is why we call the conventional DFE a linear-DFE.

Furthermore, we can generalize the above conventional DFE scheme based on sampling from the $l_2$ distribution of Pauli coefficients, which we refer to as $l_2$-sampling~\cite{guo2024}, to $l(2\alpha)\;(\alpha\in \mathbb{R}^{+}_{\ge 0})$-sampling.
To do so, we first define an important measure of magic (or non-stabilizerness), the \emph{stabilizer Rényi entropy}.

\begin{definition}~\cite{leone2022}\label{eq:def_SRE}
	Given $0<\alpha<1$ and a pure quantum state $\ket{\psi}$, \emph{$\alpha$-stabilizer Renyi entropy ($\alpha$-SRE)} of $\ket{\psi}$ is defined as,
	\begin{align}
		M_{\alpha}(\psi)\equiv\frac{1}{1-\alpha}\log_2\left(\frac{1}{2^{\alpha n}}\sum_{\bfa\in \bfF^{2n}_2}|\braket{\psi|T_{\bfa}|\psi}|^{2\alpha}\right)-n,
	\end{align}
	and $M_{0}=\lim_{\gamma\rightarrow 0^+}M_{\gamma}$. We also call $\widetilde{M}_{\bfa}(\psi)\equiv\frac{1}{2^n}\sum_{\bfa\in \bfF^{2n}_2}|\braket{\psi|T_{\bfa}|\psi}|^{2\alpha}$ as linear $\alpha$-SRE.
\end{definition}

We also call $\widetilde{M}_{0}$ as $l_0$-norm (or \emph{Pauli rank}~\cite{bu2019}), $ \widetilde{M}_{\frac{1}{2}}$ as \emph{Pauli $l_1$-norm (or stabilizer negativity)}~\cite{chen2024_magic_mps,howard2017}. In particular, the Pauli $l_1$-norm ($l_1$-norm, shortly) is equal to the $l_1$-norm of Pauli coefficients $\left\{\frac{1}{2^n}\braket{\psi|T_\bfa|\psi}\right\}_{\bfa\in \bfF_2^{2n}}$. This norm can be generalized to the arbitrary density matrices as, 
\begin{align}\label{eq:def_stab_norm}
	\widetilde{M}_{\frac{1}{2}}(\sigma)=\sqrt{M_{\frac{1}{2}}(\sigma)}=\frac{1}{2^n}\sum_{\bfa\in \bfF_2^{2n}}|\tr{\sigma T_{\bfa}}|.
\end{align}
Note that each square of Pauli coefficients in $\sigma$ sums to $2^n\tr{\sigma^2}$, $2^n$-factorized purity of $\sigma$. For notational convenience, we denote $l_0$-norm of $\sigma$ as $\|\sigma\|_0$, and the $l_1$-norm as $\|\sigma\|_1$. 

Next, we recall the fidelity expression Eq.~\eqref{eq:original_DFE}, but change the sampling distribution into $\left\{\frac{|c_{\psi}(\bfa)|^{2\alpha}}{\sum_{\bfa\in \bfF_2^{2n}}|c_{\psi}(\bfa)|^{2\alpha}}\right\}_{\bfa\in \bfF_2^{2n}}$ that leads to,  

\begin{align}\label{eq:generalized_alpha_DFE}
	\braket{\psi|\rho|\psi}&=\sum_{\bfa\in\bfF_2^n }c_{\psi}(\bfa)\tr{\rho T_{\bfa}}\nonumber\\&=\sum_{\bfa\in \bfF^{2n}_2}|c_{\psi}(\bfa)|^{2\alpha}\left\{\tr{\rho T_{\bfa}}|c_{\psi}(\bfa)|^{-2\alpha+1}\sgn(c_{\psi}(\bfa))\right\}\nonumber\\&=2^{(1-\alpha)(M_{\alpha}(\psi)+n)-\alpha n}\sum_{\bfa\in \bfF^{2n}_2}\frac{|c_{\psi}(\bfa)|^{2\alpha}}{2^{(1-\alpha)(M_{\alpha}(\psi)+n)-\alpha n}}\left\{\tr{\rho T_{\bfa}}|c_{\psi}(\bfa)|^{-2\alpha+1}\sgn(c_{\psi}(\bfa))\right\}\nonumber\\&=2^{(1-\alpha)(M_{\alpha}(\psi)+n)-\alpha n}\sum_{\bfa\in \bfF^{2n}_2}\frac{|c_{\psi}(\bfa)|^{2\alpha}}{2^{(1-\alpha)(M_{\alpha}(\psi)+n)-\alpha n}}\left\{\tr{\rho (\Pi_{\bfa}^0-\Pi_{\bfa}^1)}|c_{\psi}(\bfa)|^{-2\alpha+1}\sgn(c_{\psi}(\bfa))\right\}
	\nonumber\\&=2^{(1-\alpha)(M_{\alpha}(\psi)+n)-\alpha n}\sum_{\bfa\in \bfF^{2n}_2}\frac{|c_{\psi}(\bfa)|^{2\alpha}}{2^{(1-\alpha)(M_{\alpha}(\psi)+n)-\alpha n}}\sum_{p\in \bfF_2}\tr{\rho \Pi_{\bfa}^p}(-1)^p\left\{|c_{\psi}(\bfa)|^{-2\alpha+1}\sgn(c_{\psi}(\bfa))\right\},
\end{align}
The result enables us to estimate $\braket{\psi|\rho|\psi}$ following the general scheme as below, with fixed $N,K\in \mathbb{N}$,

\begin{enumerate}
	\item We sample $\bfa$ from $\frac{|c_{\psi}(\bfa)|^{2\alpha}}{\sum_{\bfa\in \bfF^{2n}_2}|c_{\psi}(\bfa)|^{2\alpha}}$.
	\item Measure $\rho$ with the POVM $\left\{\Pi_{\bfa}^0,\Pi_{\bfa}^1\right\}$, where $\Pi_{\bfa}^p\equiv \frac{1+(-1)^{p}T_{\bfa}}{2}\;(p\in \bfF_2)$ to get the outcome $p\in \bfF_2$.
	\item Take the estimator, $m\equiv (-1)^p\left(\sum_{\bfa}|c_{\psi}(\bfa)|^{2\alpha}\right)|c_{\psi}(\bfa)|^{-2\alpha+1}{\rm sgn}(c_{\psi}(\bfa))$.
	\item Repeat above steps $N$ times and get the estimators $m_1,m_2,\ldots,m_{N}$. Then the final estimated value is $\frac{1}{N}\sum_{i=1}^{N}m_i$.
	
	\item Repeat step~$4$, $K$ times to obtain $\hat{m}_1,\hat{m}_2,\ldots,\hat{m}_K$, then the final estimation becomes 
	\begin{align}
		\hat{m}={\rm median}\left\{\hat{m}_1,\hat{m}_2,\ldots,\hat{m}_K\right\}.
	\end{align}
\end{enumerate}

We call such a scheme an $\alpha$-DFE. Hence, the conventional DFE scheme reduces to $1$-DFE ($\alpha\rightarrow 1^{-}$).

\section{Preliminaries: Estimation variance and algorithmic efficiency of $\frac{1}{2}$-DFE}

In the previous section, we introduced the practical estimation routine for the $\alpha$-DFE. Using the form of Eq.~\eqref{eq:generalized_alpha_DFE}, we can also calculate its estimation variance, which quantifies the required sampling copies for a desired accuracy. Moreover, we see that the original DFE index~\cite{flammia2011}, $\alpha\rightarrow 1^{-}$ is not the sample-optimal choice. To see this, the estimation variance is bounded by, 
\begin{align}\label{eq:MSE_upper}
	{\rm Var}(\rho,\psi,\alpha)+\braket{\psi|\rho|\psi}^2= \mathbb{E}\left(\widehat{ \braket{\psi|\rho|\psi}}^2\right)&=\sum_{\bfa\in \bfF^n_2}|c_{\psi}(\bfa)|^{2\alpha}2^{(1-\alpha)(M_{\alpha}(\psi)+n)-\alpha n}\sum_{p\in \bfF^n_2}\tr{\rho \Pi_{\bfa}^p}|c_{\psi}(\bfa)|^{-4\alpha+2}\nonumber\\&=\sum_{\bfa\in \bfF^n_2}|c_{\psi}(\bfa)|^{-2\alpha+2}2^{(1-\alpha)(M_{\alpha}(\psi)+n)-\alpha n}\nonumber\\&=2^{\left[\left\{\alpha (M_{1-\alpha}(\psi)+n)-(1-\alpha)n\right\}+\left\{(1-\alpha)(M_{\alpha}(\psi)+n)-\alpha n\right\}\right]}\nonumber\\&=2^{\alpha M_{1-\alpha}(\psi)+(1-\alpha)M_{\alpha}(\psi)}.
\end{align}
It means that $\log_2\left(\mathbb{E}\left(\widehat{ \braket{\psi|\rho|\psi}}^2\right)\right)$ is a convex combination of two measures, $M_{\alpha}$ and $M_{1-\alpha}$. From now on, we will denote ${\rm Var}(\psi,\alpha)=\max_{\rho}\left\{{\rm Var}(\rho,\psi,\alpha)\right\}$. Since $\braket{\psi|\rho|\psi}^2\le 1$, we shall ignore this term for the scaling of variance.In addition, there is a well-known convexity theorem called \emph{log-sum-exp rule} that is, 
\begin{lemma}\label{lem:log-sum-exp rule}
    For a fixed probaiblity distribution $\left\{p_{\bfa}\right\}_{
    \bfa\in \bfF_2^{2n} 
    }$, both $f(\alpha)=\log_2\left(\sum_{\bfa\in \bfF_2^{2n}}p^{\alpha}_{\bfa}\right)$ and $f(1-\alpha)$ is convex function.
\end{lemma}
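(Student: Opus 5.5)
We need to prove Lemma \ref{lem:log-sum-exp rule}: for a fixed probability distribution $\{p_\bfa\}$, $f(\alpha)=\log_2\sum_\bfa p_\bfa^\alpha$ is convex in $\alpha$, and so is $\alpha\mapsto f(1-\alpha)$.

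The plan is to reduce everything to H\"older's inequality. Restrict the sum to the support $\mathcal{S}=\{\bfa: p_\bfa>0\}$. Terms with $p_\bfa=0$ contribute nothing for $\alpha>0$; on the range $\alpha\ge 0$ we adopt the convention $0^0=0$, consistent with the Pauli-rank definition, so that $f(0)=\log_2|\mathcal{S}|$. Each remaining term can then be written as $p_\bfa^\alpha=e^{\alpha\ln p_\bfa}$, which is a smooth, strictly positive function of $\alpha$.

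Take $\alpha_1,\alpha_2$ in the domain and $\lambda\in[0,1]$. Write
\begin{align}
p_\bfa^{\lambda\alpha_1+(1-\lambda)\alpha_2}=\left(p_\bfa^{\alpha_1}\right)^{\lambda}\left(p_\bfa^{\alpha_2}\right)^{1-\lambda}.
\end{align}
Sum this over $\mathcal{S}$ and apply H\"older with conjugate exponents $1/\lambda$ and $1/(1-\lambda)$. This gives
\begin{align}
\sum_{\bfa\in\mathcal{S}} p_\bfa^{\lambda\alpha_1+(1-\lambda)\alpha_2}\le\left(\sum_{\bfa\in\mathcal{S}} p_\bfa^{\alpha_1}\right)^{\lambda}\left(\sum_{\bfa\in\mathcal{S}} p_\bfa^{\alpha_2}\right)^{1-\lambda}.
\end{align}
The endpoint cases $\lambda\in\{0,1\}$ are trivial. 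Taking $\log_2$ of both sides yields $f(\lambda\alpha_1+(1-\lambda)\alpha_2)\le\lambda f(\alpha_1)+(1-\lambda)f(\alpha_2)$, which is convexity of $f$.

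As an alternative check, one can differentiate twice. This gives $f''(\alpha)\ln 2=\mathbb{E}_q[(\ln p)^2]-(\mathbb{E}_q[\ln p])^2\ge 0$, where $q_\bfa=p_\bfa^\alpha/\sum_{\bfb} p_{\bfb}^\alpha$; the right-hand side is a variance and hence nonnegative.

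For the second claim, $g(\alpha)=f(1-\alpha)$ is the composition of the convex function $f$ with the affine map $\alpha\mapsto 1-\alpha$. Composition with an affine map preserves convexity, so $g$ is convex as well.

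The only delicate point is the behaviour at $\alpha=0$, where the zero entries matter. I will handle it by defining $f(0)$ through the support convention above, or equivalently as the limit $\alpha\to0^+$. Convexity passes to this endpoint because $f$ is continuous on $[0,\infty)$ when the sum is restricted to $\mathcal{S}$.
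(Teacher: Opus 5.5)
Your proof is correct, but your primary argument takes a different route from the paper's. The paper works only with the second derivative. It sets $\bar f(\alpha)=\ln\sum_{\bfa}p_{\bfa}^{\alpha}$ (so $f=\bar f/\ln 2$) and identifies $\bar f''(\alpha)$ as the variance of $\ln p_{\bfa}$ under the escort distribution $q_{\bfa}\propto p_{\bfa}^{\alpha}$. It then handles $f(1-\alpha)$ by the chain rule $\partial_\alpha^2\bar f(1-\alpha)=\bar f''(1-\alpha)$. That is exactly your ``alternative check'' together with your affine-composition remark.

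Your main route instead applies H\"older's inequality with exponents $1/\lambda$ and $1/(1-\lambda)$ to $p_{\bfa}^{\lambda\alpha_1+(1-\lambda)\alpha_2}=(p_{\bfa}^{\alpha_1})^{\lambda}(p_{\bfa}^{\alpha_2})^{1-\lambda}$. This is the standard derivative-free proof that a log-partition (cumulant-generating) function is convex. It buys you independence from differentiability and a clean treatment of the domain. Once the sum is restricted to the support $\mathcal{S}$, the inequality holds verbatim at $\alpha=0$ and even for $\alpha<0$, so your continuity remark is not actually needed. It also makes explicit the zero-entry convention, which the paper leaves implicit even though $M_0$ and the Pauli rank depend on it.

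The paper's variance formula, in turn, carries more quantitative information. It shows $f''>0$ unless $p$ is uniform on its support, i.e.\ strict convexity, which would give uniqueness of the minimizer $\alpha=\tfrac{1}{2}$ in Corollary~\ref{Cor:renyi_convex}. From H\"older you would get this only by also analyzing its equality case, namely proportionality of $p^{\alpha_1}$ and $p^{\alpha_2}$ on $\mathcal{S}$.
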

\begin{proof}
    $f(\alpha)=\frac{1}{\ln 2}\ln(\bar{f}(\alpha))$, where $\bar{f}(\alpha)=\ln\left(\sum_{\bfa\in \bfF_2^{2n}}p^{\alpha}_{\bfa}\right)$. Hence, proving that $\bar{f}$ is convex is sufficient. By simple calculation, the double derivative, 
    \begin{align}
        \frac{\partial^2 \bar{f}}{\partial\alpha^2}=\sum_{\bfa\in \bfF_2^{2n}}\frac{p^{\alpha}_\bfa}{\sum_{\bfb\in \bfF_2^{2n}}p^\alpha_{\bfb}}\ln^2(p_{\bfa})-\left(\sum_{\bfa\in \bfF_2^{2n}}\frac{p^{\alpha}_\bfa}{\sum_{\bfb\in \bfF_2^{2n}}p^\alpha_{\bfb}}\ln(p_{\bfa})\right)^2,
    \end{align}
    is an estimation variance of the estimator $\ln(p_{\bfa})$ followed by the distribution $\left\{\frac{p^{\alpha}_\bfa}{\sum_{\bfb\in \bfF_2^{2n}}p^\alpha_{\bfb}}\right\}_{\bfa\in \bfF_2^{2n}}$, and is non-negative. Hence $f(\alpha)$ is convex. Convexity of $f(1-\alpha)$ naturally follows from $0\le\frac{\partial^2 \bar{f}(1-\alpha)}{\partial(1-\alpha)^2}=-\frac{\partial^2 \bar{f}(1-\alpha)}{\partial\alpha\partial(1-\alpha)}=\frac{\partial^2 \bar{f}(1-\alpha)}{\partial\alpha^2}$.
\end{proof}

It leads to the following result, 

\begin{corollary}\label{Cor:renyi_convex}
	$\alpha M_{1-\alpha}(\psi)+(1-\alpha)M_{\alpha}(\psi)$ is minimum at $\alpha=\frac{1}{2}$.
\end{corollary}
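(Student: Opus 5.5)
The plan is to rewrite the objective $g(\alpha)\equiv\alpha M_{1-\alpha}(\psi)+(1-\alpha)M_{\alpha}(\psi)$ as a sum of two log-sum-exp terms. Lemma~\ref{lem:log-sum-exp rule} then gives convexity, and an exchange symmetry $\alpha\leftrightarrow 1-\alpha$ forces the minimizer to sit at $\alpha=\frac{1}{2}$. The calculation is the same one already carried out in Eq.~\eqref{eq:MSE_upper}, regrouped so that the convex structure is visible.

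\textbf{Step 1 (normalize).} Set $p_{\bfa}\equiv\frac{1}{2^n}\braket{\psi|T_{\bfa}|\psi}^2$. This is a probability distribution on $\bfF_2^{2n}$, because purity gives $\frac{1}{2^n}\sum_{\bfa}\braket{\psi|T_{\bfa}|\psi}^2=1$. Then
\begin{align}
\sum_{\bfa}|\braket{\psi|T_{\bfa}|\psi}|^{2\alpha}=2^{\alpha n}\sum_{\bfa}p_{\bfa}^{\alpha}.
\end{align}
Substituting this into Definition~\ref{eq:def_SRE} gives
\begin{align}
(1-\alpha)M_{\alpha}(\psi)=f(\alpha)-(1-\alpha)n,\qquad f(\alpha)\equiv\log_2\Big(\sum_{\bfa}p_{\bfa}^{\alpha}\Big).
\end{align}
Replacing $\alpha$ by $1-\alpha$ yields $\alpha M_{1-\alpha}(\psi)=f(1-\alpha)-\alpha n$. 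Adding the two,
\begin{align}
g(\alpha)=f(\alpha)+f(1-\alpha)-n .
\end{align}

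\textbf{Step 2 (convexity).} By Lemma~\ref{lem:log-sum-exp rule}, both $f(\alpha)$ and $f(1-\alpha)$ are convex in $\alpha$. Hence $g$ is convex on $(0,1)$.

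\textbf{Step 3 (symmetry).} From the formula in Step 1, $g(1-\alpha)=g(\alpha)$. Convexity at the midpoint then gives, for any $\alpha\in(0,1)$,
\begin{align}
g\!\left(\tfrac{1}{2}\right)\le\tfrac{1}{2}\big[g(\alpha)+g(1-\alpha)\big]=g(\alpha),
\end{align}
so the minimum is attained at $\alpha=\frac{1}{2}$.

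The only point needing care is the boundary. The endpoint $\alpha\to 0^{+}$ or $1^{-}$ involves $M_0$ (defined as a limit) and the term $p_{\bfa}^0$ on the support. The inequality above holds for every interior $\alpha$, and it passes to the endpoints by taking limits, since $g$ is continuous on $(0,1)$ and the endpoint values are defined as those limits. I expect no genuine obstacle beyond checking that the normalization constants cancel exactly as in Step 1. That cancellation is what makes the linear-in-$n$ terms combine into the constant $-n$, rather than into a term that depends on $\alpha$.
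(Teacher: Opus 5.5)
Your proposal is correct and follows the paper's proof. Both arguments treat $\{2^{-n}\braket{\psi|T_{\bfa}|\psi}^2\}$ as a probability distribution, obtain convexity of $\alpha M_{1-\alpha}+(1-\alpha)M_{\alpha}$ from Lemma~\ref{lem:log-sum-exp rule}, and place the minimum at $\alpha=\frac{1}{2}$. The only difference is in that last step: the paper observes that the derivative vanishes there, while you use the exchange symmetry $g(\alpha)=g(1-\alpha)$ together with midpoint convexity, which is precisely why the derivative vanishes and spares you any appeal to differentiability.
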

\begin{proof}
   We remember that $\left\{\frac{1}{2^n}|\braket{\psi|T_{\bfa}|\psi}|^2\right\}_{\bfa\in \bfF_2^{2n}}$ acts as a probability distribution. By Def.~\ref{eq:def_SRE} and Lem.~\ref{lem:log-sum-exp rule}, we note that both $\alpha M_{1-\alpha}$ and $(1-\alpha)M_{\alpha}$ are convex.  Hence its sum, $\alpha M_{1-\alpha}(\psi)+(1-\alpha)M_{\alpha}(\psi)$ is again convex with $\alpha$. Furthermore, take the derivative with $\alpha$ then we see that it hits zero at $\alpha=\frac{1}{2}$. 
\end{proof}

This implies that we have the best estimation accuracy if we could sample the $\bfa$ following $\alpha=\frac{1}{2}$, which matches with the tight sampling scaling for the black-box estimator~\cite{fawzi2024}. In other words, the probability $p(\bfa)$ to get $\bfa\in \bfF^{2n}_2$  is, 
\begin{align}
	p(\bfa)=2^{-\frac{M_{1/2}(\psi)}{2}}|c_{\psi}(\bfa)|.
\end{align}

We call such routine as \emph{$l_1$-sampling} in which ${\rm Var}\left(\psi,\frac{1}{2}\right)=\|\psi\|_1^2$. Again, the original direct fidelity estimation~\cite{flammia2011} was of \emph{$l_2$-sampling} ($\alpha\rightarrow 1^{-}$). The general notation \emph{$l_{2\alpha}$-sampling} naturally follows. In $l_1$-sampling, stabilizer negativity quantifies the required~\cite{howard2017} sampling-copy complexity. In the $\alpha\rightarrow 1^{-}$ case, we recover the well-known result of Ref.~\cite{flammia2011},

\begin{proposition}
	\begin{align}
		\rm{Var}(\psi,1)\le \frac{1}{\gamma(\psi)^2},
	\end{align} where
	\begin{align}
		\gamma(\psi)^2=\min_{\bfa\in \bfF^{2n}_2}\left\{\braket{\psi|T_{\bfa}|\psi}|\braket{\psi|T_{\bfa}|\psi}\ne 0\right\}.
	\end{align}
\end{proposition}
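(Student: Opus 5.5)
The plan is to specialize the second-moment identity \eqref{eq:MSE_upper} to $\alpha\rightarrow 1^{-}$ and then bound the resulting Pauli rank by a counting argument. Throughout I read $\gamma(\psi)^2$ as the smallest nonzero \emph{squared} coefficient, $\gamma(\psi)^2=\min\{|\braket{\psi|T_{\bfa}|\psi}|^2 : \braket{\psi|T_{\bfa}|\psi}\neq 0\}$. This is the normalization of Ref.~\cite{flammia2011}, and it is the reading under which the claimed $1/\gamma^2$ scaling comes out; the displayed formula in the statement is ambiguous on this point.

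\textbf{Step 1: the second moment.} I will recompute it directly rather than take the limit of the exponents. Write $c_{\bfa}=\braket{\psi|T_{\bfa}|\psi}$. For $\alpha=1$ the sampling probability is $p(\bfa)=c_{\bfa}^2/2^n$, which is normalized by purity. The estimator is $m=(-1)^p/c_{\bfa}$, defined only on the support $\{\bfa: c_{\bfa}\neq 0\}$, since indices outside it are never sampled. As $m^2=1/c_{\bfa}^2$ whatever the outcome $p$, I get
\begin{align}
\mathbb{E}(m^2)=\sum_{\bfa:\,c_{\bfa}\neq 0}\frac{c_{\bfa}^2}{2^n}\cdot\frac{1}{c_{\bfa}^2}=\frac{1}{2^n}\sharp\{\bfa: c_{\bfa}\neq 0\}=\|\psi\|_0 .
\end{align}
This is the $\alpha\rightarrow1^-$ limit of \eqref{eq:MSE_upper}, namely $2^{M_0(\psi)}$. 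It is independent of $\rho$, so ${\rm Var}(\psi,1)=\max_\rho\{\|\psi\|_0-\braket{\psi|\rho|\psi}^2\}\le\|\psi\|_0$.

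\textbf{Step 2: bound the Pauli rank.} Every nonzero term satisfies $c_{\bfa}^2\ge\gamma(\psi)^2$. Purity gives $\sum_{\bfa}c_{\bfa}^2=2^n$. Combining the two,
\begin{align}
2^n\,\|\psi\|_0\,\gamma(\psi)^2\le\sum_{\bfa:\,c_{\bfa}\neq0}c_{\bfa}^2=2^n ,
\end{align}
so $\|\psi\|_0\le 1/\gamma(\psi)^2$, and the proposition follows.

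\textbf{Expected obstacle.} The only real difficulty is bookkeeping of normalizations. These include the $2^n$ in the $c_\psi(\bfa)$ convention, which differs between \eqref{eq:Pauli_expansion} and the SM, and the exact meaning of $\gamma$. If $\gamma^2$ were instead taken as $\min|c_{\bfa}|$ (unsquared), the same counting argument only yields $\|\psi\|_0\le\gamma^{-4}$. In that case I would state the bound with that exponent, or equivalently translate to Flammia's $\chi$-normalization. The argument itself is a two-line computation once the convention is fixed.
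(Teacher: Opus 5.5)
Your proposal is correct and follows the paper's proof: the $\alpha\to1^-$ second moment equals the Pauli rank $2^{M_0(\psi)}=\|\psi\|_0$, and the purity identity $\sum_{\bfa}\braket{\psi|T_{\bfa}|\psi}^2=2^n$, with every nonzero term at least $\gamma(\psi)^2$, bounds it by $1/\gamma(\psi)^2$. Your squared reading of $\gamma(\psi)^2$ is what the paper's step $2^n\ge A\gamma(\psi)^2$ implicitly uses, and computing $\mathbb{E}(m^2)$ directly at $\alpha=1$ instead of taking the limit of the R\'enyi exponents is only a cosmetic difference.
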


\begin{proof}
	When $\alpha\rightarrow 1^{-}$, $ \rm{Var}(\psi,1)$ is bounded by, 
	
	\begin{align}
		\mathbb{E}\left(\widehat{ \braket{\psi|\rho|\psi}}^2\right)\le 2^{M_{0}(\psi)+\log_2\left(\frac{1}{2^n}\sum_{\bfa\in \bfF^{2n}_2}\braket{\psi|T_{\bfa}|\psi}^2\right)}=2^{M_0(\psi)},
	\end{align}
	where the last equality holds since the purity $\frac{1}{2^n}\sum_{\bfa\in \bfF^{2n}_2}\braket{\psi|T_{\bfa}|\psi}^2=1$.
	Here, 
	
	\begin{align}
		M_{0}=\log_2\left(\left|\left\{\braket{\psi|T_{\bfa}|\psi}|\braket{\psi|T_{\bfa}|\psi}\ne 0,\bfa\in \bfF^{2n}_2\right\}\right|\right)-n.
	\end{align}
	We set $A\equiv 2^{M_0(\psi)+n}$ for convenience. We note that 
	\begin{align}
		2^n=\sum_{\bfa\in \bfF^{2n}_2}\braket{\psi|T_{\bfa}|\psi}^2\ge A\gamma(\psi)^2.
	\end{align}
	Then, we conclude $2^{M_0+n}\le \frac{2^n}{\gamma(\psi)^2}$, and hence $ \rm{Var}(\psi,1)\le \frac{1}{\gamma(\psi)^2}$.
\end{proof}

$\alpha$-DFE requires an efficient $l_{2\alpha}$-sampling of the phase point $\bfa\in \bfF_2^{2n}$. Unfortunately, not every case of the target state satisfies the efficient $l_{2\alpha}$-sampling. Next, we demonstrate that $\frac{1}{2}$-DFE is efficiently simulated for Dicke states~\cite{Brtschi2019}, which is typically observed as efficient cases of $1$-DFE~\cite{flammia2011,cha2025,julia2025}.
\begin{proposition}\label{prop:dicke_efficiency}
	(i) Consider Dicke-$(n,k\le\lfloor\frac{n}{2}\rfloor)$ state that is, $\ket{{\rm Dic}(n,k)}\equiv\frac{1}{\sqrt{\comb{n}{k}}}\sum_{\bfx\in \bfF_2^n,|\bfx|=k}\ket{\bfx}$. Then $l_1$-sampling for Dicke states takes $\mathcal{O}(k^4n)$-time.\\
	(ii) $\|{\rm Dic}(n,k)\|^2_1\le 2^{M_0({\rm Dic}(n,k))}\le\mathcal{O}(n^{2k})$.
\end{proposition}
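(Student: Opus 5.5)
The plan is to exploit the permutation symmetry and reality of $\ket{{\rm Dic}(n,k)}$, so that every Pauli coefficient depends on only a few integer parameters of $\bfa=(\bfa_x,\bfa_z)$. Write $T_\bfa=i^{\bfa_x\cdot\bfa_z}X^{\bfa_x}Z^{\bfa_z}$ and take the expectation in the Dicke state:
\begin{align}
\braket{{\rm Dic}(n,k)|T_\bfa|{\rm Dic}(n,k)}=\frac{i^{\bfa_x\cdot\bfa_z}}{\comb{n}{k}}\sum_{|\bfx|=k,\;|\bfx+\bfa_x|=k}(-1)^{\bfa_z\cdot\bfx}.
\end{align}
The first step is a selection rule for this sum. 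Let $w=|\bfa_x|$. The constraint $|\bfx+\bfa_x|=k$ forces $\bfx$ to contain exactly $w/2$ ones on the support of $\bfa_x$. Hence the coefficient vanishes unless $w$ is even and $w\le 2\min(k,n-k)=2k$.

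The second step is to show that, when nonzero, the coefficient depends only on three numbers. These are $w$, $u=|\bfa_z\restriction_{{\rm supp}\,\bfa_x}|$ (the number of $Y$ factors), and $v=|\bfa_z\restriction_{\overline{{\rm supp}\,\bfa_x}}|$ (the number of $Z$ factors). Splitting $\bfx$ into its part on ${\rm supp}\,\bfa_x$ (weight $w/2$) and its complement (weight $k-w/2$) factorizes the sum into a product of two Krawtchouk-type sums:
\begin{align}
\sum_{j}(-1)^j\comb{u}{j}\comb{w-u}{w/2-j}\;\cdot\;\sum_{l}(-1)^l\comb{v}{l}\comb{n-w-v}{k-w/2-l}.
\end{align}
Each of these sums has $\mathcal{O}(k)$ terms, so a coefficient can be evaluated in $\mathcal{O}(k)$ to $\mathcal{O}(k^2)$ arithmetic operations.

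For part (i), group the Paulis into classes labelled by $(w,u,v)$. Class $(w,u,v)$ has $\comb{n}{w}\comb{w}{u}\comb{n-w}{v}$ members, all with the same $|c|$. Its total $l_1$-weight is this multiplicity times $|c(w,u,v)|$.

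The parameters range over $w\le 2k$ even, $u\le w$, and $v\le n$, giving $\mathcal{O}(k^2n)$ classes. I have not yet pinned down an honest $\mathcal{O}(k^4 n)$ total. Precomputing binomials in $\mathcal{O}(n)$ and evaluating each coefficient gives $\mathcal{O}(k^2 n)\times\mathcal{O}(k)$ to $\mathcal{O}(k^2)$, which I expect to be absorbed into the claimed bound. For instance, the $v$-dependent second factor can be obtained for all $v$ by a recurrence at $\mathcal{O}(k)$ per step.

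Sampling then proceeds in two stages. First sample a class $(w,u,v)$ from the normalized table of class weights. Then sample uniformly within the class: choose ${\rm supp}\,\bfa_x$ as a random $w$-subset, $u$ positions inside it, and $v$ positions outside it. The uniform subset choices cost $\mathcal{O}(n)$. The main obstacle is this bookkeeping: checking that the combined class count and per-class coefficient cost really fits in $\mathcal{O}(k^4 n)$, and handling the large binomials, possibly via log-domain arithmetic.

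For part (ii), apply Cauchy--Schwarz on the support $S=\{\bfa: c(\bfa)\neq 0\}$:
\begin{align}
\|\psi\|_1=\frac{1}{2^n}\sum_{\bfa\in S}|\braket{\psi|T_\bfa|\psi}|\le\frac{1}{2^n}\sqrt{|S|}\Big(\sum_\bfa\braket{\psi|T_\bfa|\psi}^2\Big)^{1/2}=\sqrt{|S|/2^n}.
\end{align}
Here the purity identity $\sum_\bfa\braket{\psi|T_\bfa|\psi}^2=2^n$ was used, so that $\|\psi\|_1^2\le|S|/2^n=2^{M_0}$. Finally, the selection rule bounds $|S|$. There are at most $\sum_{w\le 2k}\comb{n}{w}=\mathcal{O}(n^{2k})$ choices of $\bfa_x$, and at most $2^n$ choices of $\bfa_z$. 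This gives $2^{M_0}\le\mathcal{O}(n^{2k})$.
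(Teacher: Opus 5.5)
Your proposal is correct and follows essentially the paper's own argument: the selection rule ($|\bfa_x|$ even and at most $2k$), the factorization of each coefficient into two Krawtchouk-type sums depending only on $(|\bfa_x|,u,v)$ (the paper's $(p,q_1,q_2)$), class-weight sampling followed by uniform sampling within the class, and the support count $2^n\sum_{w\le 2k}\comb{n}{w}$ for (ii). The only differences are cosmetic: you sample $(w,u,v)$ jointly rather than $p$ first and then $(q_1,q_2)$, you prove $\|\psi\|_1^2\le 2^{M_0}$ by Cauchy--Schwarz rather than by the ordering $M_{1/2}\le M_0$ of stabilizer R\'enyi entropies, and your bookkeeping worry is unfounded, since $\mathcal{O}(k^2 n)$ classes at $\mathcal{O}(k)$ operations per coefficient already give $\mathcal{O}(k^3 n)\subseteq\mathcal{O}(k^4 n)$.
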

\begin{proof}
	We first prove (i). Note that
	\begin{align}
		|c_{\psi}(\bfa)|=\frac{1}{2^n\comb{n}{k}}\left|\sum_{|\bfy|=k}(-1)^{\bfa_z\cdot \bfy}\left(\sum_{|\bfx|=k}\delta_{\bfx,\bfa_x+\bfy}\right)\right|.
	\end{align}
	Non-zero terms only occur when $|\bfa_x+\bfy|=k-a+|\bfa_x|-a=k$, where $a=a(\bfa_x,\bfy)$ denotes the number of overlapped $1$'s between $\bfa_x$ and $\bfy$. Hence $|\bfa_x|=2a\le 2k$ should be even. Furthermore, we note that $c_{\psi}(\bfa)=c_{\psi}(\sigma\oplus \sigma(\bfa))$ for any permutation $\sigma\in S_{2^n}$. Therefore, we have $\sum_{\bfa_z}|c_{\psi}(\sigma \bfa_x,\bfa_z)|=\sum_{\bfa_z}|c_{\psi}(\sigma \bfa_x,\sigma \bfa_z)|=\sum_{\bfa_z}|c_{\psi}(\bfa)|$, and then the x-marginal sampling probability depends only on the even Hamming weight of $\bfa_x$. Exactly, we obtain that the marginal probability of $\bfa_x$'s with a given Hamming weight $p$ (even) is,
	\begin{align}
		&\sum_{\bfa_z
			,|\bfa_x|=p}|c_{\psi}(\bfa)|\nonumber\\&=\frac{\comb{n}{p}}{2^n\comb{n}{k}}\sum_{\bfa_z}\left|\sum_{|\bfy|=k,a(\mathbf{1}_p,\bfy)=\frac{p}{2}}(-1)^{\bfy\cdot \bfa_z}\right|\nonumber\\&=\frac{\comb{n}{p}}{2^n\comb{n}{k}}\sum_{q_1=0}^{p}\sum_{q_2=0}^{n-p}\comb{p}{q_1}\cdot \comb{n-p}{q_2}\left|\sum_{|\bfy|=k,a(\mathbf{1}_p,\bfy)=\frac{p}{2}}(-1)^{(y_1+\cdots+y_{q_1})+(y_{p+1}+\cdots+y_{p+q_2})}\right|\nonumber\\&=\frac{\comb{n}{p}}{2^n\comb{n}{k}}\sum_{q_1=0}^{p}\sum_{q_2=0}^{n-p}\comb{p}{q_1}\cdot \comb{n-p}{q_2}\left|\sum_{\bfy'\in \bfF^p_{2},|\bfy'|=\frac{p}{2}}(-1)^{(y'_1+\cdots+y'_{q_1})}\cdot\sum_{\bfy''\in \bfF^{n-p}_2, |\bfy''|=k-\frac{p}{2}}(-1)^{(y''_{p+1}+\cdots+y''_{p+q_2})}\right|\;(\because\;\frac{p}{2}\le k)
		\nonumber\\&=\frac{\comb{n}{p}}{2^n\comb{n}{k}}\sum_{q_1=0}^{p}\sum_{q_2=0}^{n-p}\left\{\comb{p}{q_1}\cdot \comb{n-p}{q_2}\left|\sum_{l=0}^{\min\left\{q_1,\frac{p}{2}\right\}}(-1)^l\comb{q_1}{l}\cdot\comb{p-q_1}{\frac{p}{2}-l}\right|\cdot\left|\sum_{l=0}^{\min\left\{q_2,k-\frac{p}{2}\right\}}(-1)^l\comb{q_2}{l}\cdot\comb{n-p-q_2}{k-\frac{p}{2}-l}\right|\right\}.
	\end{align}
	Here, $\mathbf{1}_p\equiv (1,\ldots,1,0,\ldots,0)$ with $p$ numbers of $1$'s and we set $\comb{a}{b}=\frac{a!}{b!(a-b)!}=0$ if $a<b$.
	Calculating the above equation with all $p\le 2k$ takes $\mathcal{O}(k^4n)$ time. Then we can sample the $\bfa_x$ in the following manner. We first sample the weight $p$ from the distribution $\eta(p)$ expressed as 
	\begin{align}
		\eta(p)=\frac{        \sum_{\substack{\bfa_z
					\\ |\bfa_x|=p
					\;{\rm is\;even}}}|c_{\psi}|}{\sum_{p}        \sum_{\substack{\bfa_z
					\\ |\bfa_x|=p
					\;{\rm is\;even}}}|c_{\psi}|}.
	\end{align}
	Then we uniformly sample $\bfa_x$ among the $n$-binary strings of the same weight $p$. After $\bfa_x$ is chosen, we can sample $\bfa_z$ similarly. We decompose $\bfa_z=\bfa_z^{(1)}\oplus\bfa_z^{(2)}$, where $\bfa_{z}^{(1)}$ is on the positions having $1$ in $\bfa_x$. We note that for an arbitrary permutation $\sigma_1$, $\sigma_2$,
	\begin{align}
		|c_{\psi}(\bfa_x, \sigma_1\bfa_z^{(1)}\oplus \sigma_2\bfa_z^{(2)} )|&=\frac{1}{2^n\comb{n}{k}}\left|\sum_{|\bfy|=k, a(\bfa_x,\bfy)=\frac{p}{2}}(-1)^{\bfy\cdot (\sigma_1\bfa_z^{(1)}\oplus \sigma_2\bfa_z^{(2)})}\right|\nonumber\\&=\frac{1}{2^n\comb{n}{k}}\left|\sum_{\substack{\bfy=\bfy^{(1)}\oplus \bfy^{(2)}\\|\bfy^{(1)}|=\frac{p}{2},|\bfy^{(2)}|=k-\frac{p}{2}}}(-1)^{\sigma_1^{\top}\bfy^{(1)}\cdot \bfa_z^{(1)}+\sigma_2^{\top}\bfy^{(2)}\cdot \bfa_z^{(2)}}\right|
		\nonumber\\&=\frac{1}{2^n\comb{n}{k}}\left|\sum_{\substack{\bfy=\bfy^{(1)}\oplus \bfy^{(2)}\\|\bfy^{(1)}|=\frac{p}{2},|\bfy^{(2)}|=k-\frac{p}{2}}}(-1)^{\bfy^{(1)}\cdot \bfa_z^{(1)}+\bfy^{(2)}\cdot \bfa_z^{(2)}}\right|\nonumber\\&=|c_{\psi}(\bfa_x,\bfa_z )|.
	\end{align}
	
	Therefore, the probability to sample $\bfa_z$ (conditioned on $\bfa_x$) only depends on the Hamming weights of $\bfa^{(1)}_z$ and $\bfa^{(2)}_z$. Therefore, we conclude that sampling $\bfa_z$ also takes $\mathcal{O}(k^3 n)$-time. (i) is proved. 
	
	We move on to the statement (ii). The first inequality is definite by Def.~\ref{eq:def_SRE} and Cor.~\ref{Cor:renyi_convex}. Hence let us prove the next inequality.  We need to count the nontrivial Pauli supports of $\ket{{\rm Dic}(n,k)}$. Note that $\braket{{\rm Dic}(n,k)|T_{\bfa}|{\rm Dic}(n,k)}=\frac{1}{\comb{n}{k}}\sum_{|\bfx|,\bfy|=k}\braket{\bfx|T_{\bfa}|\bfy}=\frac{1}{\comb{n}{k}}\sum_{|\bfx|,\bfy|=k}\delta_{\bfx+\bfa_x,\bfy}(-1)^{(\bfx+\bfa_x)\cdot \bfy}$. If this value is nonzero, similarly to the proof of (i), $|\bfa_x|$ should be even and lower or equal than $2k$. Therefore, total number of nontrivial Pauli support is at most $2^n\times\left(\comb{n}{0}+\comb{n}{2}+\,\ldots,+\comb{n}{2k}\right)=\mathcal{O}(2^n n^{2k})$, where $2^n$ factor came from the freedom of choosing $Z$-part in $T_{\bfa}$. It leads to $2^{M_0({\rm Dic}(n,k))}\le \mathcal{O}(n^{2k})$. 
\end{proof}

\section{Pauli $l_1$-norm of hypergraph states}

One of the famous examples of magic states is the \emph{hypergraph state}~\cite{rossi2013,zhu2019}, a specific phase state~\cite{huang2024_sv,lee2025_shallow}. In this section, we show that the random hypergraph state is a representative example of the impossibility of executing sample-efficient DFE. To do so, we find that its $l_1$-norm is  $\simeq \Theta (2^{0.5n})$ so that the sampling complexity of $\alpha$-DFE is $\mathcal{O}(2^n)$. Before introducing this, we define several basic notations. A \emph{$k$th-ordered controlled-Z gate} is defined as the unitary operation with respect to the computational basis $\ket{\bfx} (\bfx\in \bfF^n_2)$, 
\begin{align}
	C_{\left\{i_1,i_2,\ldots,i_k\right\}}Z\ket{\bfx}=(-1)^{x_{i_1}x_{i_2}\ldots x_{i_k}}\ket{\bfx},
\end{align}
where $i_j\in [n]$, $j\in [k]$. We say it as \emph{multiply controlled Z gate} if the specification of $k$ is unnecessary. If $k=1 (2\;{\rm resp.})$, this is $Z$($CZ$)-gate. If $k=3$, we just say it as \emph{controlled-controlled Z(CCZ)-gate} and denote it $CCZ_{\left\{i_1,i_2,i_3\right\}}$. We now consider a \emph{hypergraph} $G(V,E)$ where $V=[n]$ and $E$ is a set of subsets in $V$ with a maximal size $k\ge 2$. The \emph{$k$th-ordered hypergraph state} $\ket{G(V,E)}$ is defined as~\cite{rossi2013}, 
\begin{align}\label{main:def_hypergraph_state}
	\ket{G(V,E)}\equiv \left(\prod_{A\in E}C_A Z\right)\ket{+}^{\otimes n}=\frac{1}{\sqrt{2^n}}\sum_{\bfx\in \bfF^n_2}\left(\prod_{A\in E}C_A Z\right)\ket{\bfx}=\frac{1}{\sqrt{2^n}}\sum_{\bfx\in \bfF^n_2}(-1)^{P_{G}(\bfx)}\ket{\bfx}.
\end{align}
Here, $P_{G}$ denotes the corresponding $k$th-degree Boolean polynomial. We can see that the $U_E\equiv\prod_{A\in E}C_A Z$ is of $k$th-ordered Clifford hierarchy~\cite{pllaha2020,anderson2024}. If all and only $k$th-order multiple controlled Z gates are filled, we call $\ket{G(V,E)}$ a $k$th-order complete hypergraph state, which is also denoted as $\ket{K_k}$. 

We consider when the order of the target hypergraph state $\ket{\psi}$ has the order $3$ (CCZ connections). We remember that these states are uniquely described as specific third-degree Boolean functions. Suppose we are given a Boolean function $f:\bfF^n_2\rightarrow \bfF_2$. We define the directional derivative of $f$ as $D_{\bfv}f(\bfa)\equiv f(\bfa+\bfv)+f(\bfa)$ We first define the linear structure~\cite{kacsikci2016} of $f$ as follows. 
\begin{definition}
	Given a Boolean function $f$, the \emph{linear structure} of $f$ is defined as,
	\begin{align}
		{\rm LS} (f)\equiv \left\{\bfv\in \bfF^n_2|D_\bfv f\;{\rm is\;constant}\right\}.  
	\end{align}
\end{definition}

\begin{corollary}~\cite{kacsikci2016}
	${\rm LS}(f)$ is linear subspace of $\bfF^n_2$.
\end{corollary}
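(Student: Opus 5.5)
We prove that ${\rm LS}(f)$ is closed under addition and contains $\mathbf{0}$. Over $\bfF_2$ this suffices, since scalar multiplication only produces $\mathbf{0}$ or the vector itself.

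For the zero vector, $D_{\mathbf{0}}f(\bfa)=f(\bfa)+f(\bfa)=0$ for every $\bfa$. This is constant, so $\mathbf{0}\in{\rm LS}(f)$.

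For closure, take $\bfu,\bfv\in{\rm LS}(f)$ with $D_\bfu f\equiv c_\bfu$ and $D_\bfv f\equiv c_\bfv$, where $c_\bfu,c_\bfv\in\bfF_2$. The key step is the cocycle identity for directional derivatives, obtained by inserting the intermediate point $\bfa+\bfv$:
\begin{align}
D_{\bfu+\bfv}f(\bfa)=f(\bfa+\bfu+\bfv)+f(\bfa)=\bigl[f((\bfa+\bfv)+\bfu)+f(\bfa+\bfv)\bigr]+\bigl[f(\bfa+\bfv)+f(\bfa)\bigr]=D_\bfu f(\bfa+\bfv)+D_\bfv f(\bfa).
\end{align}
Here we used that $f(\bfa+\bfv)+f(\bfa+\bfv)=0$ in $\bfF_2$. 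Each bracket is constant by hypothesis, and the first one is constant regardless of its argument shift. Hence $D_{\bfu+\bfv}f\equiv c_\bfu+c_\bfv$, so $\bfu+\bfv\in{\rm LS}(f)$.

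As a by-product, the map $\bfv\mapsto c_\bfv$ is linear on ${\rm LS}(f)$, which may be useful later for counting Pauli supports of hypergraph states. There is no real obstacle here. The only point needing care is that the constancy of $D_\bfu f$ must hold at the shifted argument $\bfa+\bfv$, and this is automatic because it holds for all arguments.
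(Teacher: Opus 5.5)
Your proof is correct and is essentially the paper's argument. Both proofs write $D_{\bfu+\bfv}f$ as a sum of the two constant directional derivatives at shifted points, using $\bfb+\bfb=\mathbf{0}$; the paper adds $D_\bfu f(\bfa)+D_\bfv f(\bfa)=f(\bfa+\bfu)+f(\bfa+\bfv)$ and then translates $\bfa\mapsto\bfa+\bfu$, whereas you insert the intermediate point $\bfa+\bfv$ directly, and your explicit check that $\mathbf{0}\in{\rm LS}(f)$ is a small detail the paper leaves implicit.
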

\begin{proof}
	Suppose $\bfu,\bfv\in \linearstructure (f).$ Then $f(\bfu+\bfa)+f(\bfa)+f(\bfv+\bfa)+f(\bfa)=f(\bfa+\bfu)+f(\bfa+\bfu+\bfu+\bfv)$ is constant for $\bfa\in \bfF^n_2$. This is still constant if we translate $\bfa$ to $\bfa+\bfu$ ($\because\;\forall \bfb\in \bfF^n_2,\; \bfb+\bfb=\mathbf{0}$). Therefore, we conclude $f(\bfa+\bfu+\bfv)+f(\bfa)$ is constant and then $\bfu+\bfv\in \linearstructure(f)$. 
\end{proof}

Next, we recall the Walsh-Hadamard transform~\cite{scheibler2015}, which is the correspondence of the Fourier transform in $GF(2^n)\simeq \bfF^n_2$~\cite{macwilliams1977},

\begin{align}
	\hat{f}(\bfu)\equiv\frac{1}{2^n}\sum_{\bfa\in \bfF^n_2}(-1)^{\bfu\cdot \bfa+f(\bfa)} .
\end{align}

Now, we introduce a known result. 
\begin{proposition}~\cite{charpin2005,kacsikci2016,bravyi2019}
	Let the Boolean function $f$ be quadratic. Then for all $\bfu\in \bfF^n_2$, we have 
	\begin{align}
		|\hat{f}(\bfu)|\in\left\{0,2^{\frac{{\rm dim}(\linearstructure(f))-n}{2}}\right\}.
	\end{align}
	Here, $\hat{f}(\bfu)$ can be calculated in $\mathcal{O}(n^3)$ time and memory. 
	Furthermore, $|{\rm supp}(\hat{f})|=2^{n-{\rm dim}(\linearstructure(f))}$. In other words, $\sum_{\bfu\in \bfF^n_2}|\hat{f}(\bfu)|=2^{\frac{n-{\rm dim}(\linearstructure(f))}{2}}$.
\end{proposition}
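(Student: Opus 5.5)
The plan is to compute $|\hat f(\bfu)|^2$ directly. The quadratic structure of $f$ makes every directional derivative affine. That collapses the double sum to a character sum over the subspace $\linearstructure(f)$. Parseval then gives the support size and the $l_1$-sum.

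\textbf{Step 1 (squaring).} Write
\begin{align}
|\hat f(\bfu)|^2=\frac{1}{2^{2n}}\sum_{\bfa,\bfb\in\bfF_2^n}(-1)^{\bfu\cdot(\bfa+\bfb)+f(\bfa)+f(\bfb)}.
\end{align}
Substitute $\bfb=\bfa+\bfv$ to get
\begin{align}
|\hat f(\bfu)|^2=\frac{1}{2^{2n}}\sum_{\bfv}(-1)^{\bfu\cdot\bfv}\sum_{\bfa}(-1)^{D_\bfv f(\bfa)}.
\end{align}

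\textbf{Step 2 (structure of $D_\bfv f$).} Because $f$ is quadratic, $D_\bfv f(\bfa)=B(\bfa,\bfv)+D_\bfv f(\mathbf 0)$. Here $B$ is the symmetric bilinear (alternating) form obtained from the quadratic part of $f$. So $D_\bfv f$ is affine in $\bfa$ and is constant iff $B(\cdot,\bfv)\equiv 0$. Hence $\linearstructure(f)$ is exactly the radical of $B$, consistent with the earlier corollary that it is a subspace.

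Summing the character over $\bfa$ gives $2^n(-1)^{D_\bfv f(\mathbf 0)}$ when $\bfv\in\linearstructure(f)$, and $0$ otherwise. Therefore
\begin{align}
|\hat f(\bfu)|^2=2^{-n}\sum_{\bfv\in\linearstructure(f)}(-1)^{\bfu\cdot\bfv+\ell(\bfv)},\qquad \ell(\bfv)\equiv D_\bfv f(\mathbf 0).
\end{align}

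\textbf{Step 3 (character sum).} Check that $\ell$ is linear on $\linearstructure(f)$. From the quadratic form, $\ell(\bfv+\bfw)=\ell(\bfv)+\ell(\bfw)+B(\bfv,\bfw)$, and $B(\bfv,\bfw)=0$ whenever $\bfv$ lies in the radical. So $\bfv\mapsto\bfu\cdot\bfv+\ell(\bfv)$ is a character of the subspace $\linearstructure(f)$. Its sum is $|\linearstructure(f)|=2^{\dim\linearstructure(f)}$ if the character is trivial, and $0$ otherwise. This gives $|\hat f(\bfu)|\in\{0,2^{(\dim\linearstructure(f)-n)/2}\}$.

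\textbf{Step 4 (support size and $l_1$-sum).} Parseval gives $\sum_\bfu|\hat f(\bfu)|^2=1$. Every nonzero value is $2^{\dim\linearstructure(f)-n}$, so the support has size $2^{n-\dim\linearstructure(f)}$. Multiplying the support size by the common modulus gives $\sum_\bfu|\hat f(\bfu)|=2^{(n-\dim\linearstructure(f))/2}$.

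\textbf{Step 5 (complexity).} Read off the $n\times n$ matrix of $B$ and compute its kernel $\linearstructure(f)$ by Gaussian elimination in $\mathcal O(n^3)$. Decide membership of $\bfu$ in the support by checking $\bfu\cdot\bfv=\ell(\bfv)$ on a kernel basis. The phase of $\hat f(\bfu)$ (not just its modulus) would come from reducing $f$ to Dickson normal form by a change of basis, also $\mathcal O(n^3)$.

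The main obstacle is Step 5's exact phase computation. The modulus and support claims are pure linear algebra, so I would rely on the cited Dickson-form/stabilizer-formalism algorithm for the phase rather than rederive it.
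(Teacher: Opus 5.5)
The paper does not prove this proposition. It is quoted from \cite{charpin2005,kacsikci2016,bravyi2019} and used only as a black box in the hypergraph $l_1$-norm computation, so there is no proof of the paper's to compare against.

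Your argument is correct and is the standard derivation behind those references. Squaring turns $\hat f(\bfu)$ into an autocorrelation sum. The quadratic degree makes each $D_{\bfv}f$ affine with linear part $B(\cdot,\bfv)$, which gives $\mathrm{LS}(f)=\mathrm{rad}(B)$. The polarization identity $\ell(\bfv+\mathbf{w})=\ell(\bfv)+\ell(\mathbf{w})+B(\bfv,\mathbf{w})$ then makes $\bfv\mapsto\bfu\cdot\bfv+\ell(\bfv)$ a character on that radical. Each step checks out under the paper's normalization $\hat f(\bfu)=2^{-n}\sum_{\bfa}(-1)^{\bfu\cdot\bfa+f(\bfa)}$, for which Parseval indeed gives $\sum_{\bfu}\hat f(\bfu)^2=1$.

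Step 3 already gives the support directly as the coset $\{\bfu:\bfu\cdot\bfv=\ell(\bfv)\ \forall\bfv\in\mathrm{LS}(f)\}$ of $\mathrm{LS}(f)^{\perp}$. That coset has size $2^{n-\dim\mathrm{LS}(f)}$, so Parseval is a consistency check rather than a necessity.

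The only part you leave underived is the sign of $\hat f(\bfu)$ in $\mathcal{O}(n^3)$ time. Since $\hat f(\bfu)$ is real, your ``phase'' is just $\pm1$. You delegate this to the symplectic/Dickson reduction in the cited works, which puts you on the same footing as the paper for that claim.
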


Now, we consider a third-ordered hypergrpah state $\ket{\psi}=\ket{G(V,E)}$. Then its Pauli $l_1$ norm is 
\begin{align}
	\|\psi\|_1=\widetilde{M}_{\frac{1}{2}}(\psi)&=\frac{1}{4^n}\sum_{\bfx\in \bfF^n_2}\sum_{\bfy\in \bfF^n_2}\left|\sum_{\bfa\in \bfF^n_2}(-1)^{P_G(\bfx+\bfa)+P_{G}(\bfa)+\bfa\cdot \bfy}\right|=\frac{1}{2^n}\sum_{\bfx\in \bfF^n_2}\sum_{\bfy\in \bfF^n_2}\left|\widehat{D_{\bfx}P_G}(\bfy)\right|\nonumber\\&=\frac{1}{2^n}\sum_{\bfx\in \bfF^n_2}2^{\frac{n-{\rm dim}\left(\linearstructure\left(\widehat{D_{\bfx}P_G}\right)\right)}{2}}=\mathbb{E}_{\bfx\in \bfF^n_2}\left\{2^{\frac{n-{\rm dim}\left(\linearstructure\left(\widehat{D_{\bfx}P_G}\right)\right)}{2}}\right\}.
\end{align}
Moreover, by Eq.~\eqref{eq:def_SRE}, we obtain that $\widetilde{M}_{0}(\psi)=\mathbb{E}_{\bfx\in \bfF^n_2}\left\{2^{n-{\rm dim}\left(\linearstructure\left(\widehat{D_{\bfx}P_G}\right)\right)}\right\}.$

\begin{figure}[t]
	\includegraphics[width=6.5cm]{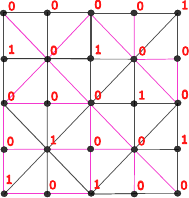}
	\caption{Graph representation (pink edges) corresponding to the adjacency matrix $N_{G}(\bfx)$ from the $25$ qubit Union Jack state~\cite{zhu2019}. Here, $\bfx$ is chosen following each qubit's $\bfF_2$ value (number next to each vertex). }\label{fig:Union_Jack_directional_derivative}
	
\end{figure}

Given that $P_G(\bfa)=\sum_{i,j,k\in [n]}c_{ijk}a_ia_ja_k$ is of third-degree, $D_{\bfx} P_G$ is of second-degree. Therefore we notice that the $\linearstructure (D_\bfx P_G)$ is a null space of some matrix defined by $D_{\bfx}P_G$, say $N_G(\bfx)$. Explicitly, we conclude that by the rank-nullity theorem and the convexity of exponential functions ($\because$ base is larger than $1$),
\begin{align}\label{eq:linear_entropy_hypergrpah_bound}
	\widetilde{M}_{\frac{1}{2}}(\psi)=\mathbb{E}_{\bfx\in \bfF^n_2}\left\{2^{\frac{\rank(N_G(\bfx))}{2}}\right\}\ge 2^{\frac{1}{2}\mathbb{E}_{\bfx\in \bfF^n_2}\left\{\rank (N_G(\bfx))\right\}},
\end{align}
where $\rank$ indicates the rank over $\bfF_2$ (i.e., binary rank) and the $n$ by $n$ matrix $N_G(\bfx)$ is defined by,
\begin{align}
	N_G(\bfx)_{m,k}=\sum_{(i,j),c_{(i,j,m)}=1}\left(\delta_{k,i}x_j+\delta_{k,j}x_i\right)=\sum_{i,c_{(i,m,k)}=1}x_i.
\end{align}

Here, $(i,j,k)$ does not differ by the translation of indices. We note that its collections for all $\bfx\in \bfF^n_2$ are subspace of the space of hollow-symmetric matrices. If we regard the matrix $N_G(\bfx)$ as the adjacency matrix of some graph, we can obtain a graph representation of the $N_G(\bfx)$. As we see Fig.~\ref{fig:Union_Jack_directional_derivative}, an edge on each side of triangular faces is determined by the $\bfF_2$ value of opposite vertices (qubits). If the opposite vertices have an even number of $1$'s, then the edge vanishes. Following that, we can interpret the expectation over uniform binary string $\bfx\in \bfF^n_2$ as the expectation over uniform random graphs whose edges only reside on the sides of CCZ-connections.
Therefore, the lower and upper bounds of the estimation variance of $1$-DFE are written by (ignoring $\braket{\psi|\rho|\psi}$), 
\begin{align}\label{eq:var_graph_bound}
	2^{\mathbb{E}_{\bfx\in \bfF^n_2}\left\{\rank (N_G(\bfx))\right\}}\le{\rm Var}\left(\rho,\psi,\frac{1}{2}\right)\le\mathbb{E}_{\bfx\in \bfF^n_2}2^{\rank(N_G(\bfx))},
\end{align}
where the convexity of the square function is used for an upper bound.

From Eq.~\eqref{eq:linear_entropy_hypergrpah_bound}, we can estimate the $\widetilde{M_{\frac{1}{2}}}(\psi)$. To be specific, we can sample $M$ copies of $\bfx$'s uniformly and independently (say $\bfx_1,\bfx_2,\ldots,\bfx_M$), then we take the $\alpha=\frac{1}{n}\log_2\left(\frac{1}{M}\sum_{i=1}^{M}2^{\frac{\rank(N_G(\bfx_i))}{2}}\right)$ which leads to the final estimation $2^{\alpha n}$ after fitting with various $n$'s. We note that each $\rank$ in the summation can be efficiently calculated in $\mathcal{O}(n^3)$ time. 

Moreover, we can exactly calculate the $\frac{1}{2}$-SRE of complete hypergraph states. It is previously considered in Ref.~\cite{chen2024}, and we present a different approach based on random graph theory. Let us give a detail. 
Let $\mathcal{E}^{(n)}_c$~\cite{chen2024} be the ensemble of random $c$-uniform hypergraph with $n$-vertices (qubits). We also define $\mathbb{E}_{G\sim \mathcal{E}^{(n)}_c}$ as average value over randomly chosen graphs $G$ from $\mathcal{E}^{(n)}_c$. Next, we need a lemma as follows: 
\begin{lemma}\label{lem:uniform_choosing_2_graph}
	Suppose that $G$ is a complete $3$-hypergraph. Given a real-valued function $f$ having $N_G(\bfx)$ as the argument,
	\begin{align}
		\mathbb{E}_{\bfx\in \bfF^n_2}\left\{f(N_G(\bfx))\right\}=\mathbb{E}_{G'\in \mathcal{E}^{(n)}_2}\left\{f(N_{G'})\right\},
	\end{align}
	where the $N_{G'}$ denotes the adjacency matrix of the graph $G'$.
\end{lemma}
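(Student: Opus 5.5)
The approach is to show that $\bfx\mapsto N_G(\bfx)$ is an $\bfF_2$-linear map and that its pushforward of the uniform measure on $\bfF_2^n$ is the random-graph law on the right-hand side. Once the two laws agree, the identity holds for every real-valued $f$, since both sides are then integrals of $f$ against the same distribution.

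\textbf{Step 1: explicit form of $N_G(\bfx)$.} From the definition after Eq.~\eqref{eq:linear_entropy_hypergrpah_bound}, $N_G(\bfx)_{m,k}=\sum_{i:\,c_{(i,m,k)}=1}x_i$. For a complete $3$-hypergraph, $c_{(i,m,k)}=1$ exactly when $i,m,k$ are pairwise distinct. Hence the diagonal vanishes, and for $m\neq k$
\begin{align}
N_G(\bfx)_{m,k}=\sum_{i\notin\{m,k\}}x_i=|\bfx|+x_m+x_k \pmod 2 .
\end{align}
This matches the picture in Fig.~\ref{fig:Union_Jack_directional_derivative}: edge $\{m,k\}$ is present iff the opposite vertices carry an odd number of $1$'s.

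\textbf{Step 2: linearity and uniform pushforward.} The map $L:\bfx\mapsto N_G(\bfx)$ is $\bfF_2$-linear into the space of hollow symmetric matrices, as the paper already remarks. The pushforward of the uniform measure on $\bfF_2^n$ under a linear map is uniform on ${\rm Im}(L)$, because every fibre is a coset of $\ker L$ and so has size $2^{\dim\ker L}$. Therefore
\begin{align}
\mathbb{E}_{\bfx\in\bfF_2^n}f(N_G(\bfx))=\frac{1}{|{\rm Im}L|}\sum_{N\in{\rm Im}L}f(N).
\end{align}
I will also compute $\ker L$ directly from Step 1. The map $\bfx\mapsto(|\bfx|+x_m+x_k)_{m<k}$ vanishes only at $\bfx=0$, and for $n\ge 4$ also at the all-ones vector when $n$ is odd. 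This pins down $|{\rm Im}L|$.

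\textbf{Step 3: identification with $\mathcal{E}^{(n)}_2$.} The remaining task is to recognise $\mathcal{E}^{(n)}_2$, the random $2$-uniform ensemble in the sense inherited from Ref.~\cite{chen2024} for derivative graphs of CCZ structures, as the uniform law on the graphs whose edges are generated by the CCZ sides, namely ${\rm Im}L$. Concretely, I would write a graph in ${\rm Im}L$ as $N=J\cdot s+\bfx\mathbf{1}^\top+\mathbf{1}\bfx^\top$ with the diagonal removed, where $J$ is the all-ones matrix and $s=|\bfx|\bmod 2$. I would then check that a uniformly random $\bfx$ produces exactly the ensemble's sampling rule. This rule is a parity-shifted complete-bipartite graph between $\{i:x_i=1\}$ and its complement, or its complement graph, and the two options are equally likely.

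\textbf{Main obstacle.} I expect Step 3 to be the hard part. Steps 1 and 2 are mechanical, but the equality holds only once the ensemble is read as the uniform measure on the CCZ-side-generated graph space. If $\mathcal{E}^{(n)}_2$ were instead taken to be the uniform law on all $2^{\binom{n}{2}}$ graphs, the supports would differ as soon as $\binom{n}{2}>n$, and the lemma could hold only for special $f$. I would therefore first pin down the ensemble definition from Ref.~\cite{chen2024} and match it against the explicit image computed above. Once that is done, the equality of expectations follows immediately from Step 2.
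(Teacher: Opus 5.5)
Your Steps 1 and 2 are correct, and you were right that Step 3 is where things go wrong. The problem there is fatal, though, not something a generous reading of the definition can fix. The ensemble $\mathcal{E}^{(n)}_2$ cannot be reinterpreted. The paper's own proof identifies it with the $\frac12$-Erd\H{o}s--R\'enyi graph, in which all $\binom n2$ edges are independent and each is present with probability $\frac12$. The corollary that uses the lemma normalizes the MacWilliams count by $2^{-n(n-1)/2}$, so it needs the rank law of a uniformly random hollow symmetric matrix. Declaring $\mathcal{E}^{(n)}_2$ to be the uniform law on ${\rm Im}L$ makes the lemma a tautology that can no longer serve that purpose.

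With the actual definition, your Step 2 refutes the statement for every $n\ge4$, because $N_G(\bfx)$ takes at most $2^n<2^{\binom n2}$ values. There is one small slip: $N_G(\mathbf 1)_{m,k}=n-2$, so $\mathbf 1\in\ker L$ exactly when $n$ is even, not odd. Hence $|{\rm Im}L|=2^n$ for odd $n$ and $2^{n-1}$ for even $n$, for $n\ge3$.

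Here is a concrete counterexample at $n=4$. The eight graphs in ${\rm Im}L$ are the empty graph, the three copies of $K_{2,2}$, and the four ``triangle plus isolated vertex'' graphs, all of binary rank at most $2$. Take $f$ to be the indicator of $\rank(N)\le 2$. The left side is $1$, while the right side is $(1+N(4,2))/2^6=36/64$.

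For larger $n$ the mismatch hits exactly the rank statistics the paper uses. Every even-weight $\bfx$ yields the cut matrix $\bfx\mathbf 1^\top+\mathbf 1\bfx^\top$, of rank at most $2$. That event has probability at least $\frac12$ on the left, but only of order $2^{2n-n(n-1)/2}$ under the uniform ensemble.

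The paper's proof breaks at the same point. It correctly observes that each edge bit $e_{mk}=\sum_{i\notin\{m,k\}}x_i$ is uniform for $n\ge3$; the bits are even pairwise independent. It then asserts that all edges occur independently, which is false: for $n\ge4$ one has $e_{12}+e_{34}+e_{13}+e_{24}=0$ identically.

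So the correct outcome of your analysis is a counterexample, not a proof. The identity holds for $n=3$, where the three edge bits are just $x_3,x_2,x_1$, and fails for all $n\ge4$. The bounds in Eq.~\eqref{eq:complete_graph_var} rest on the lemma, so they would have to be recomputed from the law of $N_G(\bfx)$ on ${\rm Im}L$.
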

\begin{proof}
	We remember that $N_G(\bfx)$ is a $2$-graph. Consider the probability of $N_G(\bfx)$ that an edge occurs on the $(i,j)$-th vertices. The opposite vertices, which are endowed uniform $(0,1)$-values, determine the edge occurrence. In other words, if the even (odd resp.) numbers of opposite vertices get the $1\;(0)$, then the edge is (not) formed. Therefore, no matter how many of the opposite vertices there are, the corresponding edge occurrence has a half probability. Finally, since we know that all edge occurs independently with the same probability ($\frac{1}{2}$-Erdos-Renyi graph~\cite{erdos1963}), we conclude that edge occurs uniformly.  
\end{proof}

\begin{figure}[t]
	\includegraphics[width=\columnwidth]{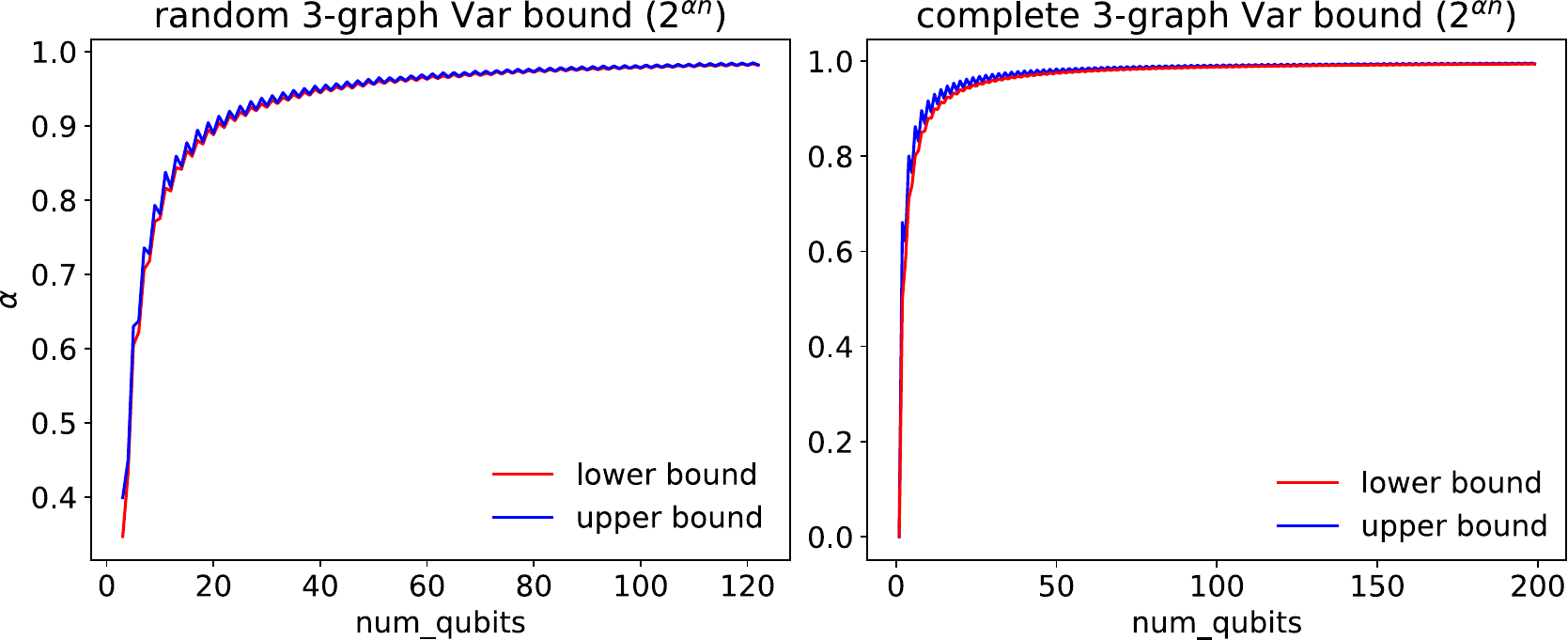}
	\caption{Upper and lower bounds of the variance of $\frac{1}{2}$-DFE for random- and complete-third-ordered hypergraph states. First graph used Eq.~\eqref{eq:var_graph_bound} with $2000$-random copies, and the second used Eq.~\eqref{eq:complete_graph_var}.} \label{fig:random_hypergraph_state_MSE_bound}
	
\end{figure}

The problem is reduced to finding the rank distribution of random hollow-symmetric matrices. To do so, we need the following known result, 
\begin{proposition}~\cite{macwilliams1977}\label{prop:macwilliams}
	Let $N(n,r)$ be the number of hollow-symmetric binary (or symplectic) matrices having the binary rank $r$. Given $h\in\mathbb{N}\cup \left\{0\right\}$, the following holds. 
	\begin{align}\label{eq:prob_hollow_symmetric_rank}
		N(n,2h+1)=0,\;N(n,2h)=\prod_{i=1}^{h}\frac{2^{2i-2}}{2^{2i}-1}\cdot \prod_{i=0}^{2h-1}(2^{n-i}-1).
	\end{align}
\end{proposition}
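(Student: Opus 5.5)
We would identify hollow-symmetric binary matrices with alternating bilinear forms on $\bfF_2^n$, and then count the forms of each rank by splitting a form into its radical and a nondegenerate form on the quotient. Over $\bfF_2$, a symmetric matrix $A$ with zero diagonal is exactly the Gram matrix of a bilinear form $B(\bfu,\bfv)=\bfu^{\top}A\bfv$ satisfying $B(\bfu,\bfu)=\sum_i A_{ii}u_i^2+2\sum_{i<j}A_{ij}u_iu_j=0$, i.e.\ an alternating (symplectic) form. The rank of $A$ equals $n-\dim{\rm rad}(B)$, where ${\rm rad}(B)=\ker A$.

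\emph{Odd ranks vanish.} The standard symplectic Gram--Schmidt argument shows that a nondegenerate alternating form on $\bfF_2^n/{\rm rad}(B)$ has a symplectic basis of hyperbolic pairs $(\bfe_i,\bfv_i)$ with $B(\bfe_i,\bfv_i)=1$. Hence the quotient has even dimension, and $N(n,2h+1)=0$.

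\emph{Counting rank $2h$.} A rank-$2h$ alternating form is determined by two pieces of data:
\begin{itemize}
\item its radical $R$, a subspace of dimension $n-2h$;
\item a nondegenerate alternating form on the $2h$-dimensional quotient $\bfF_2^n/R$.
\end{itemize}
Conversely, any such pair yields a unique form by pulling back. So
\[
N(n,2h)=\binom{n}{2h}_{2}\cdot A(2h),
\]
where $\binom{n}{2h}_2$ is the Gaussian binomial and $A(2h)$ is the number of nondegenerate alternating forms on $\bfF_2^{2h}$.

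By the existence of symplectic bases, ${\rm GL}(2h,2)$ acts transitively on nondegenerate forms, and the stabilizer is ${\rm Sp}(2h,2)$. Therefore $A(2h)=|{\rm GL}(2h,2)|/|{\rm Sp}(2h,2)|$, using the standard orders
\[
|{\rm GL}(2h,2)|=2^{h(2h-1)}\prod_{i=1}^{2h}(2^i-1),\qquad |{\rm Sp}(2h,2)|=2^{h^2}\prod_{i=1}^{h}(2^{2i}-1).
\]
We also use
\[
\binom{n}{2h}_2=\frac{\prod_{i=0}^{2h-1}(2^{n-i}-1)}{\prod_{i=1}^{2h}(2^i-1)}.
\]

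\emph{Assembly.} Multiplying these, the factor $\prod_{i=1}^{2h}(2^i-1)$ cancels, leaving
\[
N(n,2h)=2^{h^2-h}\,\frac{\prod_{i=0}^{2h-1}(2^{n-i}-1)}{\prod_{i=1}^h(2^{2i}-1)}.
\]
Since $\prod_{i=1}^h 2^{2i-2}=2^{h(h+1)-2h}=2^{h^2-h}$, this is exactly the claimed formula.

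\emph{Main obstacle.} The only nontrivial input is $|{\rm Sp}(2h,2)|$. If we do not want to cite it, we would instead count $A(2h)$ directly by recursion:
\begin{itemize}
\item choose $\bfe_1\neq 0$ ($2^{2h}-1$ ways);
\item choose $\bfv_1$ with $B(\bfe_1,\bfv_1)=1$;
\item pass to the orthogonal complement of ${\rm span}(\bfe_1,\bfv_1)$;
\item divide by the overcount of ordered symplectic bases.
\end{itemize}
Equivalently, one can verify the recursion $A(2h)=2^{2h-2}(2^{2h-1}-1)\,A(2h-2)$ by counting the possible first rows and columns of the Gram matrix. Getting these powers of two right is the step most likely to need care.
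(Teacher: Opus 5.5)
Your proof is correct, and it takes a different route from what the paper relies on. The paper gives no argument of its own and simply imports the count from \cite{macwilliams1977}. The classical derivation of this formula is a bordering recursion in $n$: adjoin a new row and column $v$ to an $n\times n$ hollow-symmetric matrix of rank $r$. The rank is unchanged if $v$ lies in its column space ($2^{r}$ choices) and increases by exactly two otherwise. This gives $N(n+1,2h)=2^{2h}N(n,2h)+(2^{n}-2^{2h-2})N(n,2h-2)$, and one then checks that the stated closed form satisfies it.

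You instead split a rank-$2h$ form into its radical and a nondegenerate form on the quotient, so that $N(n,2h)=\binom{n}{2h}_2\,|\mathrm{GL}(2h,2)|/|\mathrm{Sp}(2h,2)|$. The bijection, the group orders and the cancellation you write down are all right: they give $N(4,2)=35$ and $N(4,4)=28$, which together with $N(4,0)=1$ sum to $2^{6}$.

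Your fallback recursion $A(2h)=2^{2h-2}(2^{2h-1}-1)A(2h-2)$ also holds, which settles the step you flagged as needing care. First move the nonzero first row to $\mathbf{e}_2^{\top}$ by a basis change on the last $2h-1$ coordinates. Then the $2h-2$ entries coupling $\mathbf{e}_2$ to the rest are free. The Schur complement of the leading hyperbolic $2\times 2$ block is just the remaining block, so nondegeneracy reduces to size $2h-2$.

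The two routes trade off as follows. Yours explains where each factor comes from (a Gaussian binomial for the radical, a homogeneous-space count for the nondegenerate part) and works verbatim for alternating forms over any $\mathbb{F}_q$. The price is quoting $|\mathrm{Sp}(2h,2)|$ or running the small recursion. The bordering recursion needs only the elementary rank-change observation, but it gives no hint of the closed form, which must be known in advance and verified by induction.
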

We can see that such matrices cannot have an odd rank.

Starting from Eq.~\eqref{eq:linear_entropy_hypergrpah_bound} with the convexity of exponential function,
we can also set the upper bound as follows, including the previous lower bound after the calculation followed by Eq.~\eqref{eq:prob_hollow_symmetric_rank}, 
\begin{corollary}
	For the $n$-qubit complete $3$-hypergraph state, the estimation variance of $1$-DFE is,
	\begin{align}\label{eq:complete_graph_var}
		2^{\frac{1}{2}\sum_{h=0}^{\lfloor\frac{n}{2}\rfloor}2hr(n,h)}\le{\rm Var}\le \sum_{h=0}^{\lfloor\frac{n}{2}\rfloor}r(n,h)2^{2h},
	\end{align}
	where $r(n,h)\equiv N(n,2h)2^{-\frac{n(n-1)}{2}}$ (see Prop.~\ref{prop:macwilliams} for the definition of $N$).
\end{corollary}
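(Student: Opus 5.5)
The plan is to start from the two-sided bound in Eq.~\eqref{eq:var_graph_bound}. For a third-order hypergraph state it sandwiches the DFE variance between $2^{\mathbb{E}_{\bfx}\{\rank(N_G(\bfx))\}}$ and $\mathbb{E}_{\bfx}\,2^{\rank(N_G(\bfx))}$. The lower bound there comes from Jensen applied to Eq.~\eqref{eq:linear_entropy_hypergrpah_bound}; the upper bound comes from convexity of the square. This reduces the corollary to computing two statistics of the binary rank of the random matrix $N_G(\bfx)$ for the complete $3$-hypergraph $G$. Both statistics are functions $f(N_G(\bfx))$ of the matrix alone: $f=\rank$ and $f=2^{\rank}$.

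Next I will invoke Lem.~\ref{lem:uniform_choosing_2_graph} to replace $\mathbb{E}_{\bfx\in\bfF_2^n}$ by the expectation over the uniform ensemble $\mathcal{E}^{(n)}_2$ of simple graphs on $n$ vertices. The adjacency matrix of a uniform graph is a uniformly random hollow-symmetric binary matrix. There are $2^{n(n-1)/2}$ such matrices, so the probability of rank exactly $r$ is $N(n,r)\,2^{-n(n-1)/2}$.

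Prop.~\ref{prop:macwilliams} then does the rest. Odd ranks have probability zero, and rank $2h$ occurs with probability $r(n,h)=N(n,2h)2^{-n(n-1)/2}$. Since the rank is at most $n$, the index runs over $0\le h\le\lfloor n/2\rfloor$. Substituting gives
\begin{align}
\mathbb{E}\{\rank\}=\sum_{h=0}^{\lfloor n/2\rfloor}2h\,r(n,h),\qquad \mathbb{E}\{2^{\rank}\}=\sum_{h=0}^{\lfloor n/2\rfloor}r(n,h)\,2^{2h}.
\end{align}
Plugging these into Eq.~\eqref{eq:var_graph_bound} yields Eq.~\eqref{eq:complete_graph_var}. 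As a sanity check, $\sum_h r(n,h)=1$ should hold, because the $N(n,2h)$ count all hollow-symmetric matrices.

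The step I expect to be delicate is the identification of the law of $N_G(\bfx)$ with the uniform ensemble. For the complete hypergraph, the $(m,k)$ entry is $\sum_{i\ne m,k}x_i$, which equals $s+x_m+x_k$ with $s=|\bfx| \bmod 2$. Each entry is therefore marginally a fair coin, but the entries are generated by only $n$ bits, so joint independence across edges is not automatic. I will take Lem.~\ref{lem:uniform_choosing_2_graph} as given, as the paper permits. If a self-contained justification is needed, I would first check directly which rank distribution the structured family $\{s\mathbf{J}+\bfx\mathbf{1}^\top+\mathbf{1}\bfx^\top\}$, with the diagonal zeroed, actually induces, where $\mathbf{J}$ is the all-ones matrix and $\mathbf{1}$ the all-ones vector. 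I would then compare that distribution with the MacWilliams weights $r(n,h)$.

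A related bookkeeping point is the variance normalization. The bound in Eq.~\eqref{eq:var_graph_bound} is written for $l_1$-sampling, whereas the corollary's text says $1$-DFE. I will state the result for the quantity bounded in Eq.~\eqref{eq:var_graph_bound} and drop the $\braket{\psi|\rho|\psi}^2\le1$ term, as the paper does.
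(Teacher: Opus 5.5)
Your route is the paper's route: the sandwich of Eq.~\eqref{eq:var_graph_bound}, then Lem.~\ref{lem:uniform_choosing_2_graph} to pass to the uniform ensemble $\mathcal{E}^{(n)}_2$, then Prop.~\ref{prop:macwilliams}. But the step you flagged is worse than delicate. For $n\ge 4$ it is false, and so is the lemma you rely on. That lemma's proof only shows each entry of $N_G(\bfx)$ is marginally a fair coin; the asserted independence across edges is never justified. It holds at $n=3$, where $\bfx\mapsto N_G(\bfx)$ is a bijection, and fails for $n\ge 4$.

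With $s=|\bfx| \bmod 2$ and $N_G(\bfx)_{mk}=s+x_m+x_k$, every $4$-cycle sum $N_{m_1m_2}+N_{m_2m_3}+N_{m_3m_4}+N_{m_4m_1}$ vanishes identically. Also, $N_G(\bfx)$ takes at most $2^n<2^{n(n-1)/2}$ values, so it cannot be uniform. Carrying out the check you proposed: over $\bfF_2$, $N_G(\bfx)=s(\mathbf{J}+I)+\bfx\mathbf{1}^\top+\mathbf{1}\bfx^\top$.
\begin{itemize}
\item For $s=0$ the rank is $2$, except rank $0$ when $\bfx\in\{\mathbf{0},\mathbf{1}\}$.
\item For $s=1$ we have $N_G(\bfx)\bfv=\bfv+(\mathbf{1}\cdot\bfv)(\mathbf{1}+\bfx)+(\bfx\cdot\bfv)\mathbf{1}$. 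The kernel is $\mathrm{span}\{\bfx,\mathbf{1}\}$ for even $n$ (rank $n-2$) and $\mathrm{span}\{\bfx\}$ for odd $n$ (rank $n-1$).
\end{itemize}
This bimodal law has nothing to do with the weights $r(n,h)$, which concentrate near full rank.

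Consequently the identities $\mathbb{E}_{\bfx}\{\rank(N_G(\bfx))\}=\sum_h 2h\,r(n,h)$ and $\mathbb{E}_{\bfx}\{2^{\rank(N_G(\bfx))}\}=\sum_h r(n,h)2^{2h}$ are false, and ``plugging in'' proves nothing. At $n=4$ the true values are $7/4$ and $29/8$, whereas the MacWilliams sums give $182/64$ and $589/64$.

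Note also that Eq.~\eqref{eq:var_graph_bound} has lower bound $2^{\mathbb{E}\{\rank\}}$, without the $\tfrac12$ of the corollary. Your chain would therefore assert $\|\psi_4\|_1^2\ge 2^{182/64}\approx 7.2$ for the $4$-qubit complete $3$-hypergraph state $\psi_4$. In fact $\|\psi_4\|_1=15/8$, so $\|\psi_4\|_1^2\approx 3.5$, and $\|\psi_4\|_0=29/8$. The substitution thus produces a false intermediate inequality.

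The stated corollary appears to survive only through slack: the extra $\tfrac12$ in the exponent and a generous upper bound. A valid proof must use the exact law above:
\begin{itemize}
\item for even $n$, $\|\psi_n\|_1=1-2^{1-n}+2^{n/2-2}$ and $\mathbb{E}_{\bfx}2^{\rank}=2^{n-3}+2-6\cdot 2^{-n}$;
\item for odd $n$, $\|\psi_n\|_1=1-2^{-n}+2^{(n-3)/2}$ and $\mathbb{E}_{\bfx}2^{\rank}=2^{n-2}+2-3\cdot 2^{-n}$.
\end{itemize}
You would then compare these directly with the two MacWilliams expressions. For the upper bound, use e.g. $\sum_h r(n,h)2^{2h}\ge r(n,\lfloor n/2\rfloor)2^{2\lfloor n/2\rfloor}$. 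For the lower bound, use $2^{\frac12\sum_h 2h r(n,h)}\le 2^{\lfloor n/2\rfloor}$, checking small cases such as $n=4$ by hand. That is a different argument from the one you (and the paper) give.
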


Fig.~\ref{fig:random_hypergraph_state_MSE_bound} shows the upper and lower bounds of the averaged variance over uniformly random third-ordered hypergraph states and complete hypergraph states. We can see the two lines are very close to each other and converge to $1$. Therefore, for the third-ordered random (or complete) hypergraph state cases, the $\frac{1}{2}$-DFE sampling overhead is $\mathcal{O}(2^n)$.

\section{Fan-out-based fidelity estimation scheme I: Framework}

The next two sections show the complete and algorithmic procedure for the fan-out-based fidelity estimation (FOFE) scheme, hence proving Theorem~$1$ in the main text. 
First, we recall the definition of the $n$-qubit \emph{phase state} $\ket{\eta}\equiv D(\phi_\eta)\ket{+}^{\otimes n}$ where the assigned mapping $\phi_\eta:\bfF^n_2\rightarrow [0,2\pi]$ such that the diagonal gate (or unitary) $D(\phi_\eta)$ is defined as $D(\phi_{\eta})\equiv \sum_{\bfx\in \bfF_2^n}e^{i\phi_\eta(\bfx)}\ket{\bfx}\bra{\bfx}$. This can be rewritten as density matrix as follows, 
\begin{align}\label{eq:phase_state_expansion}
	\ket{\eta}\bra{\eta}= \frac{1}{2^n}\sum_{\bfa\in \bfF_2}D(\phi_\eta)X^{\bfa}D^{\dag}(\phi_\eta)=\frac{1}{2^n}\sum_{
		\bfa\in \bfF_2^n
	}X^{\bfa}(X^{\bfa}D(\phi_\eta)X^{\bfa })D^{\dag}(\phi_\eta). 
\end{align}
It is well-known that~\cite{bravyi2019, huangH2023} diagonal gate $D(\phi_\eta)$ is re-expressed as the Pauli-$Z$ Hamiltonian evolution, i.e., $D(\phi_\eta)=e^{i\sum_{\bfa\in \bfF_2^n}\kappa^{(\eta)}_\bfa Z^{\bfa}}$, where $\forall \kappa^{(\eta)}_{\bfa}\in \mathbb{R}$. It means that $X^{\bfb}D(\phi_\eta)X^{\bfb}=e^{i\sum_{\bfa\in \bfF_2^n}c^{(\eta)}X^{\bfb}Z^{\bfa}X^{\bfb}}=e^{i\sum_{\bfa\in \bfF_2^n}(-1)^{\bfa\cdot \bfb}\kappa^{(\eta)}_\bfa Z^{\bfa}}$ is another diagonal gate. Since product of two diagonal gates is diagonal, we can regard $(X^{\bfa}D(\phi_\eta)X^{\bfa })D^{\dag}(\phi_\eta)$ as some diagonal gate $D^{(\bfa)}(\phi_\eta)$. 

We should note that the same phenomenon happens even when we twirl $D(\phi_\eta)$ with an \emph{arbitrary} $P\in \mathcal{P}_n$ or $T_{\bfa}\in \mathcal{P}_n/ \bfF_4\;(\bfa\in \bfF_2^{2n})$, since its Pauli $Z$-part just commutes with the diagonal gate. The \emph{phase stripping} $\ket{\psi}=\sum_{\bfx\in \bfF_2^{n}}\xi_{\bfx}\ket{\bfx}\mapsto \ket{\breve{\psi}}=\sum_{\bfx\in \bfF_2^n}|\xi_{\bfx}|\ket{\bfx}$ implies that arbitrary pure state $\ket{\psi}$ is some diagonal operation $D(\phi_{\psi})$ to \emph{phase stripped state} $\ket{\breve{\psi}}$ that is $\ket{\psi}=D(\phi_\psi)\ket{\breve{\psi}}$. We also consider its Pauli expansion $\ket{\breve{\psi}}\bra{\breve{\psi}}=\sum_{\bfa\in \bfF_2^{2n}}\breve{c}_{\bfa}T_{\bfa}\;(\forall \breve{c}_{\bfa}\in \mathbb{R})$. From now on, if it is not necessary, we omit the argument $(\phi_\psi)$ under the specification of the mapping $\phi_\eta$. We can generalize Eq.~\eqref{eq:phase_state_expansion} and substitute to the fidelity $\braket{\psi|\rho|\psi}$ which reads for $\alpha\in [0,1]$,
\begin{align}\label{eq:main_objective_arbitrary}
	\tr{\rho D\ket{\breve{\psi}}\bra{\breve{\psi}}D^{\dag}(\phi_\eta)}=2^{(1-\alpha)(M_{\alpha}(\breve{\psi})+n)-\alpha n}\sum_{\bfa\in \bfF_2^{2n}}\frac{|\breve{c}_{\bfa}|^{2\alpha}\cdot |\breve{c}_{\bfa}|^{-2\alpha+1}}{\sum_{\bfb\in \bfF_2^{2n}}|\breve{c}_\bfb^{2\alpha}|}\sgn(\breve{c}_{\bfa}) \cdot \frac{1}{2}\left\{\tr{D^{(\bfa)}\rho T_{\bfa}}+\tr{T_{\bfa}\rho D^{(\bfa)\dag}}\right\},
\end{align}
where $M_{\alpha}(\cdot)$ is $\alpha$-SRE (See Def.~\ref{eq:def_SRE}). We also used the fact that the fidelity is a real value, and hence the last equation is the realization of the previous complex equation. We also used the equalities $(\tr{A})^{*}=\tr{A^{\dag}}$, and $\tr{AB}=\tr{BA}$ for the arbitrary matrices $A,B$.

Let us say there is an unbiased estimator  of $\mathbb{E}(\widehat{\lambda}^{\bfa})\equiv \frac{1}{2}\left\{\tr{D^{(\bfa)}\rho T_{\bfa}}+\tr{T_{\bfa}\rho D^{(\bfa)\dag}}\right\}$, $\widehat{\lambda}^{(\bfa)}=\widehat{\lambda}_1^{(\bfa)}+\widehat{\lambda}_2^{(\bfa)}$ equipped with the two independent random variable $\lambda_1,\lambda_2$ following the probability $p_1^{(\bfa)}(\lambda),p_2^{(\bfa)}(\lambda)$ respectively, where we also defined $p^{(\bfa)}(\lambda)\equiv p_1^{(\bfa)}(\lambda)p_2^{(\bfa)}(\lambda)$. The proof of the existence of such an estimator is deferred to the next section. Along with those notations, we can promote the estimator to the unbiased estimator of $\tr{\rho D(\phi_\eta)\ket{\breve{\psi}}\bra{\breve{\psi}} D^{\dag}(\phi_\eta)}$ as follows. We sample $\bfa\in \bfF_2^{2n}$ from the distribution $\left\{\frac{|\breve{c}_{\bfa}|^{2\alpha}}{\sum_{\bfb\in \bfF_2^{2n}}|\breve{c}_\bfb^{2\alpha}|}\right\}_{\bfa\in \bfF_{2}^{2n}}$. Then we take the estimator as $2^{(1-\alpha)(M_{\alpha}(\breve{\psi})+n)-\alpha n} |\breve{c}_{\bfa}|^{-2\alpha+1} \sgn(c_{\bfa})\widehat{\lambda}^{(\bfa)}$. We leave the proof of unbiased estimation to readers. The estimation variance becomes, 
\begin{align}\label{eq:general_variance}
	{\rm Var}(\rho;\phi_\psi,\sigma)&\le\sum_{\bfa\in \bfF_2^{2n}}\frac{|\breve{c}_{\bfa}|^{2\alpha}\cdot |\breve{c}_{\bfa}|^{-4\alpha+2}}{\sum_{\bfb\in \bfF_2^{2n}}|\breve{c}_\bfb^{2\alpha}|}\sum_{\lambda}p^{(\bfa)}(\lambda)2^{2(1-\alpha)(M_{\alpha}(\breve{\psi})+n)-2\alpha n}\left(\widehat{\lambda}^{(\bfa)}\right)^2\nonumber\\&= \sum_{\bfa\in \bfF_2^{2n}}\frac{|\breve{c}_{\bfa}|^{2\alpha}\cdot |\breve{c}_{\bfa}|^{-4\alpha+2}}{\sum_{\bfb\in \bfF_2^{2n}}|\breve{c}_\bfb^{2\alpha}|}\sum_{\lambda}p^{(\bfa)}(\lambda)\left\{\left(\widehat{\lambda}^{(\bfa)}-\mathbb{E}(\widehat{\lambda}^{\bfa})\right)^2+\mathbb{E}(\widehat{\lambda}^{\bfa})^2\right\}2^{2(1-\alpha)(M_{\alpha}(\breve{\psi})+n)-2\alpha n}\nonumber\\&= 2^{(1-\alpha)(M_{\alpha}(\breve{\psi})+n)-\alpha n+1}\sum_{\bfa\in \bfF_2^{2n}}|\breve{c}_{\bfa}|^{-2\alpha+2}\left(\sum_{\lambda}p^{(\bfa)}(\lambda)\left\{\left(\widehat{\lambda}^{(\bfa)}-\mathbb{E}(\widehat{\lambda}^{\bfa})\right)^2+\mathbb{E}(\widehat{\lambda}^{\bfa})^2\right\}\right)\nonumber\\&=2\cdot 2^{(1-\alpha)M_{\alpha}(\breve{\psi})+\alpha M_{1-\alpha}(\breve{\psi})}
\end{align}
where we used the fact that $\forall \bfa\in \bfF_2^{2n}$, $  {\rm Var}(\rho;\phi_\eta,\bfa)\equiv\sum_{\lambda}p^{(\bfa)}(\lambda)\left(\widehat{\lambda}^{(\bfa)}-\mathbb{E}(\widehat{\lambda}^{\bfa})\right)^2\le 1$, which will be shown in the next section, and $\mathbb{E}(\widehat{\lambda}^{\bfa})\le \frac{1}{2}\max_{\psi:{\rm pure}}|\braket{\psi|T_{\bfa}D^{(\bfa)}+D^{(\bfa)\dag}T_{\bfa}|\psi}|\le 1$. The last expression becomes minimal when $\alpha=\frac{1}{2}$ ($\because$ Cor.~\ref{Cor:renyi_convex}). The scale factor $2$ is a rough bound. For instance, if the target state is a hypergraph state~\cite{morimae2017}, $|\widehat{\lambda}_{\bfa}|=1$.

$2^{(1-\alpha)M_{\alpha}(\breve{\psi})+\alpha M_{1-\alpha}(\breve{\psi})}$ is the core factor quantifying the estimation variance, and hence sampling complexity. Indeed this is because, by the Hoeffding inequality with the median of mean (MOM) estimation technique~\cite{Jerrum:1986random}, the required sampling copies to achieve the additive $\epsilon(\in (0,1])$-error is $\mathcal{O}\left(\frac{{2^{(1-\alpha)M_{\alpha}(\breve{\psi})+\alpha M_{1-\alpha}(\breve{\psi})}}}{\epsilon^2}\log\left(\frac{1}{\delta_f}\right)\right)$ with the failure probability $\delta_f\in (0,1]$. If $\alpha\rightarrow 1^{-}$, the sampling complexity becomes $\mathcal{O}\left(\frac{{2^{M_0(\breve{\psi})}}}{\epsilon^2}\log\left(\frac{1}{\delta_f}\right)\right)$. If $\alpha=\frac{1}{2}$, then it becomes $\mathcal{O}\left(\frac{{2^{M_1(\breve{\psi})}}}{\epsilon^2}\log\left(\frac{1}{\delta_f}\right)\right)=\mathcal{O}\left(\frac{{\|\psi\|_1^2}}{\epsilon^2}\log\left(\frac{1}{\delta_f}\right)\right)$.  In the next section, we genuinely show that such an estimator $\widehat{\lambda}$ exists and requires only $n$  $CNOT$-gates (\emph{fan-out gate}) with a single ancilla qubit.

The above arguments assume that we can efficiently sample the phase point $\bfa$ from the distribution {\small $\left\{\frac{|\breve{c}_{\bfa}|^{2\alpha}}{\sum_{\bfb\in \bfF_2^{2n}}|\breve{c}_\bfb^{2\alpha}|}\right\}_{\bfa\in \bfF_{2}^{2n}}$} and calculate $\sgn(\breve{c}_\bfa)$. Phase states, and Dicke states twirled by some diagonal gate, $D(\psi)\ket{{\rm Dic}(n,k)}$~($\because$ Prop.~\ref{prop:dicke_efficiency}), as the targets satisfy such a condition. In particular, for these cases, the Pauli group is partitioned by a small number of exponentially large collections of operators sharing the same coefficient. 
We could find other cases of target states satisfying the condition, while leaving it as a future problem. 

Given that the phase-stripped state can be easily implemented, we can apply the Bell sampling technique for efficient $ l_2$-sampling. Consider more general cases of target state, $\ket{\psi}=D(\phi)\bar{U}\ket{\mathbf{0}}$, where $D(\phi)$ is a diagonal gate and $\bar{U}$ is real-valued (orthogonal) unitary so that $\bar{U}\ket{\mathbf{0}}$ is still real in computational bases. There are the case where $\bar{U}$ is much easier to implement than $D(\phi)$, which hence occupies most of the magic of the target state. Some examples are when $\bar{U}$ is a product unitary, $\mathcal{O}(\log(n))$-unitary blocks~\cite{cerezo2020,vidal2003} or it generates some sparse and real-valued states~\cite{bartschi2022,zhang2022_qsp}. Furthermore, with these assumptions, $\sgn(\breve{c}_\bfa)$ is classically and efficiently calculated. The \emph{Bell sampling}~\cite{montanaro2017_ls} is formalized as,
\begin{proposition}[Bell-sampling]\cite{montanaro2017_ls}
	Given a target state $\ket{\psi}$, suppose we can prepare $\ket{\breve{\psi}}$ as input copies. Using $n$-CNOTs in $2n$-qubit system, we can sample $\bfa$ by the Born probability $\left\{\frac{
		\braket{\breve{\psi}|T_{\bfa}|\breve{\psi}}
		^2}{2^n}\right\}$ ($l_2$-sampling).
\end{proposition}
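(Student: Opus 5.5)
The plan is the standard Bell-sampling argument. Two copies of $\ket{\breve{\psi}}$ go into a transversal CNOT layer, and we compute the Born distribution of a computational-basis measurement.

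First, prepare $\ket{\breve{\psi}}\otimes\ket{\breve{\psi}}$ on registers $A,B$. Apply $\prod_{i=1}^n CNOT_{A_i\to B_i}$, which uses exactly $n$ CNOTs. Then apply $H^{\otimes n}$ on register $A$. This is single-qubit Clifford and not entangling, so it does not count against the $n$-CNOT budget. Measure both registers in the computational basis to get $(\bfu,\bfv)\in\bfF_2^{2n}$. Up to a fixed relabeling, this circuit measures in the Bell basis $\ket{\Phi_{\bfa}}\equiv (T_{\bfa}\otimes I)\ket{\Phi^+}^{\otimes n}$, where $\ket{\Phi^+}^{\otimes n}=2^{-n/2}\sum_{\bfx}\ket{\bfx}\ket{\bfx}$. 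The identification is $\bfa_x=\bfv$ and $\bfa_z=\bfu$; a $Y$ phase $i^{a_{ix}a_{iz}}$ is only a global phase per basis vector and does not affect probabilities.

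Second, compute the outcome probability. The key identity is the transpose trick $(I\otimes M)\ket{\Phi^+}^{\otimes n}=(M^{\top}\otimes I)\ket{\Phi^+}^{\otimes n}$. It gives
\begin{align}
\braket{\Phi_{\bfa}|\breve{\psi}\otimes\breve{\psi}}=2^{-n/2}\sum_{\bfx}\bra{\bfx}T_{\bfa}^{\dag}\ket{\breve{\psi}}\braket{\bfx|\breve{\psi}}=2^{-n/2}\braket{\breve{\psi}^{*}|T_{\bfa}^{\dag}|\breve{\psi}}.
\end{align}
Here phase stripping is essential. Since $\ket{\breve{\psi}}$ has real nonnegative amplitudes, $\ket{\breve{\psi}^*}=\ket{\breve{\psi}}$, and the overlap equals $2^{-n/2}\braket{\breve{\psi}|T_{\bfa}|\breve{\psi}}$. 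Hermiticity of $T_{\bfa}$ makes this real. Squaring gives $\Pr(\bfa)=\braket{\breve{\psi}|T_{\bfa}|\breve{\psi}}^2/2^n$. The purity identity from the DFE preliminaries, $\frac{1}{2^n}\sum_{\bfa}\braket{\breve{\psi}|T_{\bfa}|\breve{\psi}}^2=1$, confirms this is normalized.

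The only delicate step is the basis bookkeeping. We must check that CNOT followed by $H$ on the control sends $\ket{\Phi_{\bfa}}$ exactly to a computational basis vector. It suffices to verify this qubitwise: $CNOT\cdot(H\otimes I)$ maps $\ket{z}\ket{x}$ to $\ket{z}\ket{z\oplus x}$ after Hadamard on the first qubit, so the inverse maps $\ket{\Phi_{(x,z)}}$ to $\ket{z}\ket{x}$ up to a phase. Ordering of $H$ versus CNOT and the sign conventions are the main places to get wrong, but none of them affect the squared amplitudes. This is also why the result holds for $\ket{\breve{\psi}}$ but not for a general complex $\ket{\psi}$: for complex states Bell sampling yields $|\braket{\psi^*|T_{\bfa}|\psi}|^2$ instead.
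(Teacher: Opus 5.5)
Your proposal is correct and is essentially the paper's proof. Both measure $\ket{\breve{\psi}}^{\otimes 2}$ in the Bell basis using $n$ transversal CNOTs plus Hadamards on the control register, apply the maximally-entangled-state (transpose) identity to get $2^{-n}|\braket{\breve{\psi}^{*}|T_{\bfa}|\breve{\psi}}|^2$, and use reality of the phase-stripped amplitudes to remove the conjugation, with the same labeling ($\bfa_x$ from the target register, $\bfa_z$ from the control register); computing the amplitude directly rather than the paper's trace expression is cosmetic.

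Two side remarks need fixing, though neither affects the proof. First, the gate order does matter: the CNOTs must come before the Hadamards, since applying $H$ first would measure in the product basis $\{H\ket{z}\otimes\ket{x\oplus z}\}$. Second, the intermediate claim in your bookkeeping paragraph should read $CNOT\,(H\otimes I)\ket{z}\ket{x}=2^{-1/2}\sum_{w}(-1)^{zw}\ket{w}\ket{w\oplus x}$, which is $\ket{\Phi_{(x,z)}}$ up to a phase, not $\ket{z}\ket{z\oplus x}$.
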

\begin{proof}
	We prepare two copies of state $\ket{\breve{\psi}}^{\otimes 2}$, then enact $\prod_{i=1}^n CNOT_{i,i+n}(H^{\otimes n}\otimes I^{\otimes n})$. Finally, we take the computational basis measurement to the whole qubits, obtaining the output $\bfb=(\bfb_1,\bfb_2)\in \bfF_2^{2n}$. The following Born probability is, denoting $2n$-qubit Bell state (or maximally entangled state) as $\ket{\Phi}=\frac{1}{\sqrt{2^n}}\sum_{\bfx\in \bfF_2^n}\ket{\bfx\bfx}$,
	\begin{align}
		&\tr{(Z^{\bfb_1}\otimes X^{\bfb_2})\ket{\breve{\psi}}\bra{\breve{\psi}}^{\otimes 2}(Z^{\bfb_1}\otimes X^{\bfb_2})\ket{\Phi}\bra{\Phi}}\nonumber\\&=\frac{1}{2^n}\tr{Z^{\bfb_1}\ket{\breve{\psi}}\bra{\breve{\psi}}Z^{\bfb_1}\left(X^{\bfb_2}\ket{\breve{\psi}}\bra{\breve{\psi}}X^{\bfb_2}\right)^{\top}}=\frac{1}{2^n}\tr{Z^{\bfb_1}\ket{\breve{\psi}}\bra{\breve{\psi}}Z^{\bfb_1}X^{\bfb_2}\ket{\breve{\psi}^{\ast}}\bra{\breve{\psi}^{\ast}}X^{\bfb_2}}=\frac{1}{2^n}|\braket{\breve{\psi}|T_{(\bfb_2,\bfb_1)}|\breve{\psi^{\ast}}}|^2\nonumber\\&=\frac{1}{2^n}\braket{\breve{\psi}|T_{(\bfb_2,\bfb_1)}|\breve{\psi}}^2,
	\end{align}
	, where $\ket{\breve{\psi}^{\ast}}$ is the complex conjugation of the whole coefficients with the computational basis, making no difference. Finally, sampled $\bfa$ is $(\bfb_2,\bfb_1)$.
\end{proof}

Furthermore, we note that the Bell measurement circuit is itself a Clifford circuit. If the unitary generating $\ket{\breve{\psi}}$ is dominated by Clifford gates and low-T gates, then the $l_2$-sampling can also be classically simulated~\cite{bravyi2016i,bravyi2019,qassim2021}, without preparing $\ket{\breve{\psi}}$ as an input.

\begin{figure}[t]
	\includegraphics[width=\columnwidth]{diagonal_state_fidelity_scheme.pdf}
	\caption{Schematic illustration of FOFE.}\label{fig:diagonal_state_fidelity_scheme}
	
\end{figure}

\section{Fan-out-based fidelity estimation scheme II: Hadamard test circuit and post-processing}

What is the quantum algorithm that estimates $\mathbb{E}(\widehat{\lambda}^{\bfa})\equiv \frac{1}{2}\left\{\tr{D^{(\bfa)}\rho T_{\bfa}}+\tr{T_{\bfa}\rho D^{(\bfa)\dag}}\right\}$ of Eq.~\eqref{eq:main_objective_arbitrary}? We can apply some special quantum circuit, which we refer as \emph{Hadamard test}~\cite{faehrmann2025}. This technique was referred from the previous works~\cite{sun2022,tsubouchi2023_virtual,faehrmann2025} that led to other applications. Hadamard test is used to estimate the expectation value of the observable $O$ after the non-physical simulation from unitaries $V$ and $U$, 
\begin{align}
	\braket{O}_{(U,V)}\equiv\frac{1}{2}\left\{\tr{OU\rho V^{\dag}}+\tr{OV\rho U^{\dag}}\right\}.
\end{align}
The above process is non-physical because such conjugation $\rho\rightarrow \frac{1}{2}\left(U\rho V^{\dag}+V\rho U^{\dag}\right)$ is not the mapping from an existing quantum channel. Nevertheless, it is important to note that we require the $\braket{O}_{(U,V)}$ of such a non-physical output, rather than its full information. Then the problem might be easier. Let us explain how to achieve that estimation. First, we prepare an additional ancilla state $\ket{+}\bra{+}$. Then we give operations inverted-controlled-$U$ ($C_0 U$) and controlled-$V$ ($CV$) subsequently, where the ancilla will take part of the control qubit, and \emph{inverted} means that the target operation is activated for $0$-control-qubit, hence the output becomes, 
\begin{align}
	CVC_0U\ket{+}\bra{+}\otimes\rho\; C_0U^{\dag}CV^{\dag}=\frac{1}{2}\left\{\ket{0}\bra{0}\otimes U\rho U^{\dag}+\ket{0}\bra{1}\otimes U\rho V^{\dag}+\ket{1}\bra{0}V\rho U^{\dag}+\ket{1}\bra{1}V\rho V^{\dag}\right\}.
\end{align}
Then the expectation value of $X\otimes O$ is, 
\begin{align}
	\tr{CVC_0U\ket{+}\bra{+}\otimes\rho\; C_0U^{\dag}CV^{\dag}(X\otimes O)}=\frac{1}{2}\left\{\tr{OU\rho V^{\dag}}+\tr{OV\rho U^{\dag}}\right\},
\end{align}
which is the desired result. Eq.~\eqref{eq:main_objective_arbitrary} is a corollary from the substitution, $O=I$, $U=T_{\bfa}$, and $V=D^{(\bfa)}$. That is, 
\begin{align}
	\tr{(X\otimes I)CD^{(\bfa)}C_0T_{\bfa}(\ket{+}\bra{+}\otimes \rho)C_0T_{\bfa}CD^{(\bfa)\dag}}=\frac{1}{2}\left\{\tr{D^{(\bfa)}\rho T_{\bfa}}+\tr{T_{\bfa}\rho D^{(\bfa)\dag}}\right\},
\end{align}
where the control gates are defined as $C_0T_{\bfa}\equiv \ket{0}\bra{0}\otimes T_{\bfa}+\ket{1}\bra{1}\otimes I$ and $CD^{(\bfa)}\equiv \ket{0}\bra{0}\otimes I+\ket{1}\bra{1}\otimes D^{(\bfa)}$.

Therefore, our circuit seems to require, as entangling gates, $n$ number of CNOT gates and gates for $CD^{(\bfa)}$, which requires in general $\mathcal{O}(2^n)$ Clifford+T gates~\cite{gosset2025}. However, we can transform such a complex operation to the post-processing of the Pauli measurement outcomes so that we do not need such an excessive gate overhead. To achieve this, suppose the output state after the $C_0T_{\bfa}$ with the additional $(H\otimes I)$ operation (see the right picture in Fig.~\ref{fig:diagonal_state_fidelity_scheme}) has a spectral decomposition $\rho_{\rm output}=\sum_{\tau}\tau\ket{\tau}\bra{\tau}$. Without losing of generality, we only pick one eigenvector $\ket{\tau}=\sum_{x_1\in \bfF_2,\bfx\in \bfF_n^2}\xi_{\bfx}\ket{x_1,\bfx}$ as the input for remaining gates $D^{(\bfa)}$ and estimation of $(X\otimes I)$. We know the following representation,
\begin{align}
	(H\otimes I)CD^{(\bfa)}(H\otimes I)\ket{x_1,\bfx}&=\ket{+}\bra{+}\otimes I+\ket{-}\bra{-}\otimes D^{(\bfa)}\ket{x_1,\bfx}\nonumber\\&=\frac{1+(-1)^{x_1}e^{i\phi^{(\bfa)}_\psi(\bfx)}}{2}\ket{0,\bfx}+\frac{1-(-1)^{x_1}e^{i\phi^{(\bfa)}_\psi(\bfx)}}{2}\ket{1,\bfx}.
\end{align}
This fact rewrites the expectation value to, 
\begin{align}\label{eq:twirled_and_measure}
	&\braket{\tau| (H\otimes I)CD^{(\bfa)}(H\otimes I)(Z\otimes I) (H\otimes I)CD^{(\bfa)\dag}(H\otimes I)|\tau}\nonumber\\&=\sum_{\substack{x_1,y_1\in F_2\\\bfx,\bfy\in \bfF_2^n}}\xi_{(x_1,\bfx)}^*\xi_{(y_1,\bfy)}\delta_{\bfx,\bfy}\left\{\frac{1+(-1)^{y_1}e^{i\phi^{(\bfa)}_\psi(\bfy)}+(-1)^{x_1}e^{-i\phi^{(\bfa)}_\psi(\bfx)}+(-1)^{x_1+y_1}e^{i\phi^{(\bfa)}_\psi(\bfy)-i\phi^{(\bfa)}_\psi(\bfx)}}{4}\right.\nonumber\\&\left.-\frac{1-(-1)^{y_1}e^{i\phi^{(\bfa)}_\psi(\bfy)}-(-1)^{x_1}e^{-i\phi^{(\bfa)}_\psi(\bfx)}+(-1)^{x_1+y_1}e^{i\phi^{(\bfa)}_\psi(\bfy)-i\phi^{(\bfa)}_\psi(\bfx)}}{4}\right\}\nonumber\\&=\sum_{x_1,y_1\in \bfF_2,\bfx\in \bfF_2^n}\xi_{(x_1,\bfx)}^*\xi_{(y_1,\bfx)}\left\{\frac{(-1)^{y_1}e^{i\phi^{(\bfa)}_\psi(\bfx)}+(-1)^{x_1}e^{-i\phi^{(\bfa)}_\psi(\bfx)}}{2}\right\}\nonumber\\&=\sum_{(x_1,\bfx)\in \bfF_2^{n+1}}|\xi_{x_1,\bfx}|^2(-1)^{x_1}\left\{\frac{e^{i\phi^{(\bfa)}_\psi(\bfx)}+e^{-i\phi^{(\bfa)}_\psi(\bfx)}}{2}\right\}+\sum_{\bfx\in \bfF_2^n}\left(-i\xi_{(0,\bfx)}\xi^{*}_{(1,\bfx)}+i\xi_{(1,\bfx)}\xi^{*}_{(0,\bfx)}\right)\left\{\frac{e^{i\phi^{(\bfa)}_\psi(\bfx)}-e^{-i\phi^{(\bfa)}_\psi(\bfx)}}{2i}\right\}\nonumber\\&=\braket{\tau|Z\otimes {\rm Re}(D^{(\bfa)})|\tau }-\braket{\tau|Y\otimes {\rm Im}(D^{(\bfa)})|\tau }.
\end{align}
Recall the spectral decomposition of the output state $\rho_{\rm output}=\sum_{\tau}\tau\ket{\tau}\bra{\tau}$. Using Eq.~\eqref{eq:twirled_and_measure}, finally,
\begin{align}
	\tr{(X\otimes I)CD^{(\bfa)}C_0T_{\bfa}(\ket{+}\bra{+}\otimes \rho)C_0T_{\bfa}CD^{(\bfa)\dag}}=\tr{\rho_{\rm output}(Z\otimes {\rm Re}(D^{(\bfa)}))}-\tr{\rho_{\rm output}(Y\otimes {\rm Im}(D^{(\bfa)}))}
\end{align}

In conclusion, we can estimate the desired expectation value $\tr{(X\otimes I)CD^{(\bfa)}C_0T_{\bfa}(\ket{+}\bra{+}\otimes \rho)C_0T_{\bfa}CD^{(\bfa)\dag}}=\frac{1}{2}\left\{\tr{D^{(\bfa)}\rho X^{\bfa}}+\tr{X^{\bfa}\rho D^{(\bfa)\dag}}\right\}$ as follows: We prepare additional ancilla state $\ket{+}\bra{+}$ and then we enact the gates $(H\otimes I)C_0T_{\bfa}$. Next, we estimate the expectation value $\braket{Z\otimes {\rm Re}(D^{(\bfa)})}$ of the output state. This is possible by doing the computational basis measurement to obtain the outcome $(b_1,\bfb)\in \bfF_2^{n+1}$ and take the estimator $\widehat{\braket{Z\otimes {\rm Re}(D^{(\bfa)})}}=(-1)^{b_1}\cos(\phi^{(\bfa)}_\psi(\bfb))$. Then we prepare additional copies enacted by $(H\otimes I)C_0T_{\bfa}$ to estimate $\braket{Y\otimes {\rm Im}(D^{(\bfa)})}$, which is possible by measuring in the computational basis except for the first qubit that is measured in $Y$-basis to obtain $(b'_1,\bfb')\in \bfF_2^{n+1}$. Then we take the estimator $-\widehat{\braket{Y\otimes {\rm Im}(D^{(\bfa)})}}=(-1)^{b'_1+1}\sin(\phi^{(\bfa)}_\psi(\bfb'))$. Along with the knowledge of the previous section, we finally get the unbiased estimator of $\braket{\eta|\rho|\eta}$, requiring only $n$-CNOTs as entangling gates and one ancilla qubit. We remember tha $\phi^{(\bfa)}_{\psi}$ is the phase function for $D^{(\bfa)}$. Since $\forall \bfx\in \bfF_2^n,\;D^{{(\bfa)}}\ket{\bfx}=T_{\bfa}DT_{\bfa}D^{\dag}\ket{\bfx}=e^{i\left(\phi_\psi(\bfx+\bfa_x)-\phi_{\psi}(\bfx)\right)}\ket{\bfx}$, it leads to $\phi^{(\bfa)}_\psi(\bfx)=\phi_\psi(\bfx+\bfa_x)-\phi_\psi(\bfx)$, which can be efficiently calculated if $\phi^{(
	\bfa)}_\psi(\bfx)$ is assumed to be efficiently found. 

The only thing left is the estimation variance. Let us denote the $n+1$-qubit output state before the measurement is $\rho^{(\bfa)}$. We start from the result of Eq.~\eqref{eq:twirled_and_measure} and the notation of the previous section. Since the estimators of $\braket{Z\otimes {\rm Re}(D^{'(\bfa)})}$ and $\braket{Y\otimes {\rm Im}(D^{'(\bfa)})}$ have independent samplers, the total variance is summed. In general, given random variables $X,Y$, ${\rm Var}(X+Y)={\rm Var}(X)+{\rm Var}(Y)+{
	\rm Cov
}(X,Y)$, in which the independent sampling leads to ${
	\rm Cov
}(X,Y)=0$. More specifically,

\begin{align}
	{\rm Var}(\rho;\phi_\eta,\bfa)&= \sum_{(x_1,\bfx)\in \bfF_2^{n+1}}\tr{\ket{x_1}\bra{x_1}\otimes \ket{\bfx}\bra{\bfx}\rho^{(\bfa)}}\cos^2(\phi(\bfx))-\braket{Z\otimes {\rm Re}(D^{'(\bfa)})}^2\nonumber\\&+\sum_{(x_1,\bfx)\in \bfF_2^{n+1}}\tr{\ket{(-1)^{x_1}i}\bra{(-1)^{x_1}i}\otimes \ket{\bfx}\bra{\bfx}\rho^{(\bfa)}}\sin^2(\phi(\bfx))-\braket{Y\otimes {\rm Im}(D^{'(\bfa)})}^2\nonumber\\&\le\sum_{\bfx\in \bfF_2^{n}}\tr{I\otimes \ket{\bfx}\bra{\bfx}\rho^{(\bfa)}}\cos^2(\phi(\bfx))+\sum_{\bfx\in \bfF_2^{n}}\tr{I\otimes \ket{\bfx}\bra{\bfx}\rho^{(\bfa)}}\sin^2(\phi(\bfx))\nonumber\\&=\sum_{\bfx\in \bfF_2^n}\braket{\bfx|\mathrm{tr}_1\{\rho^{(\bfa)}\}|\bfx}=1,
\end{align}
where we used the total variance is the sum of each variance of the independent estimator. Finally, we confirmed Eq.~\eqref{eq:general_variance}, ${\rm Var}(\rho;\phi_\eta,\bfa)\le 2\cdot 2^{(1-\alpha)M_{\alpha}(\breve{\psi})+\alpha M_{1-\alpha}(\breve{\psi})}$. 

The next proposition implies an additional benefit in the case where the magic (non-stabilizerness) of $\ket{\psi}$ is totally endowed by the diagonal gate. 
\begin{proposition}
	If the target state $\ket{\psi}=D(\phi_{\psi})\ket{\omega}$ is a stabilizer state $\ket{\omega}$ acted by some diagonal operator $D(\phi_\psi)$,  then $\|\breve{\psi}\|_1=\|\breve{\omega}\|_1=1$.
\end{proposition}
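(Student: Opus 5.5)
The plan is to split the claim into two equalities. The first, $\|\breve{\psi}\|_1=\|\breve{\omega}\|_1$, is immediate from the definition of phase stripping. The second, $\|\breve{\omega}\|_1=1$, follows from the known amplitude structure of stabilizer states.

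\emph{Step 1.} Write $\ket{\omega}=\sum_{\bfx}\omega_{\bfx}\ket{\bfx}$. Then $\ket{\psi}=D(\phi_\psi)\ket{\omega}=\sum_{\bfx}e^{i\phi_\psi(\bfx)}\omega_{\bfx}\ket{\bfx}$, so every computational-basis amplitude keeps its modulus: $|e^{i\phi_\psi(\bfx)}\omega_{\bfx}|=|\omega_{\bfx}|$. By Eq.~\eqref{eq:phase_stripping} this gives $\ket{\breve{\psi}}=\ket{\breve{\omega}}$ as vectors, and hence their Pauli coefficients and $l_1$-norms coincide.

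\emph{Step 2.} I will invoke the standard structure theorem for stabilizer states (Dehaene--De Moor / Van den Nest). It says that every $n$-qubit stabilizer state has the form
\begin{align}
\ket{\omega}=\frac{1}{\sqrt{2^k}}\sum_{\bfx\in A}i^{\ell(\bfx)}(-1)^{q(\bfx)}\ket{\bfx},
\end{align}
where $A=\bfF_2$-affine subspace $\mathbf{t}+W$ with $\dim W=k$, $\ell$ is linear and $q$ is quadratic. Consequently the stripped state is the uniform superposition over an affine subspace, $\ket{\breve{\omega}}=2^{-k/2}\sum_{\bfx\in A}\ket{\bfx}$. I expect this step, i.e.\ quoting the structure theorem and checking that it covers all stabilizer states including the global-phase convention, to be the main point needing care. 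It is not deep, but it is the only external input.

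\emph{Step 3.} I will show $\|\breve{\omega}\|_1=1$ by direct computation. For $T_{\bfa}$ with $\bfa=(\bfa_x,\bfa_z)$ we have
\begin{align}
\braket{\breve{\omega}|T_{\bfa}|\breve{\omega}}=\frac{i^{\bfa_x\cdot\bfa_z}}{2^k}\sum_{\bfx\in A}\mathbb{1}[\bfx+\bfa_x\in A](-1)^{\bfa_z\cdot \bfx}.
\end{align}
This vanishes unless $\bfa_x\in W$. When $\bfa_x\in W$, the character sum over the coset $A$ has modulus $2^k$ if $\bfa_z\in W^{\perp}$ and is $0$ otherwise. Hence the nonzero coefficients have $|\braket{\breve{\omega}|T_{\bfa}|\breve{\omega}}|=1$, and there are exactly $|W|\cdot|W^\perp|=2^k2^{n-k}=2^n$ of them. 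Therefore $\|\breve{\omega}\|_1=\frac{1}{2^n}\cdot 2^n=1$.

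As an alternative to Step 3, one can note that $\ket{\breve{\omega}}=X^{\mathbf{t}}C\,(\ket{+}^{\otimes k}\ket{0}^{\otimes n-k})$, where $C$ is a CNOT circuit realizing a linear bijection that maps the first $k$ coordinates onto $W$. This exhibits $\ket{\breve{\omega}}$ as a stabilizer state, and then the earlier remark that $\|\chi\|_1=1$ for all pure stabilizer states $\ket{\chi}$ applies. Combining with Step 1 gives $\|\breve{\psi}\|_1=\|\breve{\omega}\|_1=1$.
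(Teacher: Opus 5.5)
Your proof is correct and follows essentially the same route as the paper: invoke the affine-subspace structure theorem for stabilizer states, note that both $D(\phi_\psi)$ and the $i^{\ell}(-1)^{q}$ phases are erased by phase stripping, and conclude that $\ket{\breve{\omega}}$ is a uniform superposition over an affine subspace, hence a stabilizer state with unit $l_1$-norm. Your closing alternative is exactly the paper's argument, while the character-sum count in your Step 3 (nonzero coefficients exactly when $\bfa_x\in W$ and $\bfa_z\in W^{\perp}$, each of modulus one, $2^n$ of them) is a self-contained check of the unit-norm fact that the paper simply quotes, not a different approach.
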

\begin{proof}
	Every stabilizer state $\ket{\omega}$ can be expressed as follows~\cite{bravyi2016i,bravyi2016}, 
	\begin{align}
		\ket{\omega}=\frac{1}{\sqrt{2^{\dim{A}}}}\sum_{\bfx\in A}i^{\bfu\cdot \bfx}(-1)^{Q(\bfx)}\ket{\bfx+\bfu},
	\end{align}
	where $A$ is some vector subspace in $\bfF_2^n$, $\bfu\in \bfF_2^n$ is fixed vector such that $\bfu\cdot \bfx$ is calculated in modular $4$, and $Q$ is some second-degree Boolean function. We also remember that the diagonal gate does not change the magnitude of the coefficients. In light of this knowledge, we conclude that 
	\begin{align}
		\|\breve{\psi}\|_1=\|\bar{\omega}\|_1=1\;{\rm where}\; \ket{\bar{w}}=\frac{1}{\sqrt{2^{\dim{A}}}}\sum_{\bfx\in A}\ket{\bfx+\bfu},
	\end{align}
	since $\ket{\bar{\omega}}$ is another stabilzer state that can be generated via $X$, $H$ and $CNOT$ gates to $\ket{0}^{\otimes n}$. 
\end{proof}

Therefore, the following corollary is that estimating the fidelity with state which has the following form $\ket{\psi}=D(\phi_\psi)\ket{\omega}$, where $\ket{\omega}$ is the stabilizer state, can be efficiently done with $\mathcal{O}\left(\epsilon^{-2}\log(\delta_f^{-1})\right)$ (constant) number of samplings with our scheme because in this case, $\|\breve{\psi}\|_1=1$. We call such $\ket{\psi}$ a \emph{phase-stabilizer state}. If $\ket{\omega}=\ket{+}^{\otimes n}$, it shrinks to a \emph{phase state}~\cite{lee2025_shallow}. 


Importantly, let us consider when we estimate the fidelities with $M$-number of phase states $\left\{\ket{\eta_1},\ket{\eta_2},\ldots,\ket{\eta_M}\right\}$. We remember that all phase states share the same $l_{2\alpha}$-sampling for FOFE, the uniform Pauli $X$-operators. It means that we use the same measurement circuit for all phase states. Consequently, after getting the measurement outcome $\bfb$ we can use to calculate \emph{many} estimators $(-1)^{b_1}\cos(\phi^{(\bfa)}_\eta(\bfb))$ or $(-1)^{b_1+1}\sin(\phi^{(\bfa)}_\eta(\bfb))$ following the structure of each $\eta$. Therefore, we just need to re-scale the failure probability for each phase state to $M^{-1}\delta_f$. Then Thm.~\ref{main:thm_1} is proved. In other words, the number of target phase states merely gives the log-factor of the sampling complexity. We can also directly generalize this property. If all $M$ target states $\left\{\ket{\psi_1},\ket{\psi_2},\ldots,\ket{\psi_M}\right\}$ share the same phase stripped state $\ket{\breve{\psi}}$, the required sampling complexity for FOFE is $\mathcal{O}\left(\frac{\|\breve{\psi}\|_{2-2\alpha}^{1/\alpha}}{\epsilon^2}\log(M\delta_f^{-1})\right)$. This reasoning is analogous to the main virtue of the classical shadow~\cite{huang2020,aronson2018}: the measurement outcome followed by classical shadow $\mathcal{M}^{-1}(\ket{\bfb}\bra{\bfb})$ ($\mathcal{M}^{-1}$: inversion of measurement channel~\cite{chen2021_rs}) is used to estimate expectation values of \emph{many} observables.

\section{Fidelity estimation overhead of random phase-stripped states}

We recall that given an arbitrary pure state $\ket{\psi}=\sum_{\bfx\in \bfF_2^n}\xi_{\bfx}\ket{\bfx}$, we denote $\ket{\breve{\psi}}=\sum_{\bfx}|\xi_{\bfx}|\ket{\bfx}$ as its phase-stripped state. In this section, we calculate the average of $\|\breve{\psi}\|_1$ over the Haar random states $\ket{\psi}$, comparing with the average of $\|\psi\|_1$.

To do so, we need to review basic properties of the Dirichlet function~\cite{kotz2019}. Let us fix $K\in \mathbb{N}\backslash\left\{1\right\}$, and $\alpha_1,\alpha_2,\ldots,\alpha_K>0\;(\alpha\equiv (\alpha_1,\alpha_2,\ldots,\alpha_K))$. The Dirichlet distribution is defined as the probability distribution over a $K$-sized probability simplex $\triangle_K$ as follows, 
\begin{align}
	{\rm Dir}(p\equiv(p_1,p_2,\ldots,p_K);\alpha)\equiv \frac{\prod_{i=1}^{K}\Gamma(\alpha_i)}{\Gamma\left(\sum_{i=1}^K\alpha_i\right)}\times\prod_{i=1}^{K}p^{\alpha_i-1}_i,
\end{align}
where $\Gamma(x)\equiv \int_{0}^{\infty}t^{x-1}e^{-t}dt$ is gamma function. Uniform sampling corresponds to when $\alpha=(1,1,1,\ldots,1)$. Dirichlet distribution satisfies the following moment rules, 

\begin{align}\label{eq:Dirichlet_moment}
	\int_{\triangle_K} d^{(\alpha)}p\left(\prod_{i=1}^{K}X_i^{\beta_i}\right)=\int_{\triangle_K} J_{\alpha}(p_1,p_2,\ldots,p_K)dp_1dp_2\ldots dp_{K}\left(\prod_{i=1}^{K}X_i^{\beta_i}\right)=\frac{\Gamma\left(\sum_{i=1}^K\alpha_i\right)}{\Gamma\left(\sum_{i=1}^K(\alpha_i+\beta_i)\right)}\times \prod_{i=1}^{K}\frac{\Gamma(\alpha_i+\beta_i)}{\Gamma(\alpha_i)},
\end{align}
where $d^{(\alpha)}p=J_{\alpha}(p_1,p_2,\ldots,p_K)dp_1dp_2\ldots p_K$ is the integral measure from ${\rm Dir}(p;\alpha)$. Second, we have the following marginal distribution ${\rm Dir}_L\equiv{\rm Dir}(p_1+p_2+\ldots+p_L)\;(L\in \mathbb{N},L<K)$ that is $\forall i\in [K]$, 
\begin{align}\label{eq:Dirichlet_marginal}
	{\rm Dir}(p_1+p_2+\ldots+p_L)=\frac{\Gamma(\alpha_0)}{\Gamma(\alpha_L)\Gamma(\alpha_0-\alpha_L)}\times p_i^{\alpha_L-1}(1-p_i)^{\alpha_0-\alpha_L-1}\;(\alpha_0\equiv \sum_{i=1}^{K}\alpha_i,\; \alpha_L\equiv\sum_{i=1}^{L}\alpha_i)
\end{align}

Throughout this paper, we shall fix $K=2^n$, $\alpha=(1,1,1,\ldots,1)$ and let $p=(p_{\bfx})_{\bfx\in \bfF_2^n}$. Haar random~\cite{mele2023} refers to a uniform measure over the random pure states, that is, we sample each coefficient $\xi_{\bfx}$ independently from the normal distribution $\mathcal{N}_{\mathbb{C}}(0,1)$ then normalize so that $\sum_{\bfx}|\xi_{\bfx}|^2=1$. This is equivalent to randomly sampling the probability distribution $p\in \triangle_{2^n}$  following ${\rm Dir}(1,1,\ldots,1)$, and then randomly sampling each $\phi_\bfx\in [0,2\pi]$ and finally setting $\forall \bfx,\;\xi_{\bfx}=\sqrt{p_{\bfx}}e^{i\phi_\bfx}$.  

First, we calculate the average of $\|\psi\|_1$ over Haar random states. We note that 
\begin{align}
	\hmoment\|\psi\|_1=\frac{1}{2^n}+\hmoment\sum_{\bfa\in \bfF_2^{2n},\bfa\ne \mathbf{0}}|\braket{\psi|T_{\bfa}|\psi}|=\frac{1}{2^n}+\frac{4^n-1}{2^n}\hmoment|\braket{\psi|ZIII\ldots I|\psi}|. 
\end{align}
The last equality is because all non-identity Pauli operator is equivalent under the conjugation by some Clifford operator~\cite{aaronson2004}, hence $  \mathbb{E}_{\psi\in {\rm Haar}}|\braket{\psi|P|\psi}|=\mathbb{E}_{\psi\in {\rm Haar}}|\braket{\psi|Q|\psi}|$ for arbitrary non-identity Pauli operators $P\ne Q$.

Further simplification of the last form is well-known~\cite{bengtsson2017}, where the detail is given as follows, 
\begin{align}\label{eq:Haar_random_l1_norm_integral}
	&\hmoment\|\psi\|_1\nonumber\\&=\frac{1}{2^n}+\frac{4^n-1}{2^n}\hmoment \left|\sum_{\bfx\in \bfF_2^n}|\xi_\bfx|^2(-1)^{x_1}\right|=\frac{1}{2^n}+\frac{4^n-1}{2^n}\hmoment \left|2\sum_{\bfx\in \bfF_2^n,x_1=0}|\xi_\bfx|^2-1\right|\nonumber\\&=\frac{1}{2^n}+\frac{4^n-1}{2^n}\int_{0}^{\frac{1}{2}}{\rm Dir}_{2^{n-1}}(P)(1-2P)dP+\frac{4^n-1}{2^n}\int_{\frac{1}{2}}^{1}{\rm Dir}_{2^{n-1}}(P)(2P-1)dP\nonumber\\&=\frac{1}{2^n}+\frac{4^n-1}{2^n}\int_{0}^{\frac{1}{2}}\frac{(2^n-1)!}{(2^{n-1}-1)!^2}(1-P)^{2^{n-1}-1}(P^{2^{n-1}-1}-2P^{2^{n-1}})dP\nonumber\\&+\frac{4^n-1}{2^n}\int_{\frac{1}{2}}^{1}\frac{(2^n-1)!}{(2^{n-1}-1)!^2}(1-P)^{2^{n-1}-1}(2P^{2^{n-1}}-P^{2^{n-1}-1})dP\nonumber\\&=\frac{1}{2^n}+\frac{4^n-1}{2^n}\times \frac{(2^n-1)!}{(2^{n-1}-1)!^2}\left(2\int_{0}^{\frac{1}{2}}-\int_0^1\right)(1-P)^{2^{n-1}-1}(P^{2^{n-1}-1}-2P^{2^{n-1}})dP\nonumber\\&=\frac{1}{2^n}+\frac{(4^n-1)(2^n-1)!}{2^n(2^{n-1}-1)!^2}\left\{2B\left(\frac{1}{2};2^{n-1},2^{n-1}\right)-4B\left(\frac{1}{2};2^{n-1}+1,2^{n-1}\right)-B(2^{n-1},2^{n-1})+2B(2^{n-1}+1,2^{n-1})\right\}\nonumber\\&\simeq \sqrt{\frac{2^{n+1}}{\pi}}\simeq 0.
	798\times 2^{0.5n}.
\end{align}
where the second inequality comes from Eq.~\eqref{eq:Dirichlet_marginal} and the fact that $\left\{\bfx\in \bfF_2^n|x_1=0\right\}$ is vector subspace of the size $2^{n-1}$. The last two inequalities is derived by the following arguments: 
We first ignore the first $\frac{1}{2^n}$ term. Next, we use Stirling's formula $k!\simeq \sqrt{2\pi k}\left(\frac{k}{e}\right)^k$ to obtain $\frac{(4^n-1)(2^n-1)!}{2^n(2^{n-1}-1)!^2}\simeq \sqrt{\frac{1}{8\pi}}\times 2^{2^n+\frac{3n}{2}}$. Finally, we use the definition of the incomplete beta function~\cite{kotz2019} with several properties, 
\begin{align}\label{eq:properties_incomplete_beta}
	B\left(x;a,b\right)
	\equiv \int_0^{x}t^{a-1}(1-t)^{b-1}dt\Rightarrow   \begin{cases}
		B(1;a,b)=B(a,b)\; \left(B(a,b)\equiv \frac{\Gamma(a)\Gamma(b)}{\Gamma(a+b)}\right)\\B(a+1,a)=\frac{1}{2}B(a,a)\\
		B\left(\frac{1}{2};a,a\right)=\frac{1}{2}B(a,a)\\ B\left(\frac{1}{2};a+1,a\right)=\frac{1}{4}B(a,a)-\frac{1}{a2^{2a+1}},
	\end{cases}
\end{align}
(see App. A for its proof) and hence,
\begin{align}
	&2B\left(\frac{1}{2};2^{n-1},2^{n-1}\right)-4B\left(\frac{1}{2};2^{n-1}+1,2^{n-1}\right)-B(2^{n-1},2^{n-1})+2B(2^{n-1}+1,2^{n-1})\nonumber\\&\simeq -4B\left(\frac{1}{2};2^{n-1}+1,2^{n-1}\right)+2B(2^{n-1}+1,2^{n-1})=\frac{4}{2^{n+2^n}} +\frac{1}{2}B(2^{n-1},2^{n-1})\simeq \frac{4}{2^{n+2^n}},
\end{align}
since $\frac{\Gamma(2^{n-1})^2}{\Gamma(2^n)}\simeq 0$.
We obtain one result, $\hmoment\|\psi\|_1\simeq \sqrt{\frac{2^{n+1}}{\pi}}$. Now, let us calculate $\hmoment\|\breve{\psi}\|_1$. To do so, we need a lemma,
\begin{lemma}
	For arbitrary $\bfa\in \bfF_2^{2n}$, $\hmoment|\braket{\breve{\psi}|T_{\bfa}|\breve{\psi}}|$ belongs to
	\begin{align}\label{eq:non_trivial_phase_strip_cases}
		\left\{0,\hmoment|\braket{\breve{\psi}|II\ldots I|\breve{\psi}}|=1,\hmoment|\braket{\breve{\psi}|ZI\ldots I|\breve{\psi}}|,\hmoment|\braket{\breve{\psi}|XI\ldots I|\breve{\psi}}|,\hmoment|\braket{\breve{\psi}|ZXI\ldots I|\breve{\psi}}|\right\}.
	\end{align}
\end{lemma}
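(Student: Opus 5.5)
The plan is to compute $\braket{\breve{\psi}|T_{\bfa}|\breve{\psi}}$ explicitly in the Haar parametrization $\xi_{\bfx}=\sqrt{p_{\bfx}}e^{i\phi_{\bfx}}$. Here $p\sim{\rm Dir}(1,\ldots,1)$, and the phase-stripped state depends only on $p$, namely $\ket{\breve{\psi}}=\sum_{\bfx}\sqrt{p_{\bfx}}\ket{\bfx}$. Writing $\bfa=(\bfa_x,\bfa_z)$, we have
\begin{align}
	\braket{\breve{\psi}|T_{\bfa}|\breve{\psi}}=i^{\bfa_x\cdot\bfa_z}\sum_{\bfx\in\bfF_2^n}(-1)^{\bfa_z\cdot\bfx}\sqrt{p_{\bfx+\bfa_x}p_{\bfx}},
\end{align}
up to a sign convention. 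Since $\ket{\breve{\psi}}$ is real and $T_{\bfa}$ is Hermitian, the expectation must be real. The phase $i^{\bfa_x\cdot\bfa_z}$ combined with the antisymmetry of the sum under $\bfx\mapsto\bfx+\bfa_x$ then forces the value to be $0$ whenever $\bfa_x\cdot\bfa_z$ is odd, i.e.\ whenever $T_{\bfa}$ carries an odd number of $Y$'s. Concretely, pairing $\bfx$ with $\bfx+\bfa_x$ multiplies the summand by $(-1)^{\bfa_z\cdot\bfa_x}$, so the sum vanishes identically. This gives the value $0$ in the stated set.

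For the remaining cases ($\bfa_x\cdot\bfa_z$ even), I will use the symmetry of the Dirichlet law: ${\rm Dir}(1,\ldots,1)$ is invariant under \emph{any} permutation $\pi$ of the $2^n$ coordinates. The first step is to apply an affine invertible map $\bfx\mapsto A\bfx+\bfb$ of $\bfF_2^n$ with $A\in GL_n(\bfF_2)$; this permutes coordinates, so the distribution of $p$ is unchanged. Under such a relabeling the sum transforms into the same form with $(\bfa_x,\bfa_z)\mapsto(A\bfa_x,A^{-\top}\bfa_z)$, up to a global sign, and the global sign is killed by the absolute value.

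Next, I choose $A$ to normalize the pair. If $\bfa_x=\mathbf 0$ and $\bfa_z=\mathbf 0$, the value is the identity term, equal to $1$. If $\bfa_x=\mathbf 0$ and $\bfa_z\neq\mathbf 0$, I map $\bfa_z$ to $e_1$, giving $ZI\ldots I$. If $\bfa_x\ne\mathbf 0$ and $\bfa_z=\mathbf 0$, I map $\bfa_x$ to $e_1$, giving $XI\ldots I$. If both are nonzero with $\bfa_x\cdot\bfa_z=0$, the pairing $\bfa_x\cdot\bfa_z$ is invariant under $(A,A^{-\top})$. The orbit of such pairs under $GL_n$ is therefore the set of pairs $(\bfu,\bfv)$ with $\bfu,\bfv\ne\mathbf 0$ and $\bfu\cdot\bfv=0$, which is a single orbit for $n\ge2$. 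Its representative is $\bfa_x=e_2$, $\bfa_z=e_1$, i.e.\ $ZXI\ldots I$.

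The main obstacle is verifying this orbit claim and the transformation rule for the sum cleanly, including checking that the translation part $\bfb$ only contributes a global sign $(-1)^{\bfa_z\cdot\bfb}$. For the orbit claim I would extend $\bfu$ to a basis and use a dual basis argument: pick $A$ with $A\bfu=e_2$, and then adjust within the stabilizer of $e_2$, whose dual action is transitive on nonzero vectors of $e_2^{\perp}$, to send $A^{-\top}\bfv$ to $e_1$. Once this is established, each $\hmoment|\braket{\breve{\psi}|T_{\bfa}|\breve{\psi}}|$ equals one of the five listed quantities.
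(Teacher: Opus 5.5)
Your proposal is correct and follows essentially the same route as the paper. The paper's key claim is that CNOT-equivalent Paulis have equal Haar averages, because phase stripping commutes with the basis permutations $\ket{\bfx}\mapsto\ket{\widetilde{L}\bfx}$; this is exactly your invariance of ${\rm Dir}(1,\ldots,1)$ under $\bfx\mapsto A\bfx+\bfb$. Both arguments dispose of the odd-$Y$ case by reality/antisymmetry.

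The only difference is how the normal forms $I$, $ZI\ldots I$, $XI\ldots I$, $ZXI\ldots I$ are reached. You use an orbit classification of $(A\bfa_x,A^{-\top}\bfa_z)$ with invariant $\bfa_x\cdot\bfa_z$. The paper instead reduces gate by gate, using SWAPs and identities such as $CNOT_{1\rightarrow 2}\,YY\,CNOT_{1\rightarrow 2}\propto XZ$. Your orbit argument is the cleaner and more complete way to establish that these four cases, plus $0$, exhaust all possibilities.
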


\begin{proof}
	We first claim that if $P,Q\in \mathcal{P}_n$ are equivalent under $CNOT$ operations, say the mapping is $L\ket{\bfx}=\ket{\widetilde{L}\bfx}$ for a given linear map $\widetilde{L}:\bfF_2^n\rightarrow \bfF_2^n$, then $\hmoment|\braket{\breve{\psi}|P|\breve{\psi}}|=\hmoment|\braket{\breve{\psi}|Q|\breve{\psi}}|$. To explain the reason, we first note that $\ket{\psi}=\sum_{\bfx}\xi_{\bfx}\ket{\bfx}=\sum_{\bfx}|\xi_{\bfx}|e^{i\phi(\bfx)}\ket{\bfx}$ for some function $\phi:\bfF^n_2\rightarrow [0,2\pi]$ (phase). Hence, $\ket{\psi}=D(\phi)\ket{\breve{\psi}}$, where $D(\phi)=e^{i\sum_{\bfa\in \bfF_2^n}\kappa_{\bfa}Z^{\bfa}}$ and $\forall \kappa_{\bfa}\in \mathbb{R}$. We also note that $L^{-1}\ket{\psi}=\sum_{\bfx}|\xi_{\bfx}|L^{-1}D(\phi)L L^{-1}\ket{\bfx}=D'(\phi)L^{-1}\ket{\breve{\psi}}$, where $D'(\phi)\equiv e^{i\sum_{\bfa}\kappa_{\bfa}L^{-1}Z^{\bfa}L}=e^{i\sum_{\bfa}\kappa_{\bfa}Z^{\widetilde{L}(\bfa)}}$ which is another diagonal gate. Moreover, we note that $L^{-1}\ket{\psi}=\sum_{\bfx}\xi_{\bfx}\ket{\widetilde{L}\bfx}=\sum_{\bfx}\xi_{\widetilde{L}^{-1}\bfx}\ket{\bfx}$, hence using previous equations, $\breve{L^{-1}\ket{\psi}}=\sum_{\bfx}|\xi_{\widetilde{L}^{-1}\bfx}|\ket{\bfx}=D^{'\dag}(\phi)L^{-1}\ket{\phi}$ and we conclude $\breve{L^{-1}\ket{\psi}}=L^{-1}\ket{\breve{\psi}}$. Therefore, 
	\begin{align}
		\hmoment|\braket{\breve{\psi}|Q|\breve{\psi}}|=\hmoment|\braket{\breve{\psi}|LPL^{-1}|\breve{\psi}}|=\hmoment|\braket{\breve{L^{-1}\psi}|P|\breve{L^{-1}\psi}}|=\hmoment|\braket{\breve{\psi}|P|\breve{\psi}}|,
	\end{align}
	where the last equation is obtained by left invariance of Haar random. We proved the claim.
	
	Let us get back on track. Without losing generality, we suppose a Pauli operator with the following $n$-sized string,
	\begin{align}
		T_{\bfa}=XXX\ldots XYYY\ldots YZZZ\ldots ZIII\ldots I
	\end{align}
	By the previous claim, other qubit-shuffled cases are equivalent by the $SWAP$ operations, which are formed by $CNOT$s. We also use the fact that 
	\begin{align}
		CNOT_{1\rightarrow 2}XICNOT_{1\rightarrow 2}=XX,\;CNOT_{1\rightarrow 2}IZCNOT_{1\rightarrow 2}=ZZ,\;CNOT_{1\rightarrow 2}YYCNOT_{1\rightarrow 2}=XZ\;({\rm up\; to\; phase}).
	\end{align}
	The last equation ignores the $\sqrt{-1}$-factor, since we are only considering the absolute value of $\braket{\breve{\psi}|T_{\bfa}|\breve{\psi}}$. Furthermore, we observe that if $Y$-section has odd number of $Y$'s, $\hmoment|\braket{\breve{\psi}|T_{\bfa}|\breve{\psi}}|$ should be zero because $\braket{\breve{\psi}|T_{\bfa}|\breve{\psi}}$ only outputs an imaginary value, which is also zero since $T_{\bfa}$ is Hermitian.
	Using the above arguments, we further simplify equivalent operator of $T_{\bfa}$ into, assuming only when $Y$-section is even-weighted, 
	\begin{align}
		T_{\bfa}\sim XII\ldots I\otimes XZXZ\ldots XZ\otimes ZIII\ldots \otimes IIII\ldots I\sim XII\ldots I\otimes XZII\ldots II\otimes ZIII\ldots \otimes IIII\ldots I
	\end{align}
	Conclusively, there are only $X$, $Z$, or $I$ operators. By using additional CNOT, we can further cancel $X,Z$-couples to leave only the non-trivial $4$ cases in Eq.~\eqref{eq:non_trivial_phase_strip_cases}. 
\end{proof}

From the above lemma, we shall calculate only $3$ elements in Eq.~\eqref{eq:non_trivial_phase_strip_cases}. We recall Eq.~\eqref{eq:Haar_random_l1_norm_integral} so that 
\begin{align}
	\hmoment|\braket{\breve{\psi}|ZI\ldots I|\breve{\psi}}|=\sqrt{\frac{2}{\pi 2^n}}.
\end{align}
Next, using the symmetry of ${\rm Dir}(1,1,\ldots,1)$,
\begin{align}
	\hmoment|\braket{\breve{\psi}|XI\ldots I|\breve{\psi}}|=\hmoment \sum_{\bfx\in \bfF_2}|\xi_{\bfx}||\xi_{\bfx\oplus(1,0,\ldots,0)}|=2^n\times   \int_{\triangle_{2^n}} d^{(1,1,\ldots,1)}p\left(p^{\frac{1}{2}}_1p^{\frac{1}{2}}_2\right)\simeq \frac{\pi}{4},
\end{align}
which is derived by substituting $\beta=(\frac{1}{2},\frac{1}{2},0,0,\ldots,0 )$ to  Eq.~\eqref{eq:Dirichlet_moment} and the fact that $\Gamma(M)=(M-1)!\;(M\in \mathbb{N}\cup \left\{0\right\})$. 

Lastly, 
\begin{align}
	&\hmoment|\braket{\breve{\psi}|ZXI\ldots I|\breve{\psi}}|\nonumber\\&= \hmoment\left|\sum_{\bfx'\in \bfF_2^{n-2}}\left(|\xi_{(0,0,\bfx')}\xi_{(0,1,\bfx')}|-|\xi_{(1,0,\bfx')}\xi_{(1,1,\bfx')}|+|\xi_{(0,1,\bfx')}\xi_{(0,0,\bfx')}|-|\xi_{(1,1,\bfx')}\xi_{(1,0,\bfx')}|\right)\right|\nonumber\\&=2\times  \hmoment\left|\sum_{\bfx'\in \bfF_2^{n-2}}\left(|\xi_{(0,0,\bfx')}\xi_{(0,1,\bfx')}|-|\xi_{(1,0\bfx')}\xi_{(1,1,\bfx')}|\right)\right|\nonumber\\&=2\int_{\triangle_{2^n}} d^{(1,1,\ldots,1)}p\left|\sum_{\bfx'\in \bfF_2^{n-2}}\left(p^{\frac{1}{2}}_{(0,0,\bfx')}p^{\frac{1}{2}}_{(0,1,\bfx')}-p^{\frac{1}{2}}_{(1,0,\bfx')}p^{\frac{1}{2}}_{(1,1,\bfx')}\right)\right|.
\end{align}
\begin{figure}[t]
	\includegraphics[width=12cm]{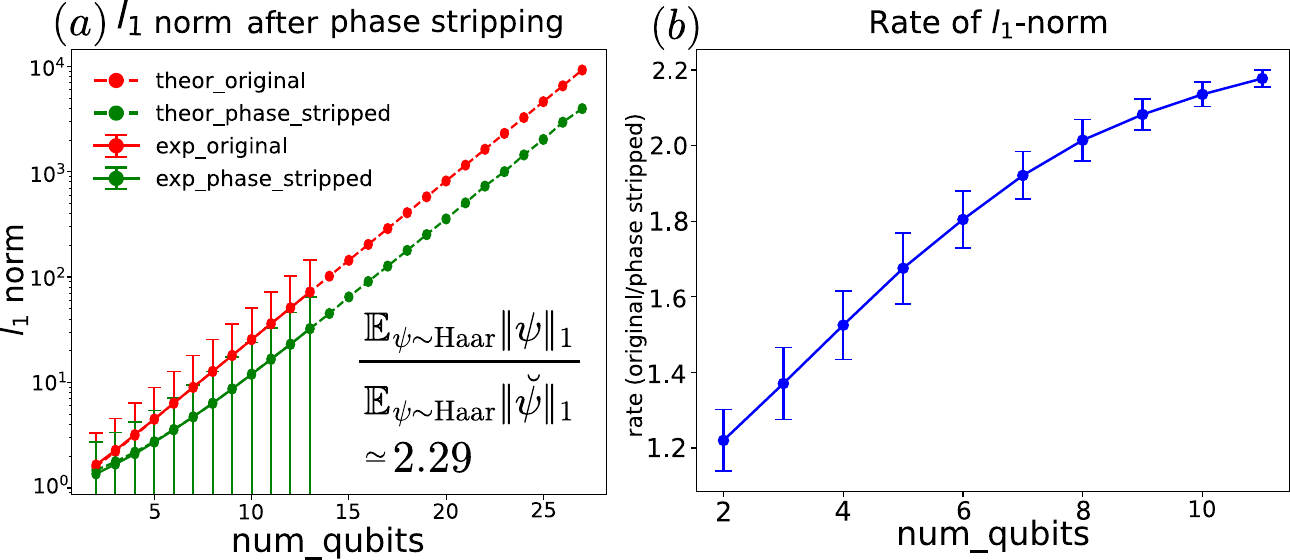}
	\caption{(a) The $l_1$-norm scaling of Haar-random pure states and their phase-stripped states. The experimental values were obtained by using $100$ random-state copies. We used Eq.~\eqref{eq:Haar_random_l1_norm_integral} to calculate the theoretical $\mathbb{E}_{\psi\sim {\rm Haar}}\|\psi\|_1$. $10000$ copies are used to estimate the theoretical $\hmoment \|\breve{\psi}\|_1$ (the right side of Eq.~\eqref{fig:random_state_l1_sup}) for each qubit size. (b) Plotting of average value of $\frac{\|\psi\|_1}{\|\breve{\psi}\|_1}$ using $500$ random-state copies.}\label{fig:random_state_l1_sup}
	
\end{figure}

In conclusion, 
\begin{align}
	\hmoment\|\breve{\psi}\|_1&\simeq \frac{1}{2^n}+\frac{2^n-1}{2^n}\sqrt{\frac{2}{\pi 2^n}}+\frac{\pi(2^n-1)}{2^{n+2}}\nonumber\\&+(4^n-2^{n+1}+1)\int_{\triangle_{2^n}} d^{(1,1,\ldots,1)}p\left|\sum_{\bfx'\in \bfF_2^{n-2}}\left(p^{\frac{1}{2}}_{(0,0,\bfx')}p^{\frac{1}{2}}_{(0,1,\bfx')}-p^{\frac{1}{2}}_{(1,0,\bfx')}p^{\frac{1}{2}}_{(1,1,\bfx')}\right)\right|\nonumber\\&\simeq (4^n-2^{n+1}+1)\int_{\triangle_{2^n}} d^{(1,1,\ldots,1)}p\left|\sum_{\bfx'\in \bfF_2^{n-2}}\left(p^{\frac{1}{2}}_{(0,0,\bfx')}p^{\frac{1}{2}}_{(0,1,\bfx')}-p^{\frac{1}{2}}_{(1,0,\bfx')}p^{\frac{1}{2}}_{(1,1,\bfx')}\right)\right|,
\end{align}
leaving dominant terms only. We can estimate the above result by sampling $p$ by Dirichlet distribution, which takes $\mathcal{O}(2^n)$-time, and take the estimator as $(4^n-2^{n+1}+1)\left|\sum_{\bfx'\in \bfF_2^{n-2}}\left(p^{\frac{1}{2}}_{(0,0,\bfx')}p^{\frac{1}{2}}_{(0,1,\bfx')}-p^{\frac{1}{2}}_{(1,0,\bfx')}p^{\frac{1}{2}}_{(1,1,\bfx')}\right)\right|$, which takes another $\mathcal{O}(2^{n-2})$ time. This offers much faster way to estimate the scale factor   $2^{-0.5n}\hmoment\|\breve{\psi}\|_1$, compared to random calculation of $\|\breve{\psi}\|_1$ where it takes $\mathcal{O}(4^n)$ for \emph{each sampled vector} $\ket{\psi}$. Fig.~\ref{fig:random_state_l1_sup} (a) shows that even when the target state is drawn Haar-randomly, $\frac{1}{2}$-FOFE still provides a constant-factor improvement. More explicitly, the $l_1$-norm of the phase-stripped version of the Haar-random state is lower than half of the original one. A caveat is that since two independent copies are needed to single-measure for real and imaginary parts, the variance bound of $\frac{1}{2}$-FOFE is reduced by a half, not a quarter. The rate of Fig.~\ref{fig:random_state_l1_sup} (a) may not directly indicate the rate between the $l_1$-norm of a \emph{fixed} state and its phase-stripped counterpart. Nevertheless, we can see from Fig.~\ref{fig:random_state_l1_sup} (b) that such a rate of $l_1$-norms for a fixed quantum state gets close to the same rate of Fig.~\ref{fig:random_state_l1_sup} (a), where the variance also decreases by increasing the number of qubits.

\section{Nonlinear DFE I: Framework}

The next two sections show the complete and algorithmic procedure for the nonlinear DFE (NLDFE) scheme, which includes a DNC-based sub-algorithm. We learned about the scheme and sampling complexity of the conventional DFE, also referred to as linear-DFE. In that scheme, the target pure state was decomposed by Pauli operators. Now, we would like to extend the spanning set to become over-complete and properly include the Pauli group, while preserving the DFE scheme to be simulated with only single-depth Pauli measurements. To do so, we recall the definition of locally scrambled diagonal of the main text. 
\begin{definition}
	The set of $n$-qubit locally-conjugated diagonal (LCD) operators is expressed as
	\begin{align}
		\mathcal{L}c_n\equiv \left\{VD(\phi)V^{\dag}\big|V\in \left\{I,H,HS\right\}^{\otimes n}, \phi:\bfF_2^n\rightarrow [0,2\pi] \right\}.
	\end{align}
\end{definition}

We use another equivalent notation $\phi=(\phi_1,\phi_2,\ldots,\phi_{2^n})\;(\forall \phi_{i(\in \bfF_2^n)}\in [0,2\pi])$ interchangeably, and denote the element as a tuple $(V,\phi)$. We note that the LCD-set properly includes the Pauli group, and hence is over-complete. It means that given a target state $\ket{\psi}$, there exists a function $f:[0,2\pi]^{2^n}\times \left\{I,H,HS\right\}^{\otimes n}\rightarrow \mathbb{C}$ such that 
\begin{align}
	\ket{\psi}\bra{\psi}=\frac{1}{2^n}\sum_{V\in \left\{I, H, HS\right\}^{\otimes n}}\int_{0}^{2\pi}d\phi_1,\ldots d\phi_{2^n} f(\phi_1,\phi_2,\ldots,\phi_{2^n};V)VD(\phi)V^{\dag}.
\end{align}
In this case, we denote $f\mapsto \ket{\psi}\bra{\psi}$.

Next, we give the generalized notion of SRE~\cite{leone2022}.
\begin{definition}
	Locally-conjugated diagonal $\alpha$-Renyi entropy ($\alpha$-LCDRE) of the pure state $\ket{\psi}$ is defined as, 
	\begin{align}
		\mathcal{L}cM_{\alpha}(\psi)\equiv \min_{f\mapsto \ket{\psi}\bra{\psi}}\mathcal{L}cM_{\alpha}(\psi,f),
	\end{align}
	where $\|f\|_{2\alpha}\equiv \frac{1}{2^n}\left(\sum_{V\in \left\{I, H, HS\right\}^{\otimes n}}\int_{0}^{2\pi}|f(\phi;V)|^{2\alpha}d\phi\right)^{\frac{1}{2\alpha}}$, and 
	\begin{align}\label{eq:ltm_renyi}
		\mathcal{L}cM_{\alpha}(\psi,f)&\equiv\frac{1}{1-\alpha}\log_2\left(\frac{1}{2^{\alpha n}}\sum_{V\in \left\{I, H, HS\right\}^{\otimes n}}\int_{0}^{2\pi}d\phi_1,\ldots d\phi_{2^n}|f|^{2\alpha}\right)-n\nonumber\\&=H_\alpha(\widetilde{f})+\frac{2\alpha}{1-\alpha}\log_2\|f\|_2+\frac{2\alpha-1}{1-\alpha}n,
	\end{align}
	where $H_{\alpha}$ is Renyi entropy, $H_{\alpha}(p)\equiv\frac{1}{1-\alpha}\log_2\left(\sum_{\bfy}\int d\bfx \;p^{\alpha}(\bfx,\bfy)\right)$, and $\widetilde{f}\equiv \frac{|f|^2}{4^n\|f\|_2^2}$ is a normalized distribution.
\end{definition}

We describe how to generalize the DFE scheme with this LCD-set. The point is that we can still estimate the expectation value $\braket{(V,\phi)\in \mathcal{L}c_n}$ via only a single-depth Pauli measurement. Indeed, 
\begin{align}\label{eq:non_linear_estimator}
	\braket{(V,\phi)}=\tr{\rho (V,\phi)}= \tr{V^{\dag}\rho V D(\phi)}= \sum_{\bfx}\tr{V^{\dag}\rho Ve^{i\phi (\bfx)}\ket{\bfx}\bra{\bfx}}=\sum_{\bfx}\braket{\bfx|V^{\dag}\rho V|\bfx}e^{i\phi(\bfx)}. 
\end{align}
Therefore, we first twirl $\rho$ by the single qubit Clifford operations $V^{\dag}$ and then we measure in the computational basis to get the outcome $\bfx\in \bfF_2^n$ followed by the estimator $e^{i\phi(\bfx)}$. Compared to the linear DFE, this estimator is not linear since $\phi$ is an arbitrary mapping. 
Therefore, given that $f\mapsto \ket{\psi}\bra{\psi}$,
we can make the generalized fidelity estimation, a \emph{nonlinear} $\alpha$-DFE ($\alpha$-NLDFE) scheme by sampling $(\phi,V)$ from the probability distribution $
\left\{\frac{|f(\phi_1,\;\ldots,\phi_{2^n},V)|^{2\alpha}d\phi_1d\phi_2\ldots d\phi_{2^n}}{2^{2\alpha n}\|f\|_{2\alpha}^{2\alpha}}\right\}$, twirl $\rho$ by $V^{\dag}$ and measure to obtain $\bfx$, and then we take the estimator 

\begin{align}\label{eq:exact_est_non_linear_dfe}
	\widehat{\braket{\psi|\rho|\psi}}=2^{(2\alpha-1)n}\|f\|_{2\alpha}^{2\alpha}|f(\phi_1,\;\ldots,\phi_{2^n},V)|^{-2\alpha+1}\cos(\phi(\bfx)+\arg(f(\phi_1,\;\ldots,\phi_{2^n},V))),
\end{align} 
such that $f=|f|e^{i\arg(f)}$, following Eq.~\eqref{eq:non_linear_estimator}. The cosine term is obtained because we only need to consider the real part of the estimator. Using $\cos^2\le 1$, we can find the upper bound of the estimation variance whose derivation is similar to that of Eq.~\eqref{eq:MSE_upper}, 
\begin{align}
	{\rm Var}(\rho,\psi,\alpha)+\braket{\psi|\rho|\psi}^2&= \mathbb{E}\left(\widehat{ \braket{\psi|\rho|\psi}}^2\right)\nonumber\\&\le2^{(2\alpha-2)n}\|f\|_{2\alpha}^{2\alpha}\sum_{V\in \left\{I, H, HS\right\}^{\otimes n}}\int_{0}^{2\pi}d\phi_1,\ldots d\phi_{2^n} \left|f(\phi_1,\phi_2,\ldots,\phi_{2^n};V)\right|^{-2\alpha+2}\sum_{\bfx\in \bfF_2^n}\braket{\bfx|V^{\dag}\rho V|\bfx}\nonumber\\&\le 2^{(1-\alpha)\mathcal{L}cM_\alpha(\psi,f)+\alpha\mathcal{L}cM_{1-\alpha}(\psi,f)}.
\end{align}
We again ignore the term $\braket{\psi|\rho|\psi}^2\le 1$. To find the optimal $\alpha$ that gives a minimum of the right side value, we again use the monotonicity of $\alpha$-Renyi entropy. That is for a fixed $f\mapsto\ket{\psi}\bra{\psi}$, 
\begin{align}
	(1-\alpha)\mathcal{L}cM_\alpha(\psi)+\alpha\mathcal{L}cM_{1-\alpha}(\psi)&=(1-\alpha)H_{\alpha}(\widetilde{f})+\alpha H_{1-\alpha}(\widetilde{f})+2\log_2\|f\|_2+n,
\end{align}
which heats the minimum at $\alpha=\frac{1}{2}$ by Cor.~\ref{Cor:renyi_convex}.
Conclusively, 
\begin{align}
	{\rm Var}(\rho,\psi,\alpha)\le \min_{f\mapsto \ket{\psi}\bra{\psi}}\left\{2^{\mathcal{L}cM_{\frac{1}{2}}(\psi,f)}\right\}=2^{\mathcal{L}cM_{\frac{1}{2}}(\psi,f^{\ast})}=\left(\frac{1}{2^n}\sum_{V\in \left\{I, H, HS\right\}^{\otimes n}}\int_{0}^{2\pi}d\phi_1,\ldots d\phi_{2^n}|f^{\ast}_{\frac{1}{2}}|\right)^2\nonumber&\equiv\left\|f^{\ast}_{\frac{1}{2}}\right\|_1^2,
\end{align}
where $f^{\ast}_\frac{1}{2}\equiv {\rm argmin}_{f\mapsto\ket{\psi}\bra{\psi}}\left\{\mathcal{L}cM_{\frac{1}{2}}(\psi,f)\right\}$.

We end this section with two important remarks. First, 
from Eq.~\eqref{eq:exact_est_non_linear_dfe}, we should note that exact variance term includes $\cos^2$-term, which we ignored as bounding with unity. It means that if we make $f\mapsto \ket{\psi}\bra{\psi}$ with the complex-valued function $f$, we could gain additional scale-factor improvement of the estimation variance. However, keeping the cosine term would make the computation of variance much more complex. Therefore, we leave the exact calculation of the variance as future work. Second, finding $f^{\ast}_{\frac{1}{2^n}}$ is nearly impossible because the LCD-set is infinitely large and over-complete. At least, we could find expansion showing better sampling complexity compared to the normal Pauli expansion, so in the following section, we objects to make the sub-optimal algorithm which find $f$ such that $2^{\mathcal{L}cM_{\frac{1}{2}}(\psi,f)}\le2^{M_\frac{1}{2}(\psi)}=\|\psi\|_1^2$. 

\section{Nonlinear DFE II: Divide-and-conquer (DNC)-based algorithm}

In the previous section, we introduced the general formalism of nonlinear DFE. It is challenging to design an algorithm that achieves the optimal sampling complexity for nonlinear DFE. Instead, let us develop a sub-optimal method that still presents much better sampling overhead compared to the original linear DFE. 

To do so, we first define qubit-wise commuting (QWC)~\cite{julia2025} Pauli subgroup that is an Abelian subgroup equivalent to Pauli $Z$-subgroup under local Clifford operations. Let us say $S$ is an $n$-qubit QWC-subgroup corresponding to the local Clifford operation $V$. We also denote $\qwcgroup$ as the set of possible $n$-qubit QWC-subgroups. Then we can estimate the expectation value of the linear combination of elements that is $\braket{\sum_{P\in S}c_P P}$, following the identity below, 
\begin{align}
	\braket{\sum_{P\in S}c_P P}=\sum_{P\in S} c_P \tr{\rho P}=\sum_{\bfa\in \bfF_2^n}c_{VZ^{\bfa}V^{\dag}}\tr{V\rho V^{\dag}Z^{\bfa}}&=\sum_{\bfa\in \bfF_2^n}c_{VZ^{\bfa}V^{\dag}}\sum_{\bfb\in \bfF_2^n}(-1)^{\bfa\cdot \bfb}\braket{\bfb|V\rho V^{\dag}|\bfb}\nonumber\\&=\sum_{\bfb\in \bfF_2^n}\left(\sum_{\bfa\in \bfF_2^n}c_{VZ^{\bfa}V^{\dag}}(-1)^{\bfa\cdot \bfb}\right)\braket{\bfb|V\rho V^{\dag}|\bfb}.
\end{align}
Therefore, estimation scheme measures $\rho$ with the computational basis to obtain the outcome $\bfb$ then takes the estimator as $\sum_{\bfa\in \bfF_2^n}c_{VZ^{\bfa}V^{\dag}}(-1)^{\bfa\cdot \bfb}$. 

More importantly, we note that the Pauli group can be the union of $\mathcal{O}(3^n)$-number of QWC groups. One remark is that if we relax the condition of a subgroup by allowing an entangled Clifford operation $V$, then the Pauli group can be partitioned into $2^n$-number of subgroups. Let us consider QWC-restricted cases only.  

Following the above argument, we further improve the required sampling copies (estimation variance) by the following routines which are $l_1$-sampling version of Pauli grouping~\cite{julia2025}: We first divide the elements of the Pauli group into several QWC subgroups, say some elements into one QWC subgroup $S$ form the partial linear combination of the target state $\ket{\psi}\bra{\psi}$ as $\sum_{\bfa\in \bfF_2^n}c_{V^{(S)}Z^{\bfa}V^{(S)\dag}} V^{(S)}Z^{\bfa}V^{(S)\dag} (\ne \ket{\psi}\bra{\psi})$. Then we estimate each $\braket{\sum_{P\in S}c_P P (\ne \ket{\psi}\bra{\psi})}$ of input state $\rho$ for only QWC subgroups which contain non-trivial Pauli supports of $\ket{\psi}\bra{\psi}$, taking the sum at the last. By doing so, we attain further improvement of sampling copies compared to when we just take the estimator $\|\psi\|_1$ or $-\|\psi\|_1$.  

We note that such a QWC-based method keeps the linear post-processing ($\bfb\rightarrow \bfa\cdot \bfb$ for assigned $\bfa$'s). Now, we further generalize and improve such a scheme by using non-linear post-processing of the obtained outcome. We start with one QWC group $S$ containing a non-trivial Pauli support $\sum_{\bfa\in \bfF_2^n}c_{V^{(S)}Z^{\bfa}V^{(S)\dag}} V^{(S)}Z^{\bfa}V^{(S)\dag}$. This is already decomposed with the Pauli operators. Then how about decomposing with the LCD-set components of a given twirling $V^{(S)}$? That is, we would like to do that
\begin{align}
	&\sum_{\bfa\in \bfF_2^n}c_{V^{(S)}Z^{\bfa}V^{(S)\dag}} V^{(S)}T_{\bfa}V^{(S)\dag}=\frac{1}{2^n}\int_{0}^{2\pi}d\phi_1,\ldots d\phi_{2^n} f_S(\phi_1,\phi_2,\ldots,\phi_{2^n};V^{(S)})V^{(S)}D(\phi)V^{(S)\dag}\nonumber\\&\Rightarrow \sum_{\bfa\in \bfF_2^n}c_{V^{(S)}Z^{\bfa}V^{(S)\dag}} T_{\bfa}=\frac{1}{2^n}\int_{0}^{2\pi}d\phi_1,\ldots d\phi_{2^n} f_S(\phi_1,\phi_2,\ldots,\phi_{2^n};V^{(S)})D(\phi)\nonumber\\&\Rightarrow\;\forall \bfb\in \bfF_2^n,\; 2^n c_{V^{(S)}Z^{\bfb}V^{(S)\dag}}=\frac{1}{2^n}\int_{0}^{2\pi}d\phi_1,\ldots d\phi_{2^n} f_S(\phi_1,\phi_2,\ldots,\phi_{2^n};V^{(S)}\sum_{\bfx\in \bfF_2^n}(-1)^{\bfb\cdot \bfx}e^{i\phi(\bfx)}\nonumber\\&=\frac{1}{2^n}\int_{0}^{2\pi}d\phi_1,\ldots d\phi_{2^n} f_S(\phi_1,\phi_2,\ldots,\phi_{2^n};V^{(S)})\widehat{e^{i\phi}}_{\bfb}
\end{align}
We mark the subscript $S$ of $f$ to indicate that $f_S$ is not the whole coefficients for $\ket{\psi}\bra{\psi}$. The third equality is derived by taking the product $\bfF^{\bfb}$ and trace. We also denoted the Walsh-Hadamard transform (WHT) as $\widehat{f}_\bfb\equiv \sum_{\bfx\in \bfF_2^n}f(\bfx)(-1)^{\bfb\cdot \bfx}$ whose inverse transform is $\widehat{f}^{-1}_\bfb\equiv \frac{1}{2^n}\sum_{\bfx\in \bfF_2^n}f(\bfx)(-1)^{\bfb\cdot \bfx}$. Next, we take the inverse transform on both sides. Then we finally get that 
\begin{align}
	2^n\widehat{c^{(S)}}^{-1}_\bfb=\widehat{c^{(S)}}_\bfb=\frac{1}{2^n}\int_{0}^{2\pi}d\phi_1,\ldots d\phi_{2^n} f_S(\phi_1,\phi_2,\ldots,\phi_{2^n};V^{(S)})e^{i\phi(\bfb)}, 
\end{align}
where we used the notation $c^{(S)}_{\bfa}=c_{V^{(S)}Z^{\bfa}V^{(S)\dag}}$
We know the left side because $c$ is already given by $\ket{\psi}\bra{\psi}$. The problem shrinks to a sub-optimal one where we minimize $\|f^{(S)}\|_1=\frac{1}{2^n}\int_{0}^{2\pi}d\phi_1,\ldots d\phi_{2^n} |f_S(\phi_1,\phi_2,\ldots,\phi_{2^n};V^{(S)})|$. We prove that this problem can be deterministically solved by using a well-known correspondence between the atomic norm ~\cite{yang2015} and the infinity norm over the hypercube. We give the main statement and proof for completeness. 
\begin{lemma}\label{lem:optimal_qwc_nonlinear}
	Suppose for a given $V\in {\rm Cl}_1^{\otimes n}$, then we obtain that given $\bfc\in \mathbb{R}^{2^n}\backslash\left\{\mathbf{0}\right\}$,
	\begin{align}
		\inf\left\{\|f\|_1\bigg|\forall \bfb\in \bfF_2^n,\;c_\bfb=\int_{0}^{2\pi}d\phi_1,\ldots d\phi_{2^n} f(\phi_1,\phi_2,\ldots,\phi_{2^n};V)e^{i\phi(\bfb)}\right\}=\|c\|_{\infty}\le \|c\|_1,
	\end{align}
	where the infinite norm (a.k.a atomic norm) is defined as $\|c\|_{\infty}\equiv \max_{\bfa\in\bfF_2^n}\left\{|c_{\bfa}|\right\}$. An optimal function $f^*$ of minimal $l_1$-norm exists so that all $e^{i\phi}$ becomes a real sign function.
\end{lemma}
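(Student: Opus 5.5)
The plan is to prove the equality by matching a lower bound with an explicit construction, and then to note that $\|c\|_\infty\le\|c\|_1$ holds trivially. Throughout I normalize $\|f\|_1$ consistently with the constraint, i.e.\ $\|f\|_1=\int d\phi_1\cdots d\phi_{2^n}\,|f(\phi;V)|$ (the prefactor $\frac{1}{2^n}$ is absorbed into $c$ as in the preceding derivation). The fixed $V$ plays no role beyond labelling, so I suppress it.

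\emph{Lower bound.} Take any feasible $f$ and fix $\bfb\in\bfF_2^n$. By the triangle inequality for integrals and $|e^{i\phi(\bfb)}|=1$,
\begin{align}
|c_\bfb|=\left|\int d\phi\, f(\phi)e^{i\phi(\bfb)}\right|\le\int d\phi\,|f(\phi)|=\|f\|_1 .
\end{align}
Maximizing over $\bfb$ gives $\|f\|_1\ge\|c\|_\infty$ for every feasible $f$. Hence the infimum is at least $\|c\|_\infty$.

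\emph{Upper bound by an explicit sign-valued construction.} Write $c=\|c\|_\infty\,u$ with $u\in[-1,1]^{2^n}$; this uses $c$ real and nonzero. Every point of the cube $[-1,1]^{2^n}$ is a convex combination of its vertices $\{\pm1\}^{2^n}$. A concrete choice is the product distribution on sign vectors $s$ with independent coordinates and $\Pr(s_\bfb=+1)=\frac{1+u_\bfb}{2}$, so that $\mathbb{E}[s_\bfb]=u_\bfb$.

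For each vertex $s$, define the phase function $\phi_s(\bfb)=0$ if $s_\bfb=+1$ and $\phi_s(\bfb)=\pi$ otherwise, so that $e^{i\phi_s(\bfb)}=s_\bfb$. Then set
\begin{align}
f^\ast=\|c\|_\infty\sum_{s\in\{\pm1\}^{2^n}}\lambda_s\,\delta_{\phi_s},
\end{align}
where $\lambda_s=\Pr(s)\ge0$. By construction $\int f^\ast e^{i\phi(\bfb)}=\|c\|_\infty\sum_s\lambda_s s_\bfb=c_\bfb$, and $\|f^\ast\|_1=\|c\|_\infty\sum_s\lambda_s=\|c\|_\infty$. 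So the infimum is attained, and it is attained by an $f^\ast$ supported only on phases in $\{0,\pi\}^{2^n}$. This is exactly the claim that the optimal $e^{i\phi}$ are real sign functions.

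\emph{Main obstacle.} The only delicate point is that $f^\ast$ is a finite sum of Dirac masses rather than an integrable function. I would handle this by interpreting $f\,d\phi$ as a complex Borel measure on the torus $[0,2\pi]^{2^n}$, which is the natural reading of the sampling scheme, since one samples $(\phi,V)$ from $|f|d\phi/\|f\|_1$. If genuine densities are required, I would replace each $\delta_{\phi_s}$ by a narrow bump of width $\varepsilon$ around $\phi_s$ and then rescale by the factor $\mathbb{E}_{\text{bump}}[e^{i\phi(\bfb)}]\to1$, correcting coordinatewise. This yields feasible densities with $\|f_\varepsilon\|_1\to\|c\|_\infty$, so the infimum equals $\|c\|_\infty$ while the minimum is attained in the measure sense.

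Finally, $\|c\|_\infty\le\|c\|_1$ is immediate.
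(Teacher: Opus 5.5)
Your proof is correct and follows essentially the same route as the paper. Your triangle-inequality lower bound is the paper's observation that no convex combination of unit-modulus vectors $e^{i\phi}$ can exceed modulus one in any coordinate, and your product distribution on sign vectors with $\Pr(s_{\bfb}=+1)=\frac{1+c_{\bfb}/\|c\|_\infty}{2}$ is exactly the paper's $f^\ast(B)=\|c\|_{\infty}\prod_{\bfa}\frac{1+(-1)^{B(\bfa)}c_{\bfa}/\|c\|_\infty}{2}$. You skip the paper's preliminary reduction to real $f$ and to extreme points of the hypercube, which is not needed, and you are more careful than the paper about the normalization of $\|f\|_1$ and about reading $f$ as a measure rather than a density.
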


\begin{proof}

	
	We first note that $\forall |e^{i\phi(\bfb)}|=1$. If $f(\phi_1,\ldots,\phi_{2^n},V)$ is a complex value, we can let the corresponding vector $e^{i\phi}$ absorb the phase of $f$, then sum (merge) the terms with the same phase function ($\because$  merging the coefficients with the same vector always gives a better or equal $l_1$ norm). Hence, we assume $f$ is real. Furthermore, let $A$ be the integral transform $A\bfy(\phi)\equiv \int_{[0,2\pi]^{2^n}}d\phi e^{i\phi(\bfb)}y(\phi)$. We know that from a general form of $l_1$-optimization, while assuming the feasible space reality, 
	\begin{align}
		\min_{Ay=c}\|y\|_1=\min_{\substack{Re(A)y=c,\\Im(A)y=\mathbf{0}}}\|y\|_1\ge \min_{Re(A)y=c}\|y\|_1,
	\end{align} 
	we also note that it suffices to find the optimal decomposing vector, which is real, given that the solution of the third super-case exists. Since $y$ is a real-valued function, we obtain that
	\begin{align}
		{\rm Re}(A)y-c=\sum_{\bfb\in \bfF_2^n}\bm{e}_\bfb\left(\int_{[0,2\pi]^{2^n}}d\phi{\rm Re(e^{i\phi (\bfb)})y(\phi)}-c_{\bfb}\right)=\sum_{\bfb\in \bfF_2^n}\bm{e}_\bfb\left(\int_{[0,2\pi]^{2^n}}d\phi\cos(\phi (\bfb))y(\phi)-c_{\bfb}\right).
	\end{align}
	Here, $\left\{\bm{e}_{\bfb}\right\}_{\bfb\in \bfF_2^n}$ is the computational bases. We note that $\left\{\sum_{\bfb}e_{\bfb}\cos(\phi(\bfb))\big|\phi\in [0,2\pi]^{^{2^n}}\right\}$ is exactly the hypercube $[-1,1]^{2^n}$. Since $l_1$-norm is a convex function, an optimal solution happens when we decompose with extreme points of the hypercube, which is $\left\{1,-1\right\}^{2^n}$.
	
	Therefore, the problem shrinks to finding $A$ when the hypercube $\square$ that is,
	\begin{align}
		\square\equiv A\times {\rm conv}\left\{\eta^{(B)}|\forall \bfx\in \bfF_2^n, \eta^{(B)}_\bfx=(-1)^{B(\bfx)},B\in \mathcal{B}_{2^n}\right\}
	\end{align}
	where $\mathcal{B}_{2^n}$ denotes a set of possible Boolean functions $B:\bfF_2^n\rightarrow \bfF_2$ touches vector $c$. We note that if we divide $\|c\|_{\infty}$ to $c$, all elements has the modulus lower or equal than $1$. More importantly, there exists at least one elements of unit modulus. 
	
	Next, we claim that there exists $f^{\ast}$ such that it satisfies the condition of decomposing $c$ and achieves $\|f\|_1=\|c\|_{\infty}$. If we prove that, we conclude that this is optimal. Because, if the optimal $A$ is lower than  $\|c\|_{\infty}$, $A\times {\rm conv}\left\{{\rm diag}(e^{i\phi(\bfx)})_{\bfx\in \bfF_2^n}|\phi\in [0,2\pi]^{n}\right\}$ cannot touch $c$ since no any convex combinations of vectors $e^{i\phi}$ reach the modulus $\|c\|_{\infty}$, which is a contradiction. Now, let us construct $f^{\ast}$. $\forall \bfb\in \bfF_2^n,$
	
	\begin{align}
		c_{\bfb}=\|c\|_{\infty}\frac{c_{\bfb}}{\|c\|_\infty}=\|c\|_{\infty}\left(\frac{1+\frac{c_\bfb}{\|c\|_{\infty}}}{2}+(-1)\frac{1-\frac{c_\bfb}{\|c\|_{\infty}}}{2}\right).
	\end{align}
	Therefore, we conclude that
	\begin{align}
		c&=\sum_{\bfb\in \bfF_2^n}\|c\|_{\infty}\left(\frac{1+\frac{c_\bfb}{\|c\|_{\infty}}}{2}+(-1)\frac{1-\frac{c_\bfb}{\|c\|_{\infty}}}{2}\right)=\sum_{\bfb\in \bfF_2^n}\|c\|_{\infty}\sum_{a\in \bfF_2}\frac{1+(-1)^{a}\frac{c_\bfb}{\|c\|_{\infty}}}{2}(-1)^a\nonumber\\&=\|c\|_{\infty}\sum_{B\in \mathcal{B}_{2^n}}\prod_{\bfa\in \bfF_2^n}\frac{1+(-1)^{B(\bfa)}\frac{c_\bfa}{\|c\|_{\infty}}}{2}\eta^{(B)}.
	\end{align}
	We note that $\eta^{(B)}$ corresponds to multiple controlled-$Z$ gate operations.
	
	Conclusively, we take the functional value $f^{\ast}$ as  $\|c\|_{\infty}\prod_{\bfa\in \bfF_2^n}\frac{1+(-1)^{B(\bfa)}\frac{c_\bfa}{\|c\|_{\infty}}}{2}$ for each Boolean function argument $B$, and zero for otherwise. The $f^{\ast}$ is non-negative, and sums to $\|c\|_{\infty}$. Indeed, 
	\begin{align}
		\sum_{B\in \mathcal{B}_{2^n}}\prod_{\bfa\in \bfF_2^n}\frac{1+(-1)^{B(\bfa)}\frac{c_\bfa}{\|c\|_{\infty}}}{2}=\prod_{\bfa\in \bfF_2^n}\left(\frac{1+\frac{c_\bfa}{\|c\|_{\infty}}}{2}+\frac{1-\frac{c_\bfa}{\|c\|_{\infty}}}{2}\right)=1.
	\end{align}
	The proof is completed.
\end{proof}

Let us further clarify the solution structure and a compressed NLDFE routine by the DNC algorithm. We refer the $l_1$-norm of $f^{(S)*}$ as 
\begin{align}
	\|f^{(S)}\|_1\equiv \frac{1}{2^n}\int_{[0,2\pi]^{2^n}}d\phi |f^{(S)*}(\phi,V^{(S)})|,
\end{align}
implying that $\|f\|_1=\sum_{V\in \left\{I,H,HS\right\}^{\otimes n}}\|f^{(S)*}\|_1$. 
From Lem.~\ref{lem:optimal_qwc_nonlinear}, we saw that non-trivial value of $f^{(S)*}$ happens only on the set of Boolean function $\mathcal{B}_n\simeq \bfF_2^{2^n}$, and the result was,
\begin{align}\label{eq:app_optimal_wht_solution}
	f^{(S)*}(B)=\prod_{\bfa\in \bfF_2^n}\left(\frac{1+(-1)^{B(\bfa)}\frac{\widehat{c^{(S)}}_{\bfa}}{\|\widehat{c^{(S)}}\|_{\infty}}}{2}\right).
\end{align}
Importantly, $\sum_{B\in \mathcal{B}_n}f^{(S)*}(B)=1$, and $\forall B\in \mathcal{B}_n,\;f^{(S)*}\ge 0$. That is, $f^{(S)*}$ forms an probability distribution over $\mathcal{B}_n$. Therefore, the Boolean function sampling routine would be as follows: we sample $S\in \qwcgroup$ by the distribution $\left\{\frac{\|\widehat{c^{(S)}}\|_{\infty}}{\sum_{S\in \qwcgroup}\|\widehat{c^{(S)}}\|_{\infty}}\right\}_{S\in \qwcgroup}$ after possible distribution of Pauli coefficients into the QWC-subgroups. Then we sample $B\in \mathcal{B}_n$ following the probability distribution $f^{(S)*}$ shown as Eq.~\eqref{eq:app_optimal_wht_solution}.

However, we can simplify the sampling routine by combining the Pauli grouping method~\cite{julia2025}. Within a single QWC-subgroup $S$, after sampling $B$, we remember that we measured $V^{\dag}\rho V$ in the computational bases to obtain the outcome $\bfb\in \bfF_2^n$ and finally the estinator $(-1)^{B(\bfb)}$. However, we note that all $V^{(S)}D(\phi)V^{(S)\dag}$'s commute each other within the same $S$. It implies that we can use the measurement outcome $\bfb$ to estimate the whole parts within $S$ by taking the estimator as $\sum_{B\in \mathcal{B}_n}f^{(S)*}(B)(-1)^{B(\bfb)}$. Furthermore, by using Eq.~\eqref{eq:app_optimal_wht_solution}, we get that 
\begin{align}
	\sum_{B\in \mathcal{B}_n}f^{(S)*}(B)(-1)^{B(\bfb)}&=	\sum_{B\in \mathcal{B}_n}\prod_{\bfa\in \bfF_2^n}\left(\frac{1+(-1)^{B(\bfa)}\frac{\widehat{c^{(S)}}_{\bfa}}{\|\widehat{c^{(S)}}\|_{\infty}}}{2}\right)(-1)^{B(\bfb)}\nonumber\\&=\sum_{c=0,1}\left(\frac{(-1)^c+\frac{\widehat{c^{(S)}}_{\bfb}}{\|\widehat{c^{(S)}}\|_{\infty}}}{2}\right)\times \underbrace{\sum_{B\in \mathcal{B},B(\bfb)=c}\prod_{\bfa\in \bfF_2^n,\bfa\ne \bfb}\left(\frac{1+(-1)^{B(\bfa)}\frac{\widehat{c^{(S)}}_{\bfa}}{\|\widehat{c^{(S)}}\|_{\infty}}}{2}\right)}_{=1}\nonumber\\&=\frac{\widehat{c^{(S)}}_{\bfb}}{\|\widehat{c}^{(S)}\|_{\infty}}.
\end{align}
In conclusion, sampling $B$ following $f^{(S)*}$ is unnecessary in this DNC-based algorithm. After we sample $S$, we take the estimator as $\sum_{S\in \qwcgroup}\|\widehat{c^{(S)}}\|_{\infty}\times \frac{\widehat{c^{(S)}}_\bfb}{\|\widehat{c^{(S)}}\|_{\infty}}$. Hence, by Hoeffding inequality, the required sampling complexity depends on the square of  $\sum_{S\in \qwcgroup}\|\widehat{c^{(S)}}\|_{\infty}$. Furthermore, the way of distributing Pauli coefficients over the QWC-groups is non-unique, so that we could take the minimum among the distributions. 
Finally, we conclude our DNC-NLDFE statement as a theorem:
\begin{theorem}\label{thm:NLDFE_sampling_complexity}
	Using Pauli measurements only, the required sampling copies $N$ for estimating the fidelity with the target state $\ket{\psi}$ within $\epsilon$ additive error and $\delta_f$ failure probability is described as $N=\mathcal{O}\left(\frac{\min\left\{\sum_{S\in \qwcgroup}\|\widehat{c^{(S)}}\|_{\infty}\right\}^2}{\epsilon^2}\log(\delta_f^{-1})\right)$. Here, the minimum is taken over all partitions of Pauli coefficients into the set of QWC-subgroups.
\end{theorem}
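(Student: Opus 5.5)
The plan is to assemble the DNC-NLDFE estimator from the preceding discussion into an explicit unbiased, bounded estimator, and then apply a concentration bound. Fix a partition of the nonzero Pauli coefficients of $\ket{\psi}\bra{\psi}$ into QWC-subgroups $S\in\qwcgroup$, each with its local Clifford $V^{(S)}$, so that $\ket{\psi}\bra{\psi}=\sum_{S}\sum_{\bfa\in\bfF_2^n}c^{(S)}_{\bfa}V^{(S)}Z^{\bfa}V^{(S)\dag}$. By the QWC identity derived above, for each $S$ we have
\begin{align}
\tr{\rho\sum_{\bfa}c^{(S)}_{\bfa}V^{(S)}Z^{\bfa}V^{(S)\dag}}=\sum_{\bfb\in\bfF_2^n}\widehat{c^{(S)}}_{\bfb}\braket{\bfb|V^{(S)\dag}\rho V^{(S)}|\bfb},
\end{align}
with the same conjugation convention as in Eq.~\eqref{eq:non_linear_estimator}. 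Summing over $S$ gives $\braket{\psi|\rho|\psi}$. Setting $\Lambda\equiv\sum_{S}\|\widehat{c^{(S)}}\|_{\infty}$, the single-shot estimator is defined as follows. Sample $S$ with probability $\|\widehat{c^{(S)}}\|_{\infty}/\Lambda$, twirl by $V^{(S)\dag}$, measure in the computational basis to obtain $\bfb$, and output $m\equiv\Lambda\,\widehat{c^{(S)}}_{\bfb}/\|\widehat{c^{(S)}}\|_{\infty}$. This is exactly the collapsed form of sampling a Boolean function $B$ from $f^{(S)*}$ in Eq.~\eqref{eq:app_optimal_wht_solution} and outputting $\Lambda(-1)^{B(\bfb)}$, as computed just before the theorem. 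Lemma~\ref{lem:optimal_qwc_nonlinear} guarantees that this is the optimal LCD decomposition inside each group.

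Next I verify unbiasedness. Compute $\mathbb{E}[m]=\sum_S\frac{\|\widehat{c^{(S)}}\|_\infty}{\Lambda}\sum_{\bfb}\braket{\bfb|V^{(S)\dag}\rho V^{(S)}|\bfb}\,\Lambda\frac{\widehat{c^{(S)}}_{\bfb}}{\|\widehat{c^{(S)}}\|_\infty}$. The normalizations cancel, and by the identity above this equals $\sum_S\tr{\rho\,\cdot}=\braket{\psi|\rho|\psi}$. Groups with $\widehat{c^{(S)}}=0$ are simply never sampled. The crucial boundedness step is that $|\widehat{c^{(S)}}_{\bfb}|\le\|\widehat{c^{(S)}}\|_\infty$ by definition of the infinity norm. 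Hence $m\in[-\Lambda,\Lambda]$ almost surely, and the estimator uses only single-depth Pauli (local-Clifford-rotated computational-basis) measurements.

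Finally I apply Hoeffding's inequality to the mean of $N$ i.i.d.\ copies of $m$. Since each copy has range $2\Lambda$, we get $\Pr[|\bar m-\braket{\psi|\rho|\psi}|\ge\epsilon]\le 2\exp(-N\epsilon^2/(2\Lambda^2))$. Choosing $N=\mathcal{O}(\Lambda^2\epsilon^{-2}\log\delta_f^{-1})$ therefore suffices; median-of-means could be used instead but is not needed. Because the partition is a free choice made classically before any measurement, we may pick one attaining (or approaching) the minimum of $\Lambda$, which yields the stated bound. If the minimum is only an infimum, any near-minimizer changes only constants.

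The main obstacle I anticipate is bookkeeping rather than depth. The WH transform normalization ($\widehat{c}$ versus $2^n\widehat{c}^{-1}$ in the text) and the $2^n$ factors in the Pauli expansion must be reconciled so that $\sum_{\bfb}\widehat{c^{(S)}}_{\bfb}\braket{\bfb|\cdot|\bfb}$ reproduces the group's contribution exactly. I would fix the convention $\ket{\psi}\bra{\psi}=\sum c_PP$ with $c_P=2^{-n}\braket{\psi|P|\psi}$ and check the identity directly. I would also remark that choosing every group to contain a single Pauli recovers $\Lambda=\|\psi\|_1$, so the bound is never worse than $\frac12$-DFE.
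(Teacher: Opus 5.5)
Your proposal is correct and is essentially the paper's argument. You use the same collapsed estimator $\Lambda\,\widehat{c^{(S)}}_{\bfb}/\|\widehat{c^{(S)}}\|_{\infty}$, unbiasedness from the QWC/Walsh--Hadamard identity, boundedness from $|\widehat{c^{(S)}}_{\bfb}|\le\|\widehat{c^{(S)}}\|_{\infty}$, and Hoeffding. The only difference is that you check unbiasedness directly, so you never need Lemma~\ref{lem:optimal_qwc_nonlinear} or the Boolean-function sampling for the upper bound.

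One small caveat on your closing aside. With the paper's $3^n$ groups labelled by $V\in\{I,H,HS\}^{\otimes n}$, you generally cannot put each of the (up to $4^n$) nonzero Paulis in its own group. The comparison with $\|\psi\|_1$ should instead come from $\|\widehat{c^{(S)}}\|_{\infty}\le\|c^{(S)}\|_1$, which holds for every partition.
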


\begin{figure}[t]
	\includegraphics[width=10cm]{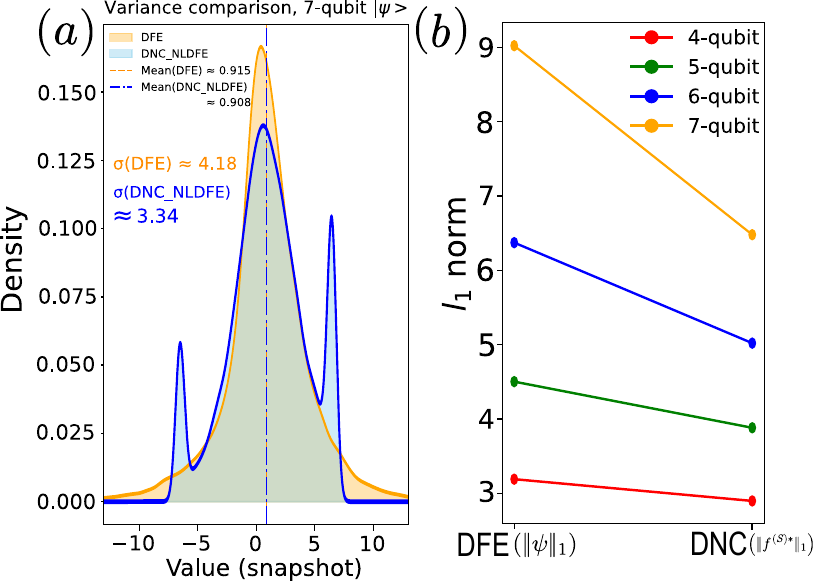}
	\caption{(a) Comparison of the estimation variance between DFE~\cite{julia2025} over QWC-groups and DNC-based NLDFE for a randomly chosen target pure state and input state. (b) Improvement in $l_1$-norm of $100$-copies of Haar random pure states \emph{via} the DNC-based algorithm.}\label{fig:DnC-based_nldfe_variance}
	
\end{figure}

Our variance bound is guaranteed to be improved over the previous one of linear-DFE by the following inequality,  
\begin{align}
	\|\widehat{c^{(S)}}\|_\infty=\max_{\bfb\in \bfF_2^n}\left\{\left|\sum_{\bfa\in \bfF_2^n,c^{(S)}_\bfa\ne 0}c^{(S)}_{\bfa}(-1)^{\bfa\cdot \bfb}\right|\right\}\le \sum_{\bfa\in \bfF_2^n,c^{(S)}_\bfa\ne 0}|c^{(S)}_{\bfa}|=\|c^{(S)}\|_1.
\end{align}

In summary, our scheme is organized as follows. We divide the Pauli coefficients of $\ket{\psi}\bra{\psi}$ into several QWC-subgroups. Then for each QWC-subgroup $S\in \qwcgroup$ containing the portion $c^{(S)}$ of Pauli coefficients. We then sample $S\in \qwcgroup$ from the distribution $\left\{\frac{\|\widehat{c^{(S)}}\|_{\infty}}{\sum_{S\in \qwcgroup}\|\widehat{c^{(S)}}\|_{\infty}}\right\}_{S\in \qwcgroup}$. Next, we measure $V^{(S)\dag}\rho V^{(S)}$ with the computational basis to obtain the outcome $\bfb$, which reads the estimator $\sum_{S\in \qwcgroup}\|\widehat{c^{(S)}}\|_{\infty}\times \frac{\widehat{c^{(S)}}_\bfb}{\|\widehat{c^{(S)}}\|_{\infty}}$. The whole process does not necessitate sampling the Boolean function from $f^{(S)\ast}$. There is an important caveat. Suppose we operate similar optimization scheme following $S_1\rightarrow S_2\rightarrow \ldots\rightarrow S_{3^n}\in \qwcgroup$. When we move to the next QWC group, say $S_{i(\in [3^n])}$ we need to set the coefficients of the duplicate Pauli support with the previously completed QWC-groups$[V^{(S_i)}Z^{\bfa}V^{(S_i)}=V^{(S_j)}Z^{\bfa}V^{(S_j)}\;(\forall j\in [i-1])]$ to be zero. Fig.~\ref{fig:DnC-based_nldfe_variance} demonstrates the reduction of estimation variance and $l_1$-norm of the coefficient $f=(f^{(S)\ast})_{S\in \qwcgroup}$ compared to that of the original DFE in the Haar-random case. The reduction amount becomes larger as we increase the system size.

Now, we have different norm scaling as the infinity norm of WH-spectra, which is properly lower than the $l_1$-norm in typical target states. For the stabilizer target state, our method gives the same value as the conventional bound. 

\begin{proposition}
	If the target pure state $\ket{\psi}$ is a stabilzer state, then $\|\widehat{c^{(S)}}\|_{\infty}=\|c^{(S)}\|_1$, hence $\min\left\{\sum_{S\in \qwcgroup}\|\widehat{c^{(S)}}\|_{\infty}\right\}$ is $1$. 
\end{proposition}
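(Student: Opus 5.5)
The plan is to treat the two claims separately. First I show $\|\widehat{c^{(S)}}\|_\infty=\|c^{(S)}\|_1$ whenever the portion $c^{(S)}$ is the full intersection of the stabilizer group with $S$. Then I get the value $1$ for the minimum by a sandwich argument, which avoids analysing particular partitions.

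\emph{Step 1 (structure of $c^{(S)}$).} For a stabilizer state $\ket{\psi}$ with stabilizer group $G$, the Pauli coefficients satisfy $c_\psi(\bfa)=\pm 2^{-n}$ on the $2^n$ elements of $G$ (up to phase) and vanish elsewhere. Fix a QWC group $S$ with local Clifford $V^{(S)}$. Conjugating back, $V^{(S)\dag}(G\cap S)V^{(S)}$ is a signed subgroup $\{s(\bfa)Z^{\bfa}:\bfa\in W\}$ for some linear subspace $W\subseteq\bfF_2^n$. Because $Z^{\bfa}Z^{\bfb}=Z^{\bfa+\bfb}$ exactly and $G$ is closed under products, the sign map $s:W\to\{\pm1\}$ is a group homomorphism. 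Hence $s(\bfa)=(-1)^{\bfu\cdot\bfa}$ for some $\bfu\in\bfF_2^n$.

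\emph{Step 2 (WH spectrum).} With these signs,
\begin{align}
\widehat{c^{(S)}}_{\bfb}=2^{-n}\sum_{\bfa\in W}(-1)^{\bfa\cdot(\bfu+\bfb)},
\end{align}
which equals $|W|/2^n$ when $\bfb+\bfu\in W^{\perp}$ and $0$ otherwise. Therefore $\|\widehat{c^{(S)}}\|_\infty=|W|/2^n=\|c^{(S)}\|_1$, which gives the first claim.

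\emph{Step 3 (the minimum equals $1$).} I will not try to control arbitrary partitions directly, since a greedy partition may leave non-subgroup supports. Instead I bound the sum from both sides, for every partition.
\begin{itemize}
\item \emph{Upper bound.} The inequality $\|\widehat{c^{(S)}}\|_\infty\le\|c^{(S)}\|_1$ proved just before the proposition gives $\sum_S\|\widehat{c^{(S)}}\|_\infty\le\sum_S\|c^{(S)}\|_1=\|\psi\|_1=1$. This uses that each Pauli support is counted in exactly one group, and that $\|\chi\|_1=1$ for stabilizer states.
\item \emph{Lower bound.} The DNC estimator is unbiased for any input. Take $\rho=\ket{\psi}\bra{\psi}$ and write $p_S(\bfb)=\braket{\bfb|V^{(S)\dag}\rho V^{(S)}|\bfb}$. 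Then
\begin{align}
1=\braket{\psi|\rho|\psi}=\sum_S\sum_{\bfb}\widehat{c^{(S)}}_{\bfb}\,p_S(\bfb)\le\sum_S\|\widehat{c^{(S)}}\|_\infty ,
\end{align}
since $\sum_{\bfb}p_S(\bfb)=1$.
\end{itemize}
The two bounds coincide, so every partition, and hence the minimum, gives exactly $1$.

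The main obstacle is Step 1: checking carefully that the signs on $W$ form a character. Phases from $Y$'s and from $V^{(S)}$ must cancel consistently. I would argue this via $V^{(S)\dag}gV^{(S)}$ being Hermitian and commuting, so it must be $\pm Z^{\bfa}$, and then use the multiplicativity of the group.
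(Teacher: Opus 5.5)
Your proof is correct. Steps 1--2 follow the paper's argument for the first claim. The paper likewise places the support of $c^{(S)}$ inside a subspace of $\bfF_2^n$ using closure of the stabilizer group, writes the signs as a character $(-1)^{\bfu\cdot\bfa}$, and evaluates the Walsh--Hadamard transform at $\bfb=\bfu$.

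Your homomorphism-plus-extension justification of the character is the more robust version. The paper builds $\bfu$ from a basis of $L^{(S)}$ that is orthonormal under the binary inner product, and such a basis need not exist: for the Bell state with $V^{(S)}=I$, $L^{(S)}=\{\mathbf{0},(1,1)\}$.

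The genuine difference is in the second claim. The paper simply sums the termwise equalities to get $\sum_S\|c^{(S)}\|_1=\|\psi\|_1=1$. You instead sandwich $\sum_S\|\widehat{c^{(S)}}\|_\infty$ between two bounds: the general inequality $\|\widehat{c^{(S)}}\|_\infty\le\|c^{(S)}\|_1$ from above, and the unbiasedness of the DNC estimator at $\rho=\ket{\psi}\bra{\psi}$ from below. This route uses nothing about stabilizer structure except $\|\psi\|_1=1$. It shows that every partition, not only the optimal one, gives exactly $1$. Its lower half, $\sum_S\|\widehat{c^{(S)}}\|_\infty\ge 1$, holds for any pure target.

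Your sandwich also makes the restriction to full intersections in Step 1 unnecessary. Equality of the two sums, together with the termwise inequality, forces $\|\widehat{c^{(S)}}\|_\infty=\|c^{(S)}\|_1$ for every group of every partition, including greedy ones. Alternatively, Step 2 extends directly to non-subgroup supports. On any $A\subseteq W$ the signs are restrictions of the same character, so $\widehat{c^{(S)}}_{\bfu}=\|c^{(S)}\|_1$. In exchange, the paper's route identifies the maximizing outcome $\bfb=\bfu$ explicitly.
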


\begin{proof}
	First, we should note that non-trivial support of $c^{(S)}$ always lies on some subspace of $\bfF_2^n$. This is because if both $V^{(S)}Z^{\bfa_1}V^{(S)\dag}$ and $V^{(S)}Z^{\bfa_1}V^{(S)\dag}$ belong to the stabilzer group for $\psi$, so does $V^{(S)}Z^{\bfa_1+\bfa_2}V^{(S)\dag}$. We denote such a subspace as $L^{(S)}$ whose orthonormal (with respect to binary inner product) basis is $\mathcal{B}_S\equiv \left\{\bfv_1,\bfv_2,\ldots,\bfv_{\dim (L^{(V)})}\right\}$.
	
	Now, suppose that the stabilizer state $\ket{\psi}$ has the coefficient $\frac{(-1)^{p_{\bfv}}}{2^n}$ for $V^{S}Z^{\bfv}V^{(S)\dag}\; (\bfv\in \mathcal{B}_S)$. Then $\bfc^{(S)}_{\bfa}=\frac{1}{2^n}(-1)^{\sum_{\bfv\in \mathcal{B}_S}p_\bfv(\bfv\cdot \bfa)}$. Conclusively, 
	\begin{align}
		\|\widehat{c^{(S)}}\|_\infty=\max_{\bfb\in \bfF_2^n}\left\{\left|\sum_{\bfa\in \bfF_2^n,c^{(S)}_\bfa\ne 0}\frac{1}{2^n}(-1)^{\sum_{\bfv\in \mathcal{B}_S}p_\bfv(\bfv\cdot \bfa)}(-1)^{\bfa\cdot \bfb}\right|\right\}&=\max_{\bfb\in \bfF_2^n}\left\{\left|\sum_{\bfa\in \bfF_2^n,c^{(S)}_\bfa\ne 0}\frac{1}{2^n}(-1)^{\bfa\cdot \left(\bfb+\sum_{\bfv\in \mathcal{B}_S}p_\bfv \bfv\right)}\right|\right\}\nonumber\\&=\sum_{\bfa\in \bfF_2^n,c^{(S)}_\bfa\ne 0}|c^{(S)}_{\bfa}|\nonumber\\&=\|c^{(S)}\|_1,
	\end{align}
	where the optimum holds if $\bfb=\sum_{\bfv\in \mathcal{B}_V}p_\bfv \bfv$. The remaining proof is deducted straightforwardly. 
\end{proof}

Before ending this section, we analyse the time complexity of the DNC algorithm. Computing Pauli coefficients takes $\mathcal{O}(4^n)$-time, which is also the time for the original DFE. We make another $\mathcal{O}(4^n)$-sized memory, say the \emph{reference memory}, to record whether a given Pauli support has non-zero coefficients (e.g., $0$ or $1$). For each QWC-subgroup, we need the inverse Walsh-Hadamard transform of the coefficients, which takes $\mathcal{O}(n2^n)$-time~\cite{scheibler2015}. Before the WH-transform, whether we adopt each coefficient as itself or zero is determined by consulting the reference memory, which takes $\mathcal{O}(\log(4^n))=\mathcal{O}(n)$-time. When we use the coefficient as itself, we flip the corresponding element in the reference memory to zero. Since WH-transform is processed over $\mathcal{O}(3^n)$ number of QWC-groups, the total time complexity is $\mathcal{O}(n6^n)$. We see that the worst-case time complexity to run our scheme is exponentially large in the number of qubits. Nevertheless, it provides a constructive method for simulating nonlinear DFE and significantly improves sampling complexity compared to linear DFE, which also takes exponentially many time by $n$.

\section{Appendix A: Properties of incomplete beta functions}
Here, we prove Eq.~\eqref{eq:properties_incomplete_beta}.
We first recall the definition of incomplete beta function,
\begin{align}
	B\left(x;a,b\right)
	\equiv \int_0^{x}t^{a-1}(1-t)^{b-1}dt,\; B(1;a,b)\equiv B(a,b)=\frac{\Gamma(a)\Gamma(b)}{\Gamma(a+b)}
\end{align}

Then 
\begin{align}\label{eq:beta1}
	B(a+1,a)=\frac{\Gamma(a+1)\Gamma(a)}{\Gamma(2a+1)}=\frac{\Gamma(a)^2a}{\Gamma(2a)2a}=\frac{1}{2}B(a,a),
\end{align}
where we used the property $\Gamma(a+1)=a\Gamma(a)$. Since, the beta function is symmetric, we also note $B(a,a+1)=\frac{1}{2}B(a,a)$. Next, using variable change $z=1-t$,
\begin{align}\label{eq:beta2}
	B\left(\frac{1}{2};a,b\right)=\int_0^{\frac{1}{2}}t^{a-1}(1-t)^{b-1}dt=\frac{1}{2}\left(\int_0^{\frac{1}{2}}t^{a-1}(1-t)^{b-1}dt-\int_1^{\frac{1}{2}}(1-z)^{a-1}z^{b-1}dz\right)&=\frac{1}{2}\int_0^1t^{a-1}(1-t)^{b-1}dt\nonumber\\&=\frac{1}{2}B(a,b).
\end{align}
Lastly, using Eqs.~\eqref{eq:beta1} and~\eqref{eq:beta2},
\begin{align}
	B\left(\frac{1}{2};a+1,a\right)=\int_0^{\frac{1}{2}}t^{a}(1-t)^{a-1}dt&=\left[-\frac{t^a}{a}(1-t)^{a}\right]^{\frac{1}{2}}_0+\int_0^{\frac{1}{2}}t^{a-1}(1-t)^{a}dt=-\frac{1}{a2^{2a}}+ B\left(\frac{1}{2};a,a+1\right)\nonumber\\&=-\frac{1}{a2^{2a}}+\frac{1}{2}B(a,a+1)=\frac{1}{4}B(a,a)-\frac{1}{a2^{2a}}.
\end{align}

\bibliographystyle{apsrev4-1}
\bibliography{apssamp}

\end{document}